\renewcommand{\thetable}{\arabic{table}}
\definecolor{LinkColor}{rgb}{0,0,.5}
\newtheorem{theorem}{Theorem}
\newtheorem{lemma}{Lemma}
\newcommand{\ket}[1]{\left\vert{#1}\right\rangle}
\newcommand{\bra}[1]{\left\langle{#1}\right\vert}
\newcommand{\Gate}[1]{\textsc{#1}}
\newcommand{\zgate}{\Gate{z}}
\newcommand{\xgate}{\Gate{x}}
\newcommand{\czgate}{\Gate{cz}}
\newcommand{\idgate}{\Gate{i}}
\newcommand{\cnotgate}{\Gate{cnot}}
\newcommand{\cxgate}{\Gate{cnot}}
\newcommand*{\Prob}{\mathsf{Pr}}
\newcommand{\popt}{p^{\text{opt}}}
\newcommand{\QAOATTS}{1.46}
\newcommand{\QAOAAATTS}{1.21}
\newcommand{\rev}[1]{#1}
\renewcommand\NAT@biblabelnum[1]{(#1)}
\renewcommand\NAT@citenum[3]{\ifNAT@swa
(#1)%
   \if\relax#3\relax\else\ (#3)\fi\else (#1)\fi\endgroup}
\begin{document}

\title{Evidence of Scaling Advantage for the Quantum Approximate \\ Optimization Algorithm on a Classically Intractable Problem}
\author{Ruslan Shaydulin}
\thanks{Corresponding author. Email: \mbox{ruslan.shaydulin@jpmchase.com}}
\affiliation{Global Technology Applied Research, JPMorgan Chase, New York, NY 10017, USA}
\author{Changhao Li} %
\affiliation{Global Technology Applied Research, JPMorgan Chase, New York, NY 10017, USA}
\author{Shouvanik Chakrabarti}
\affiliation{Global Technology Applied Research, JPMorgan Chase, New York, NY 10017, USA}
\author{Matthew DeCross}
\affiliation{Quantinuum, Broomfield, CO 80021, USA}
\author{Dylan Herman}
\affiliation{Global Technology Applied Research, JPMorgan Chase, New York, NY 10017, USA}
\author{Niraj~Kumar} %
\affiliation{Global Technology Applied Research, JPMorgan Chase, New York, NY 10017, USA}
\author{Jeffrey Larson} %
\affiliation{Mathematics and Computer Science Division, Argonne National Laboratory, Lemont, IL 60439, USA}
\author{Danylo Lykov}
\affiliation{Global Technology Applied Research, JPMorgan Chase, New York, NY 10017, USA}
\affiliation{Computational Science Division, Argonne National Laboratory, Lemont, IL 60439, USA}
\author{Pierre Minssen}
\affiliation{Global Technology Applied Research, JPMorgan Chase, New York, NY 10017, USA}
\author{Yue Sun}
\affiliation{Global Technology Applied Research, JPMorgan Chase, New York, NY 10017, USA}
\author{Yuri~Alexeev}
\affiliation{Computational Science Division, Argonne National Laboratory, Lemont, IL 60439, USA}
\author{Joan~M.~Dreiling}
\affiliation{Quantinuum, Broomfield, CO 80021, USA}
\author{John~P.~Gaebler}
\affiliation{Quantinuum, Broomfield, CO 80021, USA}
\author{Thomas M. Gatterman}
\affiliation{Quantinuum, Broomfield, CO 80021, USA}
\author{Justin A. Gerber}
\affiliation{Quantinuum, Broomfield, CO 80021, USA}
\author{Kevin Gilmore}
\affiliation{Quantinuum, Broomfield, CO 80021, USA}
\author{Dan Gresh}
\affiliation{Quantinuum, Broomfield, CO 80021, USA}
\author{Nathan~Hewitt}
\affiliation{Quantinuum, Broomfield, CO 80021, USA}
\author{Chandler V. Horst}
\affiliation{Quantinuum, Broomfield, CO 80021, USA}
\author{Shaohan Hu}
\affiliation{Global Technology Applied Research, JPMorgan Chase, New York, NY 10017, USA}
\author{Jacob Johansen}
\affiliation{Quantinuum, Broomfield, CO 80021, USA}
\author{Mitchell Matheny}
\affiliation{Quantinuum, Broomfield, CO 80021, USA}
\author{Tanner Mengle}
\affiliation{Quantinuum, Broomfield, CO 80021, USA}
\author{Michael~Mills}
\affiliation{Quantinuum, Broomfield, CO 80021, USA}
\author{Steven A. Moses}
\affiliation{Quantinuum, Broomfield, CO 80021, USA}
\author{Brian Neyenhuis}
\affiliation{Quantinuum, Broomfield, CO 80021, USA}
\author{Peter Siegfried}
\affiliation{Quantinuum, Broomfield, CO 80021, USA}
\author{Romina Yalovetzky}
\affiliation{Global Technology Applied Research, JPMorgan Chase, New York, NY 10017, USA}
\author{Marco Pistoia}
\affiliation{Global Technology Applied Research, JPMorgan Chase, New York, NY 10017, USA}

\begin{abstract}
The quantum approximate optimization algorithm (QAOA) is a leading candidate algorithm for solving optimization problems on quantum computers. However, the potential of QAOA to tackle classically intractable problems remains unclear. In this paper, we perform an extensive numerical investigation of QAOA on the Low Autocorrelation Binary Sequences (LABS) problem, which is classically intractable even for moderately sized instances. We perform noiseless simulations with up to 40 qubits and observe that the runtime of QAOA with fixed parameters scales better than branch-and-bound solvers, which are the state-of-the-art exact solvers for LABS. The combination of QAOA with quantum minimum-finding gives the best empirical scaling of any algorithm for the LABS problem. We demonstrate experimental progress in executing QAOA for the LABS problem using an algorithm-specific error detection scheme on Quantinuum trapped-ion processors. Our results provide evidence for the utility of QAOA as an algorithmic component that enables quantum speedups.
\end{abstract}

\maketitle

\section*{Introduction}

Quantum computers have been shown to have the potential to speed up the solution of optimization problems. At the same time, only a small number of algorithmic primitives are known that provide broadly applicable speedups.
These include amplitude amplification~\cite{quant-ph/9607014}, quantum walks~\cite{montanaro2018quantum,montanaro2020quantum,2210.03210} and quantum Markov Chain algorithms~\cite{Somma2008,Wocjan2008}, as well as the recently introduced short path
algorithm~\cite{Hastings2018,2212.01513}.

The dearth of provable speedups in quantum optimization motivates the development of heuristics. %
A leading candidate for demonstrating a heuristic speedup in quantum optimization is the quantum approximate optimization algorithm (QAOA)~\cite{Hogg2000,farhi2014quantum}.  QAOA uses two operators applied in alternation $p$ times to prepare a quantum state such that, upon measuring it, a high-quality solution to the problem is obtained with high probability. A pair of such operators is commonly referred to as one QAOA ``layer.'' The state is evolved with a diagonal Hamiltonian encoding the optimization problem by the first operator and with a non-diagonal transverse-field Hamiltonian by the second operator.
In this work, we consider the evolution times to be hyperparameters that are set by using a fixed, predetermined rule, analogously to the choice of a schedule in simulated annealing.

\begin{figure*}
    \centering
    \includegraphics[width=0.95\textwidth]{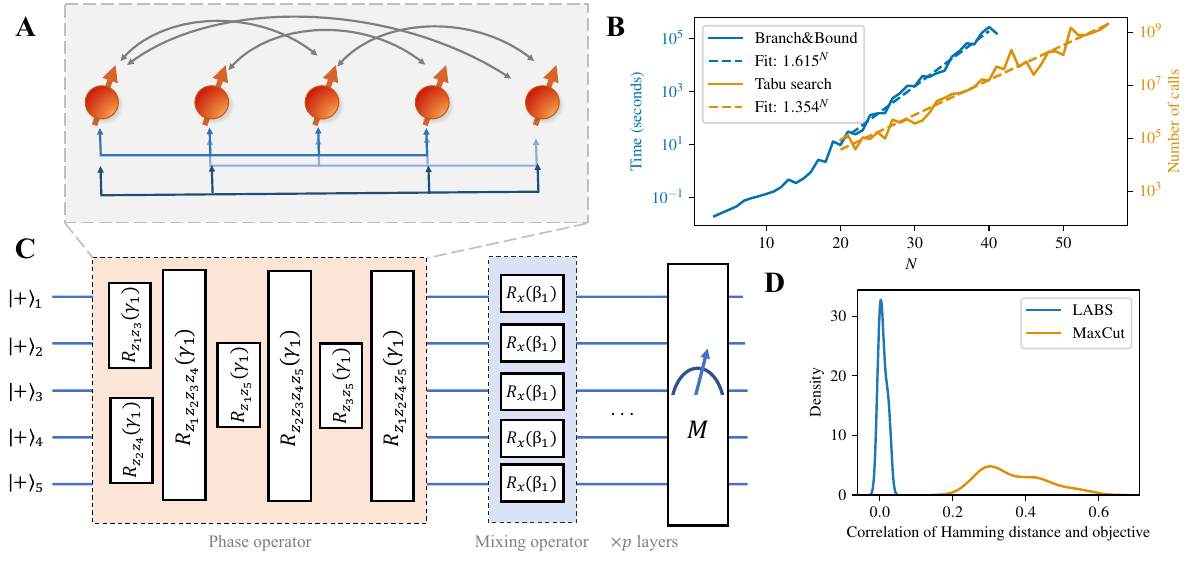}
    \caption{\textbf{Classical and quantum algorithms applied to the LABS problem.} \textbf{A,} Diagram of the LABS problem (with example of $N=5$). The problem involves non-local two-body (black lines) and four-body (blue lines) interactions. \textbf{B,} Time-to-solution (TTS) of classical solvers. For the sizes considered, we observe clear exponential scaling with exponents matching their asymptotic values reported in the literature (see Table~\ref{tab:summary}). \textbf{C,} Diagram of QAOA circuit for a 5-qubit example. Starting from a uniform superposition of the computational basis states, we apply $p$ layers of phase and mixing operators, followed by measurement in the computational basis. \textbf{D,} The distribution over $21\leq N \leq 31$ (for LABS) and $34$ random instances (for MaxCut on random 3-regular graphs with 20 nodes) of Pearson product-moment correlation coefficients relating the Hamming distance of bitstrings from the optimal solution with the objective value of the bitstring. 
    LABS has a much lower correlation between the Hamming distance and objective, indicating that it is much harder than the commonly considered MaxCut problem. 
    }
    \label{fig:Fig1_scheme_summary}
\end{figure*}

While QAOA has been studied extensively~\cite{zhou2020quantum,lipics.tqc.2022.7,2208.06909,Sureshbabu2023}, 
little is known about its potential to provide a scaling advantage over classical solvers. A recent numerical study~\cite{2208.06909} of random 8-SAT with $N\leq 20$ variables has shown that the time-to-solution (TTS) of QAOA %
with fixed parameters and constant depth grows as $1.23^N$. When QAOA is combined with amplitude amplification, the quantum TTS grows as $1.11^N$~\cite{2208.06909}, whereas the best classical heuristic has TTS that grows as $1.25^N$~\cite{2208.06909}. 
Our work is motivated by this preliminary numerical evidence on small instances, which indicates that QAOA may potentially scale better than classical solvers when executed on an idealized quantum computer.

We study the scaling of QAOA TTS with the problem size on the Low Autocorrelation Binary Sequences (LABS) problem~\cite{Boehmer1967,Schroeder1970}, also known as the  Bernasconi model in  statistical physics~\cite{Bernasconi1987,cond-mat/9707104}. The LABS problem has applications in communications engineering, where the low autocorrelation sequences are used for designing radar pulses~\cite{Boehmer1967,Golay1977}. %
To solve LABS, one has to produce a sequence of $N$ bits that minimizes a specific quartic objective. 

We choose LABS to study the scaling of QAOA TTS for the following three reasons. First, the complexity of LABS grows rapidly, with optimal solutions known only for $N\leq 66$ and the best heuristics producing approximate solutions of quality decaying with $N$ for $N \gtrapprox 200$~\cite{OPUS2-git_labs-Boskovic,Packebusch2016}. 
This makes it a promising candidate problem, since only a few hundred qubits are required to tackle classically intractable instances.
Second, the performance of classical solvers for LABS has been benchmarked~\cite{OPUS2-git_labs-Boskovic,Packebusch2016} in terms of the scaling of their TTS with problem size. \rev{Since optimal solutions are only known for $N\leq 66$, the scaling of TTS for all classical solvers is obtained by fitting results for $N\leq 66$.} We reproduce these results and observe that that the scaling of classical solvers at $N\leq 40$ matches the behavior \rev{for $N$ up to $66$} reported in the literature. This provides evidence that the scaling we observe for QAOA at $N\leq 40$ will similarly extrapolate to larger $N$.
Third, LABS has only one instance per problem size $N$. Combined with the hardness of LABS, this makes it possible to reliably study the scaling of QAOA at large problem sizes, where simulating tens or hundreds of random instances would be computationally infeasible. %

We obtain the scaling by performing noiseless exact simulation of QAOA with fixed schedules. Our results are enabled by a custom algorithm-specific GPU simulator~\cite{FUR}, which we execute using up to 1,024 GPUs per simulation on the Polaris supercomputer accessed through the Argonne Leadership Computing Facility. We find that the TTS of QAOA with number of layers $p = 12$ grows as $\QAOATTS^N$, which is improved to $\QAOAAATTS^N$ if combined with quantum minimum-finding. This scaling is better than that of the best classical heuristic, which has a TTS that grows as $1.34^N$.
\rev{We note that we do not propose any new quantum algorithms in this work. Instead, we study} a general quantum optimization heuristic with broad applicability \rev{(namely, QAOA) and make} no specific modifications to adapt it to the LABS problem. %

Our numerical evidence indicates that the proposed quantum algorithm scales better than the best classical heuristic in an idealized setting.
However, we do not claim that QAOA is the best theoretically possible algorithm for the LABS problem.
In particular, it may be possible to quadratically accelerate the best-known classical heuristic (Memetic Tabu~\cite{Gallardo2009}) by applying ideas similar to those used in quantum simulated annealing~\cite{Somma2008,Lemieux2020,Boixo2015}.
Nonetheless, our results highlight the potential of QAOA to act as a useful algorithmic component that enables super-Grover quantum speedups. 

As a first step toward execution of QAOA for the LABS problem, %
we implement QAOA on Quantinuum trapped-ion quantum processors~\cite{Pino2021, Moses2023RaceTrack} on problems with up to $N=18$. We further implement an algorithm-specific error detection scheme inspired by Pauli error detection~\cite{Gonzales2023,Debroy2020} and demonstrate that it can reduce the impact of noise on solution quality by up to $65\%$. %
Our experiments highlight the continuing improvements to quantum computing hardware and %
the potential of algorithm-specific techniques to reduce the overhead of error detection and correction.

\begin{figure*}
    \centering
    \includegraphics[width=\textwidth]{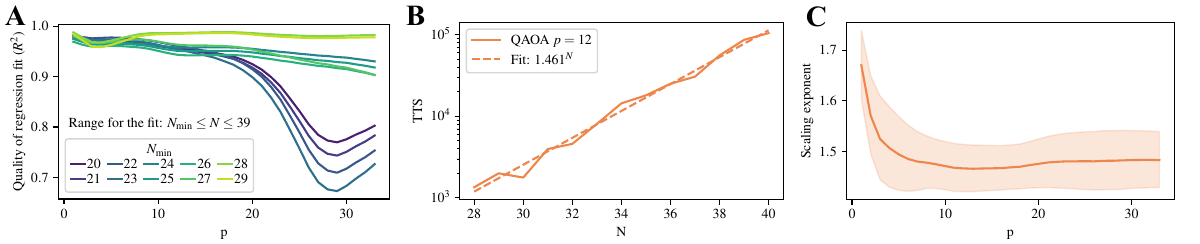}
    \caption{\textbf{QAOA runtime scaling.} \textbf{A,} The quality of the exponential fit for different choices of minimum $N$ to include in the fit. $N\geq 28$ results in a robust fit, the quality of which does not deteriorate with $p$. $N=40$ is omitted as it was only simulated up to $p=22$.
    \textbf{B,} TTS of QAOA at $p=12$. Clear exponential scaling is observed. \textbf{C, } The scaling exponent of QAOA runtime for different QAOA depths $p$. Shaded area shows $95\%$ confidence interval. Increasing $p$ beyond $p\approx 12$ does not lead to better scaling.}
    \label{fig:Fig2_pt_1_QAOA}
\end{figure*}

\section*{Results}
\subsection*{Problem statement}\label{sec:background}
We begin by formally defining the LABS problem, discussing the state of the art of classical LABS solvers, and describing how QAOA is applied to solve the problem.

For a given sequence of spins $s_i\in\{\pm 1\}$, the autocorrelation is given as 
\begin{equation}
    \mathcal{A}_k(\boldsymbol{s}) = \sum_{i=1}^{N-k}s_is_{i+k}.
\end{equation}
The goal of the LABS problem is to find a sequence of spins that minimizes the so-called ``sidelobe'' energy, 
\begin{equation}\label{eq:labs_en}
    \mathcal{E}_{\text{sidelobe}}(\boldsymbol{s}) = \sum_{k=1}^{N-1} \mathcal{A}_k^2(\boldsymbol{s}),
\end{equation}
or, equivalently, maximizes the merit factor
\begin{equation}\label{eq:labs_MF}
    \mathcal{F}(\boldsymbol{s}) = \frac{N^2}{2\mathcal{E}_{\text{sidelobe}}(\boldsymbol{s})}.
\end{equation}
The time-to-solution (TTS) is defined as the time a solver takes to produce this sequence. The energy $\mathcal{E}_{\text{sidelobe}}(\boldsymbol{s})$ is a polynomial containing terms of degree 2 and 4, visualized in Fig.~\ref{fig:Fig1_scheme_summary}A.
\rev{An instance of the LABS problem is unique for each $N$, and}
can be encoded on qubits by the following Hamiltonian:
\begin{equation}\label{eq:labs_hamiltoninan}
\begin{aligned}
    H_C = 2\sum_{i=1}^{N-3}\zgate_i & \sum_{t=1}^{\lfloor \frac{N-i-1}{2}\rfloor}\sum_{k=t+1}^{N-i-t}\zgate_{i+t}\zgate_{i+k}\zgate_{i+k+t} \\
    &+\sum_{i=1}^{N-2}\zgate_i\sum_{k=1}^{\lfloor \frac{N -i}{2}\rfloor}\zgate_{i+2k},
\end{aligned}
\end{equation}
where $\zgate_j$ is a Pauli $\zgate$ operator acting on qubit $j$.

The runtimes of state-of-the-art classical solvers for the LABS problem scale exponentially, with clear exponential scaling present at $N\leq 40$ as shown in Fig.~\ref{fig:Fig1_scheme_summary}B.
The best-known exact solvers are branch-and-bound methods that have a running time that scales as $1.73^N$~\cite{Packebusch2016}. The best-known heuristic for general LABS is tabu search initialized with a memetic algorithm (Memetic Tabu)~\cite{Gallardo2009}, and has a running time that scales as $1.34^N$~\cite{Bokovi2017}. \rev{We provide a survey of classical solvers for the LABS problem in Supplementary Information.}

To see why LABS is harder to solve than other commonly studied problems such as MaxCut, we can examine the correlation between the Hamming distance to the optimal solution and the objective.
The comparison is shown in Fig.~\ref{fig:Fig1_scheme_summary}C. This correlation is one example of problem structure used by both classical and quantum heuristics to solve the problem quickly~\cite{Hogg2000}. The absence of this correlation highlights the hardness of LABS compared with other commonly considered problems such as MaxCut.

As a consequence of the exponential scaling, the LABS problem becomes classically intractable at moderate sizes. %
Specifically, the value of the best-known merit factor decreases notably for high $N$, whereas the asymptotic limit predicts that the merit factor should stay approximately constant. 
This failure of state-of-the-art heuristics has been observed for $N > 200$~\cite{OPUS2-git_labs-Boskovic,Packebusch2016}.
The clear failure of the classical method to obtain high-quality solutions even at small sizes makes LABS an appealing candidate problem for quantum optimization heuristics.

\begin{figure*}
    \centering
    \includegraphics[width=\textwidth]{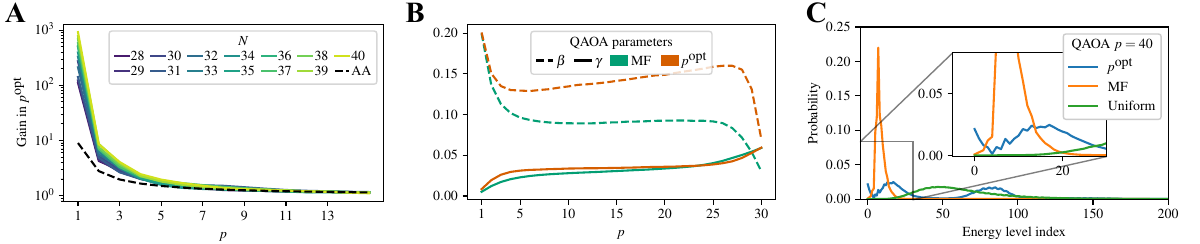}
    \caption{\textbf{QAOA dynamics under different parameter schedules.} \textbf{A,} The gain in success probability $\popt$ from applying step $p$ of QAOA and amplitude amplification (AA). The gain is defined as $\popt_{\text{at step }p} / \popt_{\text{at step }(p-1)}$. The gain at $p=1$ is over the random guess. Only one line is plotted for amplitude amplification since the lines for the values of $N$ considered are visually indistinguishable. For small $p$, a QAOA layer gives orders of magnitude larger gain than a step of AA. \textbf{B,} Fixed QAOA parameters
    for $p=30$ chosen with respect to the QAOA energy $\langle C\rangle_{\text{MF}}$ (``MF'') and probability of sampling the optimal solution (``$\popt$''). 
    Different choice of optimization objective gives different resulting parameters. \textbf{C,} Probability of obtaining a binary string corresponding to a given energy level of the LABS problem (the zeroth energy level is the ground state or optimal solution; lower is better). When parameters are optimized with respect to the expected merit factor (labeled ``MF''), the QAOA output state is concentrated around the mean and fails to obtain a high overlap with the ground state. On the other hand, when parameters are optimized with respect to $\popt$ (labeled ``$\popt$''), the QAOA state has a high overlap with both the ground state and higher energy states. The probability of obtaining the ground state is $27.3$ times greater for QAOA with parameters optimized with respect to $\popt$ at $p=40$.}
    \label{fig:Fig2_pt_2_QAOA}
\end{figure*}

In this work, we tackle the LABS problem using QAOA.
As shown in the circuit diagram Fig.~\ref{fig:Fig1_scheme_summary}D, QAOA solves optimization problems by preparing a parameterized state 
\begin{equation}\label{eq:qaoa_ansatz}
    \vert \bm \beta, \bm \gamma\rangle = \prod_{l=1}^p e^{-i \beta_l \sum_{j=1}^{N}\xgate_j} e^{-i \gamma_l H_C} \vert + \rangle^{\otimes N}, 
\end{equation}
where $\vert + \rangle^{\otimes N}$ is a uniform superposition over computational basis states, $H_C$ is the diagonal Hamiltonian encoding the problem, and $\xgate_j$ is a Pauli $\xgate$ operator acting on qubit $j$. The operator $e^{-i \gamma H_C}$ is commonly referred to as the phase operator and $e^{-i \beta \sum_{j=1}^{N}\xgate_j}$ as the mixing operator. The evolution times $\bm \beta, \bm \gamma$ are hyperparameters chosen to maximize some figure of merit, such as the expected quality of the measurement outcomes or the probability of measuring the optimal solution.
While $\bm \beta, \bm \gamma$ can be optimized independently for each problem size, we consider them to be hyperparameters and use one fixed set of parameters for the LABS problem with a given QAOA depth $p$ regardless of size. The fixed set of parameters is obtained by optimizing $\bm \beta, \bm \gamma$ numerically for a number of small problem sizes and introducing an averaging and rescaling procedure to extrapolate parameters to any problem size (see the Methods section).

When choosing the parameters $\bm \beta, \bm \gamma$ and evaluating the quality of the solution obtained by QAOA, two figures of merit are commonly considered. The first one is the expected merit factor of the sampled binary strings, given by 
\begin{equation}
    \langle C\rangle_{\text{MF}} = \langle \bm \beta, \bm \gamma \vert\frac{N^2}{2 H_C}\vert \bm \beta, \bm \gamma\rangle= \sum_{\boldsymbol{s}\in \{0,1\}^{N}}\Prob(\boldsymbol{s})\mathcal{F}(\boldsymbol{s}).
\end{equation}
We will refer to $\langle C\rangle_{\text{MF}}$ as the ``QAOA energy'' as a shorthand. The second figure of merit  is the probability of sampling the exact optimal solution, denoted by $\popt$ and equal to the sum of squared absolute values of amplitudes of basis states corresponding to exactly optimal solutions. %

\begin{table*}
    \centering
    \begin{tabular}{c|c|c|c|c|c|c|c}
     & \multirow{3}{*}{QAOA$+$QMF} & \multirow{3}{*}{QAOA} & \multicolumn{2}{|c|}{Memetic Tabu} & \multicolumn{3}{|c}{Branch-and-bound}  \\
     & & & \multirow{2}{*}{Reproduced} & \multirow{2}{*}{\cite{Gallardo2009,Bokovi2017}} &  \multicolumn{2}{|c|}{Reproduced} & \multirow{2}{*}{TTO \cite{Packebusch2016}} \\
     & & & & & TTS & TTO & \\
     \hline
     Fit & \textbf{\QAOAAATTS} & \QAOATTS & 1.35 & 1.34 & 1.62 & 1.76 & 1.73\\
     CI & (1.19, 1.23) & (1.42,1.50) & (1.33,1.38) & N/A & (1.57,1.66) & (1.72,1.79) & N/A\\
    \end{tabular}
    \caption{Scaling exponents for quantum and classical algorithms. Confidence intervals (CIs) are $95\%$. The reported asymptotic exponential scaling of classical state-of-the-art solvers is reproduced at $N\leq 40$.  For branch-and-bound, we include both the time to obtain a certificate of optimality (TTO reported in Ref.~\onlinecite{Packebusch2016}) as well as the much shorter time to find an optimal solution (TTS). We observe that QAOA with constant depth of $p=12$ augmented with quantum minimum-finding (``QAOA$+$QMF'') has better time-to-solution scaling than the best known classical heuristics.
    }
    \label{tab:summary}
\end{table*}

In the numerical experiments below, we follow the protocol of Ref.~\onlinecite{2208.06909} and focus on scaling of the QAOA TTS with problem size $N$ as the QAOA depth $p$ is held constant. QAOA TTS is defined as $\frac{1}{\popt}$, i.e. the expected number of measurements required to obtain an optimal solution from the QAOA state.
Ref.~\onlinecite{2208.06909} 
rigorously shows that, for random $k$-SAT, the runtime of constant-depth QAOA grows exponentially with $N$ at any fixed $p$, with the scaling exponent depending on $p$. 
While the nature of the LABS problem makes it difficult to obtain analytical results analogous to Ref.~\onlinecite{2208.06909}, our numerical results also show clear exponential scaling of TTS. We note that, in practice, TTS of QAOA is $\Theta(N^2)\frac{1}{\popt}$, where the $\Theta(N^2)$ prefactor comes from the cost of implementing the LABS phase oracle~\cite{Sanders2020}. However, we do not include it in our analysis because it does not affect the scaling exponent.

\subsection*{Scaling of Quantum Time-to-Solution for LABS problem}\label{sec:QAOA_performance}

We now present the numerical results demonstrating the scaling of TTS of QAOA and QAOA augmented with quantum minimum-finding (``QAOA$+$QMF''). The results are summarized in Table~\ref{tab:summary}. Throughout this section, we present the numerical results obtained using exact noiseless simulations. %
The runtime scaling is obtained by evaluating QAOA once with fixed parameters $\bm\beta, \bm\gamma$ (i.e., with no overhead of parameter optimization) and computing the value $\popt$ with high precision. %
We discuss the parameter setting procedure and the details of simulation in the Methods section.

We are interested in the scaling of the runtime of QAOA for large problem sizes $N$. 
An important question to address is the choice of the smallest $N$ to include in the scaling analysis, since the algorithm's behavior at small sizes may not be predictive of its behavior at large sizes.
Note that the largest $N$ we include is limited by the capability of the classical simulator. We use the quality of the fit as the criterion for the choice of the cutoff on $N$. Figure~\ref{fig:Fig2_pt_1_QAOA}A shows that if we set the cutoff at $N\geq 28$, we obtain a robust high-quality fit ($R^2>0.94$), with the quality of the fit remaining stable as $p$ grows.
On the other hand, if smaller $N$ are included, the quality of fit begins to decay with $p$.  Therefore we include only $N\geq 28$, obtaining the fit presented in Fig.~\ref{fig:Fig2_pt_1_QAOA}B. We observe that TTS of QAOA grows as $\QAOATTS^N$ with problem size at constant QAOA depth $p=12$. We present evidence that the scaling exponent for QAOA at $p=12$ is not sensitive to the choice of $N_{\min}$ in Supplementary Information. %

As a quantum optimization heuristic with constant depth, on a fault-tolerant quantum computer the 
QAOA performance can be improved by using amplitude amplification~\cite{2208.06909,Sanders2020} or, more specifically, quantum minimum-finding~\cite{van_Apeldoorn_2020} (see Methods).
The resulting scaling of TTS of QAOA augmented with quantum minimum-finding (``QAOA$+$QMF'') is $\QAOAAATTS^N$.

We observe that, beyond a certain value ($p\approx 12$), increasing QAOA depth does not lead to better scaling of TTS.
This behavior is demonstrated in Fig.~\ref{fig:Fig2_pt_1_QAOA}C.  Consequently, running QAOA with $p$ higher than $12$ does not give any scaling advantage over amplitude amplification. This behavior is illustrated in Fig.~\ref{fig:Fig2_pt_2_QAOA}A, which shows the increase in the success probability $\popt$ from applying a given step of QAOA and amplitude amplification. 
For amplitude amplification, at step $p$ we have $\popt = \left(\sin((2p+1)\arcsin{\sqrt{p_0}})\right)^2$, where $p_0 = \frac{8}{2^N}$ is the initial (random guess) success probability~\cite{Boyer1998tight}.
Note that the $8$ in the numerator is a consequence of a dihedral group symmetry, namely, $D_{4}$. While asymptotically equivalent, amplitude amplification performs better than a realistic generalized minimum-finding algorithm~\cite{van_Apeldoorn_2020},
as the formula used here considers the scenario where we know which states to amplify (i.e., the optimal merit factor is known). We observe that for small $p$, a step (layer) of QAOA gives orders of magnitude larger increase in success probability than does a step of amplitude amplification, implying an even larger improvement over direct application of quantum minimum-finding. %
We provide additional details on comparison between QAOA and amplitude amplification in the Supplementary Information. %

We observe that the QAOA dynamics with parameters optimized for expected solution quality $\langle C\rangle_{\text{MF}}$ and success probability $\popt$ are different. We plot the optimized parameters in Fig.~\ref{fig:Fig2_pt_2_QAOA}B. We note that the parameters optimized with respect to one metric give performance that is far from optimal with respect to the other metric. This can be seen in Fig.~\ref{fig:Fig2_pt_2_QAOA}C, which plots the energy distribution (with respect to the cost Hamiltonian) of the states appearing in the QAOA wavefunction weighted by probability. With the parameters optimized for $\langle C\rangle_{\text{MF}}$, the QAOA output distribution is concentrated around its mean, and the overlap with the ground state or $\popt$ is very small. On the other hand, when the parameters are optimized with respect to $\popt$, the wavefunction is not concentrated and has large probability weight on the target ground state (i.e., high $\popt$). This comes at the cost of substantial overlap with high-energy states, which leads to poor expected solution quality. 
In the Supplementary Information, we discuss the behavior of QAOA with parameters optimized with respect to different objectives. %

\begin{figure*}
    \centering
    \includegraphics[width=1\textwidth]{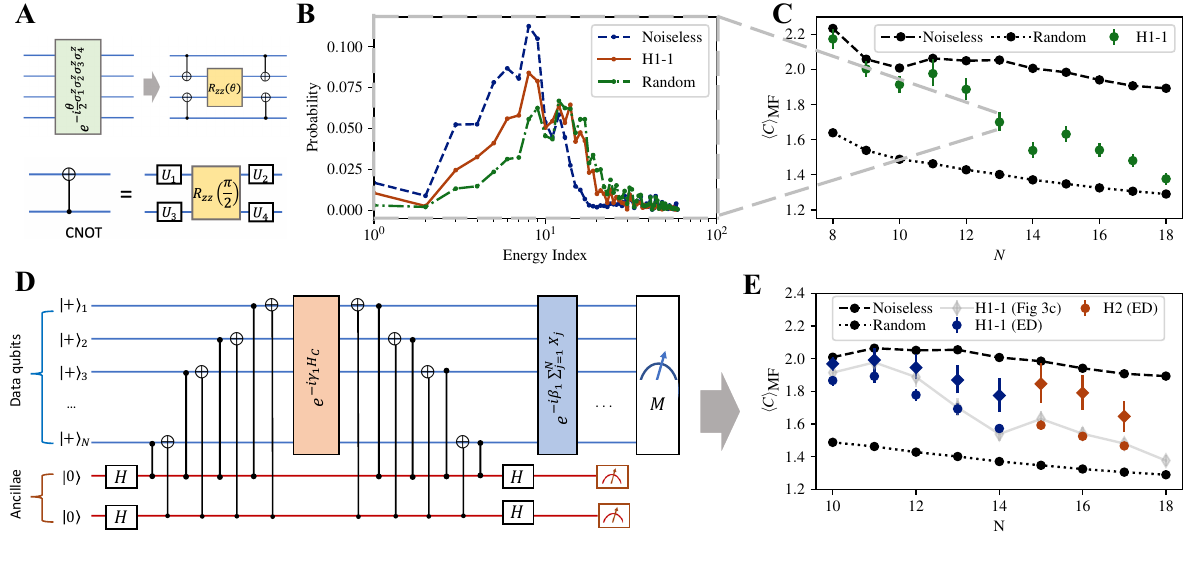}
    \caption{
    \textbf{Experimental results on trapped-ion system.} \textbf{A,} Decomposition of four-body interaction terms into a two-body $R_{\zgate\zgate}$  gate and four two-body $\cnotgate$ gates, which can be realized via native $R_{\zgate\zgate}$ gates. %
    \textbf{B,} Energy density plot from experimental measured bitstrings for $N$=13. Energy index is arranged in energy ascending order. As a comparison, the distributions for noiseless $p=1$ QAOA simulation and random guess (assuming uniform distribution of all possible bitstrings) are shown.  \textbf{C,} Experimental results up to 18 qubits on a trapped-ion quantum device (H1-1) with QAOA layer $p=1$ with optimized QAOA parameters. %
    The error bars are calculated with 99\% confidence intervals hereafter.
    \textbf{D,} Illustration of parity check circuit. The $\zgate$ and $\xgate$ parities of states are mapped to ancillary qubits after implementation of full (or part of) phase operators via $\czgate$ and $\cnotgate$ gates, respectively, followed by mid-circuit measurement on the ancillary qubits to extract the parity syndrome result. \textbf{E,} Experimental results for circuit with parity check. Three mid-circuit $\zgate$-parity and $\xgate$-parity checks were performed using six ancillary qubits. The ancillae can also be reused after appropriate reset during the circuit. The red data points were run on the Quantinuum H2 hardware while the blue data were from the H1-1 device. Data run on the H1-1 device without any ancillary qubits are shown in grey. Circles (diamonds) are the data without (with) post-selection. The abbreviation ED refers to the error detection via the parity checks.
    Number of mid-circuit parity checks is fixed to be two for $N=10, 11$ and three for all other $N$. 
    Improvement in expected merit factor after post-selection according to parity syndrome measurement is observed.} %
    \label{fig:Fig3_experiment}
\end{figure*}

\subsection*{Experiments on trapped-ion system}\label{sec:experiment}

We now present the experimental results demonstrating the algorithmic and hardware progress toward the practical implementation of QAOA. 
Implementation of the phase operator %
is especially challenging for currently available quantum processors. 
It requires a large number of geometrically nonlocal two-qubit gates,
 demanding high gate fidelity.

Recent progress in trapped-ion platforms based on the QCCD architecture~\cite{Moses2023RaceTrack,Pino2021} has led to a rapid increase in the number of qubits while maintaining high fidelity, enabling large-scale QAOA demonstrations~\cite{2303.02064,He2023AlignmentQAOA}, These systems implement two-qubit gates between arbitrary pairs of qubits by transporting ions into physically separate gate zones, resulting in high-fidelity two-qubit gates with  low crosstalk. 
We leverage this progress to execute QAOA circuits for the LABS problem on Quantinuum H-series trapped-ion systems. %

To implement the QAOA circuit shown in Fig.~\ref{fig:Fig1_scheme_summary}D, we have to implement the phase operator. The four-body terms in the phase operator are decomposed into $\cnotgate$ gates and the native $R_{\zgate\zgate}(\theta) = e^{-i \frac{\theta}{2} \zgate\zgate}$ rotation as shown in Fig.~\ref{fig:Fig3_experiment}A.
To reduce the cost of implementing both the two-qubit and four-qubit interaction terms, we optimize the circuit by greedily canceling $\cnotgate$ gates (for algorithm details and gate count reduction see the Supplementary Information).
The resulting circuit containing $\cnotgate$s and $R_{\zgate\zgate}$s is then transpiled
into the two-qubit $R_{\zgate\zgate}$ gates and single-qubit 
gates that can be natively implemented by the trapped-ion system. We remark that 
the number of two-qubit gates 
is $\approx 10^3$ at $N = 18$, putting our experiments among the largest quantum optimization demonstrations on quantum hardware to date.\cite{2306.03238,Pelofske2023,Niroula2022,He2023AlignmentQAOA,2303.02064}

In this work we execute QAOA circuits 
with $p=1$ 
using parameters $\bm\beta$, $\bm\gamma$ optimized in noiseless simulation,
followed by a projective measurement in the computational basis. In Fig.~\ref{fig:Fig3_experiment}B, we show the energy probability distribution of measured bitstrings for $N=13$. We observe a broad distribution due to the limited number of layers and experimental imperfections. Nevertheless, even at high $N$, where two-qubit gate count is high and the gate errors can be substantial, we observe a clear signal that indicates that QAOA is outperforming random guess. This is shown in Fig.~\ref{fig:Fig3_experiment}C, which presents the experimentally obtained expected merit factors for various problem size up to $N=18$. We note that the merit factor drops quickly for larger $N$ and is approaching random guess because of experimental imperfections. We also note that at this scale LABS is easy for classical heuristics, which obtain optimal merit factors in $<1$ second. Implementing QAOA for LABS instances that are hard for classical solvers would likely require error correction as the current implementation leads to an estimated two-qubit gate count of $\approx 7.5\times 10^5$ already at $N=67$ and $p=12$.

To improve the performance in the presence of noise, we implement an algorithm-specific error detection scheme. Since only the phase operator requires two-qubit gates, we focus on detecting errors that occur in the corresponding part of the circuit. Our scheme is based on the Pauli sandwiching error-detecting procedure of Ref.~\onlinecite{Gonzales2023}, which uses pairs of parity checks to detect some but not necessarily all errors that occur in a given part of the circuit. %
Following Refs.~\onlinecite{Shaydulin2021EM,Kakkar2022}, we use the symmetries of the optimization problem to construct the parity checks. Specifically, we note that the LABS Hamiltonian
preserves both $\zgate$ and $\xgate$ parities, that is, $[H_C, \otimes_i^N \zgate_i] = [H_C, \otimes_i^N \xgate_i] = 0$. We compute the parities onto ancillary qubits and 
perform mid-circuit measurement to determine whether 
an odd number of 
$\zgate$- or $\xgate$-flip errors occur during the circuit execution. The circuit with one check is shown in Fig.~\ref{fig:Fig3_experiment}D. In the hardware experiments shown in Fig.~\ref{fig:Fig3_experiment}E, we use up to three parity checks and
observe consistent improvements in QAOA performance after postselecting on their outcomes. After postselection, the difference of merit factor between experimental results and noiseless simulation is reduced by $54\%$ on average and up to $65\%$ for specific $N$. In the Supplementary Information we present additional details on the error-detecting scheme performance, including how performance improves with the number of parity checks and the reduction in the algorithm runtime. We note that while error detection does not directly give samples with better merit factors, the potential improvement in runtime can be translated into performance gains at the algorithm level, for example by being able to take more samples within a given time budget. In our experiments, in all but two cases the optimal bitstring could be found within the post-selected sample, and in all cases within the total sample.

\section*{Discussion}\label{sec:conclusion}

Our main finding is that quantum minimum-finding enhanced with QAOA scales better than the best known classical heuristics for the LABS problem. This provides evidence
for the potential of QAOA to act as a building block that provides algorithmic speedups on an idealized fault-tolerant quantum computer. 
We envision QAOA being used in a variety of algorithmic settings, similarly to how amplitude amplification acts as a subroutine in quantum algorithms for backtracking, branch-and-bound and so on.

\rev{While our numerical evidence is only obtained by fitting instances with $N\leq 40$, there are three observations that make us optimistic the observed scaling will hold for larger $N$. First, evaluating time to solution is only possible when the solution is known, i.e. only up to 66 variables. This limitation applies equally to the classical solvers, which also report their scaling on $N\leq 66$. Therefore, for the purposes of honest comparison between classical and quantum solvers the relevant problem size is up to $66$ variables, out of which our data covers up to $40$ variables. Second, we note that for classical solvers considered, there is no change in scaling between $N\leq 40$ and $40\leq N\leq 66$ (see Fig.~\ref{fig:Fig1_scheme_summary}B). This supports our claim that QAOA scaling will also remain the same up to at least $N=66$. Finally, we note that there is a rich literature showing that QAOA performance at small finite $N$ matches the rigorously derived infinite-size-limit behavior. This includes results for MaxCut~\cite{2110.10685}, Sherrington-Kirkpatrick
model~\cite{lipics.tqc.2022.7}, $k$-spin models~\cite{Basso2022}, and random $k$-SAT~\cite{2208.06909}. The results for $k$-SAT specifically focus on the scaling of TTS~\cite{2208.06909}, matching our setting.}

We take the first step toward the execution of QAOA for the LABS problem by implementing an algorithm-specific error-detection scheme on a trapped-ion quantum processor. However, further improvements in quantum error correction and hardware are necessary to implement the quantum minimum-finding augmented with QAOA. In particular, the overheads of fault-tolerance~\cite{Babbush2021} must be substantially reduced to realize the quantum speedup.

\section*{Materials and Methods}\label{sec:methods}

\subsection*{Quantum minimum-finding enhanced with QAOA}

In this work, we present the scaling results for QAOA combined with amplitude amplification (AA), or, more specifically, with quantum minimum-finding (``QAOA$+$QMF'' in Table~\ref{tab:summary}). This reduces the scaling exponent by half as compared to directly sampling QAOA output. 
We now discuss in detail how QAOA is combined with the generalized quantum minimum-finding algorithm of Ref.~\onlinecite{van_Apeldoorn_2020} to obtain the stated scaling.

We begin by noting that standard AA is not sufficient. This is because the LABS problem is framed as optimization and not search, i.e. there is no oracle for marking a global minimum. The trick for handling optimization is to perform a standard reduction from optimization to feasibility. The reduction is performed by introducing a threshold on the cost as a constraint and performing a binary search using AA as a subroutine. The oracle used by AA  marks the elements below the current threshold. This reduction was first introduced by D\"{u}rr and H\o{}yer (DH)~\cite{quant-ph/9607014}. However, the quantum minimum-finding algorithm of D\"{u}rr and H\o{}yer utilizes standard Grover search, i.e. it requires the initial state to be the uniform superposition. A modification to it is required to leverage the improved success probability afforded by QAOA. 

Ref.~\onlinecite{van_Apeldoorn_2020} provided a simple extension of DH that allows arbitrary initial states, with the overall cost scaling inversely with overlap between the initial state and state encoding the optimal solution. We leverage this extension in our quantum algorithm. We use constant-depth QAOA to prepare the initial state for the quantum minimum-finding algorithm. As QAOA state has overlap with the optimal state that is much larger than that of uniform superposition and scales more favorably, we obtain better performance than the direct minimum-finding of D\"{u}rr and H\o{}yer. Specifically, we provide numerical evidence that our algorithm obtains a super-Grover speedup over exhaustive search for the LABS problem, and scales better than the best known classical heuristics. %
We present our modification to include QAOA for outputting an optimal solution $x^*$ to the LABS problem in Algorithm \ref{alg:qaoa_aa} below. It is based on the generalized minimum-finding procedure outlined in Lemma 48 of Ref.~\onlinecite{van_Apeldoorn_2020}. To keep the current work self-contained, we include the analysis of the algorithm below. We will use the following standard quantum subroutine based on Grover search that searches for an element with unknown probability in a quantum state.

\begin{lemma}[Exponential Quantum Search, {Ref.~\onlinecite{Brassard_2002}}]
\label{lem:eqsearch}
    Let $|\psi\rangle= U|0\rangle^{\otimes N}$ be a quantum state in a $2^N$-dimensional Hilbert space with computational basis elements indexed by $N$-bit bitstrings, and $m \colon \{0,1\}^{N} \to \{0,1\}$ be a marking function such that $\sum_{\{x | m(x) = 1\}} |\langle \psi | x \rangle|^2 \ge p$. There exists a quantum algorithm $\mathbf{EQSearch}(U,m,\delta)$ that outputs an element $x^*$ such that $m(x^*) = 1$ with probability at least $\delta$ using $O\left(\frac{1}{\sqrt{p}}\log\left(\frac{1}{\delta}\right)\right)$ applications of $U$ and $m$.
\end{lemma}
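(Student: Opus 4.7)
The plan is to combine two standard ingredients --- amplitude amplification in the style of Brassard, H\o yer, Mosca, and Tapp, together with the exponential-schedule (``BBHT'') trick that handles unknown amplitude --- and then to boost the constant-probability subroutine by independent repetition.

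First, I would construct the Grover-type operator $Q = -U S_0 U^{\dagger} S_m$, where $S_0 = I - 2|0\rangle\langle 0|^{\otimes N}$ and $S_m$ is a reflection about the marked subspace implemented with one ancilla and a constant number of calls to $m$ (phase-kick via $|x\rangle|b\rangle \mapsto |x\rangle|b\oplus m(x)\rangle$ sandwiching an ancilla $Z$). Writing the total amplitude of $|\psi\rangle$ on the marked basis as $\sin\theta$, we have $\sin^{2}\theta \ge p$, and $Q^{k}|\psi\rangle$ stays inside the two-dimensional invariant subspace spanned by the marked and unmarked projections of $|\psi\rangle$, with marked-measurement probability $\sin^{2}((2k+1)\theta)$.

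Second, since $\theta$ is unknown (we only know $\sin^{2}\theta \ge p$), I would invoke an exponential-guessing schedule. Fix a constant $c\in(1,2)$; for $\ell=0,1,2,\dots$ set $M_\ell=\lceil c^{\ell}\rceil$, draw $k_\ell$ uniformly from $\{0,1,\dots,M_\ell-1\}$, apply $Q^{k_\ell}$ to $|\psi\rangle$, measure, and check the outcome with one call to $m$. The standard identity $\mathbb{E}_{k\sim\mathrm{Unif}\{0,\dots,M-1\}}\sin^{2}((2k+1)\theta) = \tfrac{1}{2}-\tfrac{\sin(4M\theta)}{4M\sin(2\theta)}$ shows that once $M_\ell \ge 1/\sin(2\theta)$ the round succeeds with probability $\ge 1/4$. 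A geometric-series summation over $\ell$ up to the first such ``good'' round bounds the total number of $U$- and $m$-calls by $O(1/\sin\theta)=O(1/\sqrt{p})$, producing a subroutine with constant success probability $\rho \ge 1/2$.

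Third, to reach the desired confidence level, I would run the constant-probability subroutine $T = O(\log(1/\delta))$ times independently, verifying each candidate with a single additional call to $m$ and returning any marked sample. By independence, the probability that all $T$ repetitions fail is at most $(1-\rho)^{T}$, which is at most $\delta$ for an appropriately chosen $T=O(\log(1/\delta))$, giving the advertised query complexity $O((1/\sqrt{p})\log(1/\delta))$.

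The main technical obstacle is the quantitative analysis of the BBHT exponential schedule: one must argue that, even without knowing $\theta$, the doubling sequence $\{M_\ell\}$ achieves constant success probability at cumulative cost $O(1/\sqrt{p})$, which relies on the averaging identity above and on the fact that $c<2$ keeps the geometric sum within a constant factor of the first ``good'' $M_\ell$. The remaining steps --- constructing $S_m$ from the marking function, identifying the two-dimensional invariant subspace of $Q$, and boosting by independent repetition with verification --- are routine.
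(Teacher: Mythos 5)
The paper does not actually prove this lemma: it is imported verbatim as a known result (Theorem 3, ``QSearch,'' of Brassard, H\o yer, Mosca, and Tapp, Ref.~\onlinecite{Brassard_2002}) and used as a black box inside the proof of Theorem 1. Your reconstruction is essentially the proof from that reference --- amplitude amplification with the reflection $Q=-US_0U^{\dagger}S_m$, the exponential schedule for unknown $\theta$ via the averaging identity, truncation to get a constant-success subroutine at cost $O(1/\sqrt{p})$, and $O(\log(1/\delta))$ independent repetitions with verification --- so the approach is correct and matches the cited source. Two small points. First, the lemma as printed says ``with probability at least $\delta$'' while the cost is $O(\log(1/\delta)/\sqrt{p})$; this only makes sense if $\delta$ is the \emph{failure} probability, which is how Algorithm 1 uses it and how you (correctly) interpreted it. Second, a quantitative caveat in your step two: if you only use the bound that each round with $M_\ell \ge 1/\sin(2\theta)$ succeeds with probability $\ge 1/4$, the tail of your geometric sum is $\sum_j (3c/4)^j$, which converges only for $c<4/3$, not for all $c\in(1,2)$ as you claim; to allow any $c<2$ you need the sharper per-round failure bound $\tfrac{1}{2}+\tfrac{1}{4M_\ell\sin(2\theta)}$, whose product over successive rounds decays like $2^{-j}$. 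Either fix (take $c=6/5$, or use the sharper bound) closes this, so the gap is cosmetic rather than structural.
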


\begin{algorithm}[H]
\caption{QAOA Enhanced with Quantum Minimum-Finding}
\label{alg:qaoa_aa}
\begin{algorithmic}
\Require Unitary $U_{\text{QAOA}}$ acting on $\mathbb{C}^{2^N}$ such that $\lvert\langle x^{*}|U_{\text{QAOA}}|0\rangle^{\otimes N}\rvert \geq 1/\sqrt{p_\text{opt}}$ for unknown $p_\text{opt}$, $V_{\text{LABS}}$ for computing $\mathcal{E}_{\text{sidelobe}}$
into a register, and $\delta \in (0, 1)$, positive number $M \le 2^N$, $C$ is the constant corresponding to the $O(\cdot)$ in Lemma~\ref{lem:eqsearch}
\Ensure If $M$ is greater than $1/\sqrt{p_{\text{opt}}}$, output $x^{*}$ with $\geq 1 - \delta$ probability using $O(\log(1/\delta)M)$ calls to $U_{\text{QAOA}}$ and $V_{\text{LABS}}$ (and their inverses).
\State $x_{\text{res}}$ is set to an empty list.
\For{$i \gets 1$ \textbf{to} $\lceil \log(1/\delta) \rceil$}
\State $t \gets 0$; $s_0 \gets \infty$
\While {\text{number of calls to $U_{\text{QAOA}}$ \& $V_{\text{LABS}}$ is $<$ $3CMN$}}
\State $t \gets t + 1$
\State Define $m_t \colon \{0,1\}^{N} \to \{0,1\}$ such that $m_t(x) = 1$ if and only if $\mathcal{E}_{\text{sidelobe}}(x) < s_{t-1}$. Note that $m_t$ can be coherently evaluated using one query to $V_{\text{LABS}}$.
\State Set $s_t = \mathbf{EQSearch}(U_\text{QAOA},m_t,1/(6\cdot2^{N}))$.
\EndWhile
\State Append $s_t$ to $x_{\text{res}}$.
\EndFor
\State \text{Output minimum of $x_{\text{res}}$.}
\end{algorithmic}
\end{algorithm}

\begin{figure*}
    \centering
    \includegraphics[width=\textwidth]{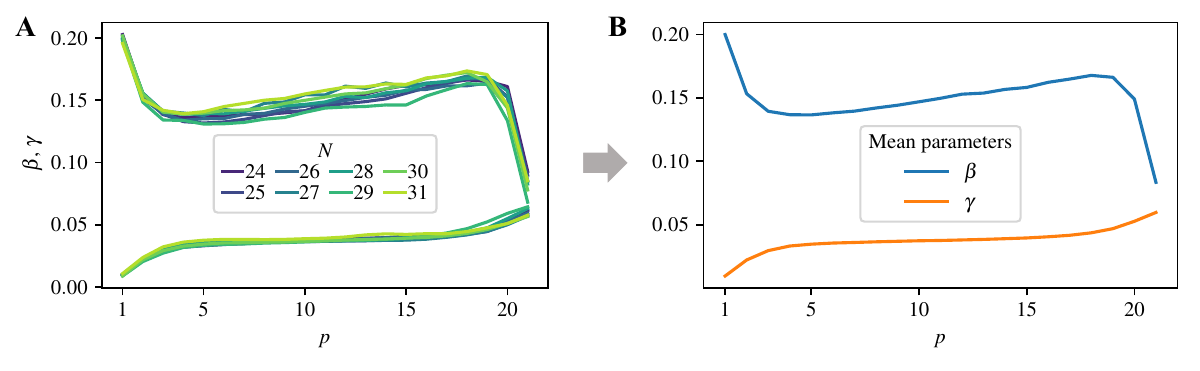}
    \caption{\textbf{Visualization of how the fixed parameters are obtained.} \textbf{A,} Optimized QAOA parameters $\bm\beta$ (top lines) and $\bm\gamma$ (bottom lines) for $p=21$. $\bm\gamma$ is multiplied by $N/24$ (constant factor of $\frac{1}{24}$ added for figure readability in both subfigures). 
    \textbf{B,} Fixed parameters obtained by taking the arithmetic mean over the optimized parameters. 
    \label{fig:mean_parameters}
    }
\end{figure*}

\begin{theorem}
Suppose a constant-depth QAOA circuit $U_{\text{QAOA}}$ prepares a state $|\psi\rangle = U_{\text{QAOA}}|0\rangle^{\otimes N}$ with $N \ge 3,$ such that we have $\lvert\langle x^{*}|\psi\rangle\rvert \geq 1/\sqrt{p_{\text{opt}}}$, where $|x^*\rangle$ encodes an optimal  solution to the $N$-bit LABS problem in a computational basis state, and we assume that $p_{\text{opt}} \ge 1/N$. Then, running Algorithm \ref{alg:qaoa_aa} with parameters $M \ge 1/\sqrt{p_\text{opt}}$ and failure probability $\delta$, runs with a gate complexity of 
$O(\text{poly}(N)\log(1/\delta)M)$ and finds $x^{*}$ with probability at least $1 - \delta$.
\end{theorem}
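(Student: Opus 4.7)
The plan is to invoke the generalized quantum minimum-finding strategy of Lemma~48 in Ref.~\onlinecite{van_Apeldoorn_2020}, with Lemma~\ref{lem:eqsearch} supplying the Grover-style primitive, $U_{\text{QAOA}}$ replacing the usual uniform superposition, and $V_{\text{LABS}}$ providing the threshold comparator $m_t$. The outer loop is a standard confidence-amplification wrapper, while the inner while-loop is a Durr--Hoyer-style threshold-descent argument adapted to the non-uniform initial distribution $|\psi\rangle$.

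First I would bound a single $\mathbf{EQSearch}$ call: whenever the current threshold $s_{t-1}$ strictly exceeds $\mathcal{E}_{\text{sidelobe}}(x^*)$, the optimum $x^*$ is itself marked by $m_t$, so the marked-set weight $p_t=\sum_{x:\,m_t(x)=1}|\langle x|\psi\rangle|^2$ satisfies $p_t\ge p_{\text{opt}}$, and Lemma~\ref{lem:eqsearch} bounds the $t$-th call by $C(1/\sqrt{p_t})\log(6\cdot 2^N)$ queries to $U_{\text{QAOA}}$ and $V_{\text{LABS}}$, with failure probability at most $1/(6\cdot 2^N)$. Next I would adapt the Durr--Hoyer descent analysis to $|\psi\rangle$: conditioned on success, $\mathbf{EQSearch}$ returns a marked element with probability proportional to $|\langle x|\psi\rangle|^2$ restricted to the marked set, so with probability at least $1/2$ the returned $s_t$ lies below the $|\psi\rangle$-weighted median of $\mathcal{E}_{\text{sidelobe}}$ on the current marked set, forcing $p_{t+1}\le p_t/2$; iterating, after $O(\log(1/p_{\text{opt}}))=O(\log N)$ halvings the marked set contracts to (essentially) $\{x^*\}$ and the next $\mathbf{EQSearch}$ returns it. Telescoping $\sum_t 1/\sqrt{p_t}$ down to $p_T\approx p_{\text{opt}}$ gives $O(1/\sqrt{p_{\text{opt}}})=O(M)$, and multiplying by the $\log(6\cdot 2^N)=O(N)$ amplification factor yields $O(MN)$ queries per outer iteration, fitting inside the budget $3CMN$. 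A union bound over the $O(\log N)$ inner calls controls their aggregate failure at $\le 1/6$, so a single inner loop ends with the minimum-so-far equal to $x^*$ with probability at least $2/3$.

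The outer loop of $\lceil\log(1/\delta)\rceil$ independent repetitions and the minimum aggregation then push the overall failure below $(1/3)^{\log(1/\delta)}\le\delta$; multiplying the per-iteration $O(MN)$ query count by $\log(1/\delta)$ and absorbing the $\text{poly}(N)$ gate cost of one constant-depth $U_{\text{QAOA}}$ application and of $V_{\text{LABS}}$ yields the advertised $O(\text{poly}(N)\log(1/\delta)M)$ gate complexity. The main obstacle is the weighted-median step: the original Durr--Hoyer argument exploits uniform sampling over the marked set, whereas here $\mathbf{EQSearch}$ samples proportional to the QAOA amplitudes, so the halving statement must be rephrased via the $|\psi\rangle$-weighted median (with ties in $\mathcal{E}_{\text{sidelobe}}$ broken by bitstring order to make the strict inequality in $m_t$ unambiguous), and one must verify that such a median always separates the marked mass into two halves each of weight at most $p_t/2$, regardless of how concentrated the QAOA distribution on the marked set happens to be. Once that weighted-median step is in place, the per-call cost bound, the geometric telescoping, the union bound over inner calls, and the outer boosting combine routinely into the claimed statement.
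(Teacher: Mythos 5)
Your proposal is correct in outline and shares the paper's architecture---\textbf{EQSearch} from Lemma~\ref{lem:eqsearch} with failure probability $1/(6\cdot 2^N)$ per call, a threshold-descent inner loop truncated at $3CMN$ queries via Markov's inequality, and $\lceil\log(1/\delta)\rceil$-fold boosting---but the core bound on the expected query count is obtained by a genuinely different argument. The paper follows Lemma~48 of Ref.~\onlinecite{van_Apeldoorn_2020} directly: it shows each bitstring $x_k$ ever appears as a threshold with probability exactly $P(X=x_k)/P(X\le x_k)$, and then bounds the expected cost by linearity of expectation,
\begin{align*}
\sum_{k} \frac{P(X = x_k)}{P(X \le x_k)}\cdot\frac{2CN}{\sqrt{P(X < x_k)}} \;\le\; 2CN\int_{p_{\text{opt}}}^1 r^{-3/2}\,dr \;=\; O\!\left(\frac{N}{\sqrt{p_{\text{opt}}}}\right),
\end{align*}
with no halving argument anywhere. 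Your weighted-median route is the folklore D\"urr--H\o yer analysis; it is workable here and arguably more intuitive, but it is also where your writeup needs shoring up in two places. First, the obstacle you flag is actually benign: conditioned on success, \textbf{EQSearch} samples the marked set proportionally to $|\langle x|\psi\rangle|^2$, and the set of marked $x$ with $P(X<x\mid \text{marked})>1/2$ always has conditional mass at most $1/2$ no matter how concentrated the distribution is, so the halving probability is at least $1/2$ unconditionally (a very heavy atom only helps, since sampling it excludes it and everything above it). Second, and more substantively, your telescoping step treats the halving as if it occurred every iteration, whereas it occurs only with probability $\ge 1/2$; you need to group iterations into dyadic levels of $p_t$, observe that the number of iterations at each level is dominated by a geometric random variable with mean $2$ (the threshold still strictly decreases on non-halving steps, so the per-step cost at a level never grows), and only then sum $\sum_j 2^{j/2}=O(1/\sqrt{p_{\text{opt}}})$ over the $O(\log(1/p_{\text{opt}}))=O(\log N)$ levels. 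The paper's occupancy-probability computation buys a one-line integral bound that sidesteps this bookkeeping entirely; your version buys a more transparent picture of \emph{why} the descent is fast. With those two repairs the rest of your accounting (per-call cost $C\log(6\cdot2^N)/\sqrt{p_t}\le 2CN/\sqrt{p_t}$, union bound to $1/6$ over \textbf{EQSearch} failures, Markov to fit the $3CMN$ budget, and outer amplification) matches the paper and yields the stated complexity.
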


\begin{proof}
    See Supplementary Information.
\end{proof}

\subsection*{Choice of QAOA parameters \boldmath{$\beta, \gamma$}}\label{sec:qaoa_params}

Our strategy for setting the QAOA parameters $\bm\beta$, $\bm\gamma$ used in our experiments is twofold. First, we optimize QAOA parameters for small $N$ using the  FOURIER reparameterization scheme of Ref.~\onlinecite{zhou2020quantum}. Second, we use the optimized parameters for small $N$ to compute fixed QAOA parameters that are then used for larger $N$. To apply the fixed parameters to an instance with a given size $N$, we rescale the parameters $\bm\gamma$ by $N$.
We discuss the parameter optimization scheme and the parameter rescaling in the Supplementary Information.
We note that the results presented above can be improved if better parameter setting strategies are used.

The procedure for obtaining the set of fixed QAOA parameters is visualized in Fig.~\ref{fig:mean_parameters}. Specifically, we optimize QAOA parameters for a set of small instances with sizes $\{N_j\}_{j=1}^M$ attainable in simulation and set the fixed parameters to be the mean over the optimized parameters: 
\begin{align}
    \bm\beta^{\text{Fixed}} & = \frac{1}{M}\sum_{j = 1}^M \bm\beta^{*}_{N_j}, \\ 
    \bm\gamma^{\text{Fixed}} & = \frac{1}{M}\sum_{j = 1}^M N_j\bm\gamma^{*}_{N_j},
\end{align}
where $\bm\beta^{*}_{N_j}$, $\bm\gamma^{*}_{N_j}$ are the QAOA parameters optimized for the LABS instance of size $N_j$ and $M$ is the number of optimized instances. Then the parameters used in QAOA for size $N$ are given by $\bm\beta^{\text{Fixed}}, \frac{ \bm\gamma^{\text{Fixed}}}{N}$. We use $24\leq N_j\leq 31$ ($M=8$). 

\subsection*{Error detection by symmetry verification}

The error detection scheme relies on the symmetry of phase operator defined by Eq.~\ref{eq:labs_en}. As it commutes with both $\otimes_i^N \zgate_i$ and $\otimes_i^N \xgate_i$ operators, one can measure the value of these operators and perform postselection on the measurement outcomes. That is, the state after the phase operator should have the same $\zgate$ and $\xgate$ parity as before it. In the presence of an odd number of bit flip or phase flip errors that occur during the implementation of phase operators, the resulting state will not be in the +1 eigenspace of the two syndrome operators. 

Experimentally, we divide the whole phase operator into $m$ splits such that each split has approximately the same number of two-qubit gates, and we perform syndrome checks at the end of each split to detect errors. The syndrome operators are mapped to ancillary qubits via sequential controlled-$\xgate$ or controlled-$\zgate$ gates and Hadamard gates applied before and after the partial phase operator.
Since the number of two-qubit gates for the phase operators is higher than the number of gates used for the mapping by 2--3 orders of magnitude, additional errors introduced by ancillae  are negligible.
Furthermore, the crosstalk error probability during mid-circuit measurements is on the order of $10^{-5}$, considerably lower than the typical two-qubit-gate infidelity of $2\times 10^{-3}$ for the trapped-ion systems we used~\cite{Moses2023RaceTrack}. As a result, our error detection scheme leads to large improvements in QAOA performance on hardware at the cost of the number of repetitions growing exponentially with the number of checks~\cite{Gonzales2023}.
We note that the performance of the error detection scheme can be further improved by implementing parity checks using fault-tolerant constructions~\cite{Self_arxiv2022protecting}.

\subsection*{Scaling of classical solvers}

All scaling coefficients are obtained by fitting a least-squares linear regression on the logarithm of TTS. The confidence intervals on the scaling coefficients are obtained by using the Student's $t$-distribution and are reported with $95\%$ confidence.

We use commercial state-of-the-art branch-and-bound solvers in numerical experiments. Figure~\ref{fig:Fig1_scheme_summary}B and Table~\ref{tab:summary} show results obtained using Gurobi~\cite{Gurobi}, although we obtain similar results for CPLEX~\cite{cplex2021v20.1} (see the Supplementary Information). The use of commercial branch-and-bound solvers is motivated by the observation that their scaling closely matches that reported in Ref.~\onlinecite{Packebusch2016}. Specifically, we observe that for both solvers the time to produce a certificate of optimality (TTO) scales with an exponent within  a $95\%$ confidence interval of the $1.73$ exponent reported in Ref.~\onlinecite{Packebusch2016}. We note that unlike the solver presented in Ref.~\onlinecite{Packebusch2016}, commercial solvers are not parallelizable and can take advantage of only one CPU with at most tens of cores. Since QAOA is a heuristic and does not guarantee optimality, we additionally run branch-and-bound solvers until a solution with an exactly optimal merit factor is found,
at which point the execution is stopped. The resulting TTS scales more favorably: for Gurobi, the scaling is $1.615^N$, with a $95\%$ confidence interval of $(1.571,1.659)$. All the numbers reported correspond to the mean CPU time, with the mean taken over 100 random seeds for $N\leq 32$ and 10 random seeds for $N>32$. We present additional details of classical solver benchmarking in the Supplementary Information.

Branch-and-bound algorithms are the best-known exact solvers for the LABS problem. In the regime where proving optimality is out of reach and the goal is simply to efficiently obtain sequences with high merit factors, heuristic algorithms are preferable. The best runtimes and runtime scaling reported in the literature~\cite{Gallardo2009} are from an algorithm known as Memetic Tabu. Memetic Tabu is a \emph{memetic algorithm}, that is, an evolutionary algorithm augmented by local search. Specifically, an evolutionary algorithm is used to find initializations for tabu search, a metaheuristic that augments local neighborhood search with a data structure (known as the tabu list) that filters possible 
local moves if the potential solutions have been recently visited or diversification rules are violated~\cite{Glover:TabuSearch}. In terms of the runtime required to find optimal solutions in the regime where exact solutions have been found using branch-and-bound methods~\cite{Packebusch2016}, Memetic Tabu has been observed to outperform both non-evolutionary methods as well as memetic algorithms that use simpler neighborhood search schemes such as steepest descent. To verify the scaling of tabu search on the regime of interest for comparison with QAOA, we use the implementation of Memetic Tabu in Ref.~\onlinecite{OPUS2-git_labs-Boskovic}. For each length, we average the runtime over 50 random seeds, obtaining the scaling of the time-to-solution of $1.35^N$ with a $95\%$ confidence interval of $(1.33,1.38)$. This scaling closely matches the one reported in Ref.~\onlinecite{Bokovi2017}. We also note that solvers based on self-avoiding random walks~\cite{OPUS2-git_labs-Boskovic} have been shown to be competitive with or outperform Memetic Tabu when the task is to find skew-symmetric sequences with the lowest autocorrelation. These solvers are specialized to search for skew-symmetric sequences and do not naturally extend to the unrestricted LABS problem.

\subsection*{High-performance simulation of QAOA}

Our numerical results are enabled by a custom scalable high-performance algorithm-specific QAOA simulator. We briefly describe the simulator here; for additional details and benchmarks comparing the developed simulator with the state-of-the-art methods for simulating QAOA the reader is referred to Ref.~\onlinecite{FUR}.

In this work, the main goal of the numerical simulation of QAOA is to evaluate the expectation of the cost Hamiltonian $\langle C\rangle_{\text{MF}}$ and the success probability $\popt$. Since $\popt$ is exponentially small, it has to be evaluated with high precision. While many techniques can be leveraged for exact simulation, we opt to directly simulate the full quantum state as it is propagated through the QAOA circuit. We note in particular that tensor network techniques do not provide a benefit in this case since the circuit we simulate is deep and fully connected (see Ref.~\onlinecite{FUR} for detailed comparison). 

First, we leverage the observation that the cost Hamiltonian and hence the phase-separation operator are diagonal. This allows us to precompute the cost function evaluated at every binary input and multiply the exponentiated costs elementwise with the statevector to simulate the application of the phase-separation operator. This operation can be easily parallelized since it is an elementwise operation local to each element in the statevector. The same precomputed vector of cost function values is used to compute $\langle C\rangle_{\text{MF}}$ by taking the inner product with the final QAOA state. The cost of precomputation is amortized over the large number of objective evaluations performed during parameter optimization and is thereby negligible.

Second, we note that the mixing operator consists of an application of a uniform $\xgate$ rotation applied on each qubit. Therefore, each  rotation operation can be computed by multiplying a fixed $2 \times 2$ unitary matrix with a $2 \times 2^{n-1}$ matrix constructed from reshaping the statevector. This step is parallelized by grouping the pairs of indices on which the $2 \times 2$ unitary is applied.

We perform the simulations on the Polaris supercomputer located in Argonne Leadership Computing Facility. We distribute the simulation to 256 Polaris nodes with four NVIDIA A100 GPUs on each node and one AMD EPYC CPU.
The CPU is used to manage the communication and the assembly of final results. %
Each GPU hosts a chunk of the full statevector and a chunk of the integer cost operator vector. Application of the cost operator does not require any communication since it is local to each element.
The grouping in the mixing operator depends on index $i$ of the operator $\xgate_i$ analogous to the grouping in the fast Walsh--Hadamard transform~\cite{Fino1976}. For $i\leq n - \log_2(1024) = 29$ the pairing is local within each GPU.
For $i>29$ we use CUDA-enabled MPI to distribute full chunks between nodes, which requires space to be reserved for two statevector chunks on each GPU.

\section*{Author contributions}

R.~Shaydulin devised the project. J.~Larson, N.~Kumar, and R.~Shaydulin implemented QAOA parameter optimization and the parameter setting schemes. R.~Shaydulin and Y.~Sun implemented the single-node version of the QAOA simulator. D.~Lykov implemented the distributed version of the QAOA simulator and executed the large-scale simulations on Polaris. R.~Shaydulin analyzed the simulation results. D.~Herman and S.~Hu developed the circuit optimization pipeline. C.~Li implemented and analyzed the error detection scheme. M.~DeCross and D.~Herman executed the experiments on trapped-ion hardware, and C.~Li analyzed the results. S.~Chakrabarti, D.~Lykov and P.~Minssen benchmarked classical solvers. S.~Chakrabarti, D.~Herman and R.~Shaydulin analyzed the generalized quantum minimum-finding enhanced with QAOA. J. Dreiling, J.P. Gaebler, T.M. Gatterman, J.A. Gerber, K. Gilmore, D. Gresh, N. Hewitt, C.V. Horst, J. Johansen, M. Matheny, T. Mengle, M. Mills, S.A. Moses, B. Neyenhuis, and P. Siegfried built, optimized, and operated the trapped-ion hardware. M.~Pistoia led the overall project. All authors contributed to technical discussions and the writing of the manuscript.

\begin{acknowledgments}
The authors thank their colleagues at the Global Technology Applied Research center of JPMorgan Chase for support and helpful discussions. Special thanks are also due to Tony Uttley and Jenni Strabley from Quantinuum for their continued support throughout the project. 

This paper was prepared for informational purposes with contributions from the Global Technology Applied Research center of JPMorgan Chase \& Co., Argonne National Laboratory and Quantinuum LLC. This paper is not a product of the Research Department of JPMorgan Chase \& Co. or its affiliates. Neither JPMorgan Chase \& Co. nor any of its affiliates makes any explicit or implied representation or warranty and none of them accept any liability in connection with this position paper, including, without limitation, with respect to the completeness, accuracy, or reliability of the information contained herein and the potential legal, compliance, tax, or accounting effects thereof. This document is not intended as investment research or investment advice, or as a recommendation, offer, or solicitation for the purchase or sale of any security, financial instrument, financial product or service, or to be used in any way for evaluating the merits of participating in any transaction.

The submitted manuscript includes contributions from 
UChicago Argonne, LLC, Operator of Argonne National Laboratory (``Argonne'').
Argonne, a U.S.\ Department of Energy Office of Science laboratory, is operated
under Contract No.\ DE-AC02-06CH11357.  The U.S.\ Government retains for itself,
and others acting on its behalf, a paid-up nonexclusive, irrevocable worldwide
license in said article to reproduce, prepare derivative works, distribute
copies to the public, and perform publicly and display publicly, by or on
behalf of the Government.  The Department of Energy will provide public access
to these results of federally sponsored research in accordance with the DOE
Public Access Plan \url{http://energy.gov/downloads/doe-public-access-plan}.

\subsection*{Funding}
This material is based upon work supported in part by the U.S. Department of Energy, Office of Science, under contract number DE-AC02-06CH11357
and the Office of Science, Office of Advanced Scientific Computing Research, Accelerated Research for Quantum Computing program.

\end{acknowledgments}

\section*{Data availability}

The data presented in this paper can be found at \url{https://doi.org/10.5281/zenodo.8190275}.
\vspace{0.2in} %
\section*{Code availability}

The code used to produce the results in this paper can be found at \url{https://zenodo.org/doi/10.5281/zenodo.10935810}. The latest version can be found at \url{https://github.com/jpmorganchase/QOKit}.

\bibliographystyle{unsrturl}
\bibliography{ms}

\begin{thebibliography}{10}

\bibitem{quant-ph/9607014}
{Christoph D\"{u}rr} and {Peter H\o{}yer}.
\newblock A quantum algorithm for finding the minimum.
\newblock {\em arXiv:quant-ph/9607014}, 1996.
\newblock \href {https://doi.org/10.48550/arXiv.quant-ph/9607014}
  {\path{doi:10.48550/arXiv.quant-ph/9607014}}.

\bibitem{montanaro2018quantum}
Ashley Montanaro.
\newblock Quantum-walk speedup of backtracking algorithms.
\newblock {\em Theory Of Computing}, 14(15):1--24, 2018.
\newblock \href {https://doi.org/10.4086/toc.2018.v014a015}
  {\path{doi:10.4086/toc.2018.v014a015}}.

\bibitem{montanaro2020quantum}
Ashley Montanaro.
\newblock Quantum speedup of branch-and-bound algorithms.
\newblock {\em Physical Review Research}, 2(1):013056, 2020.
\newblock \href {https://doi.org/10.1103/PhysRevResearch.2.013056}
  {\path{doi:10.1103/PhysRevResearch.2.013056}}.

\bibitem{2210.03210}
Shouvanik Chakrabarti, Pierre Minssen, Romina Yalovetzky, and Marco Pistoia.
\newblock Universal quantum speedup for branch-and-bound, branch-and-cut, and
  tree-search algorithms.
\newblock {\em arXiv:2210.03210}, 2022.
\newblock \href {https://doi.org/10.48550/ARXIV.2210.03210}
  {\path{doi:10.48550/ARXIV.2210.03210}}.

\bibitem{Somma2008}
R.~D. Somma, S.~Boixo, H.~Barnum, and E.~Knill.
\newblock Quantum simulations of classical annealing processes.
\newblock {\em Physical Review Letters}, 101(13), September 2008.
\newblock \href {https://doi.org/10.1103/physrevlett.101.130504}
  {\path{doi:10.1103/physrevlett.101.130504}}.

\bibitem{Wocjan2008}
Pawel Wocjan and Anura Abeyesinghe.
\newblock Speedup via quantum sampling.
\newblock {\em Physical Review A}, 78(4), October 2008.
\newblock \href {https://doi.org/10.1103/physreva.78.042336}
  {\path{doi:10.1103/physreva.78.042336}}.

\bibitem{Hastings2018}
M.~B. Hastings.
\newblock A short path quantum algorithm for exact optimization.
\newblock {\em Quantum}, 2:78, July 2018.
\newblock \href {https://doi.org/10.22331/q-2018-07-26-78}
  {\path{doi:10.22331/q-2018-07-26-78}}.

\bibitem{2212.01513}
Alexander~M. Dalzell, Nicola Pancotti, Earl~T. Campbell, and Fernando~G.S.L.
  Brand{\~{a}}o.
\newblock Mind the gap: Achieving a super-grover quantum speedup by jumping to
  the end.
\newblock In {\em Proceedings of the {ACM} Symposium on Theory of Computing},
  page 1131–1144, jun 2023.
\newblock \href {https://doi.org/10.1145/3564246.3585203}
  {\path{doi:10.1145/3564246.3585203}}.

\bibitem{Hogg2000}
Tad Hogg and Dmitriy Portnov.
\newblock Quantum optimization.
\newblock {\em Information Sciences}, 128(3–4):181–197, October 2000.
\newblock \href {https://doi.org/10.1016/s0020-0255(00)00052-9}
  {\path{doi:10.1016/s0020-0255(00)00052-9}}.

\bibitem{farhi2014quantum}
Edward Farhi, Jeffrey Goldstone, and Sam Gutmann.
\newblock A quantum approximate optimization algorithm.
\newblock {\em arXiv:1411.4028}, 2014.
\newblock \href {https://doi.org/10.48550/arXiv.1411.4028}
  {\path{doi:10.48550/arXiv.1411.4028}}.

\bibitem{zhou2020quantum}
Leo Zhou, Sheng-Tao Wang, Soonwon Choi, Hannes Pichler, and Mikhail~D. Lukin.
\newblock Quantum approximate optimization algorithm: Performance, mechanism,
  and implementation on near-term devices.
\newblock {\em Physical Review X}, 10:021067, 2020.
\newblock \href {https://doi.org/10.1103/PhysRevX.10.021067}
  {\path{doi:10.1103/PhysRevX.10.021067}}.

\bibitem{lipics.tqc.2022.7}
Joao Basso, Edward Farhi, Kunal Marwaha, Benjamin Villalonga, and Leo Zhou.
\newblock The quantum approximate optimization algorithm at high depth for
  {MaxCut} on large-girth regular graphs and the {Sherrington--Kirkpatrick}
  model.
\newblock {\em Proceedings of the Conference on the Theory of Quantum
  Computation, Communication and Cryptography}, 7:1--21, 2022.
\newblock \href {https://doi.org/10.4230/LIPICS.TQC.2022.7}
  {\path{doi:10.4230/LIPICS.TQC.2022.7}}.

\bibitem{2208.06909}
Sami Boulebnane and Ashley Montanaro.
\newblock Solving {Boolean} satisfiability problems with the quantum
  approximate optimization algorithm.
\newblock {\em arXiv:2208.06909}, 2022.
\newblock \href {https://doi.org/10.48550/arXiv.2208.06909}
  {\path{doi:10.48550/arXiv.2208.06909}}.

\bibitem{Sureshbabu2023}
Shree~Hari Sureshbabu, Dylan Herman, Ruslan Shaydulin, Joao Basso, Shouvanik
  Chakrabarti, Yue Sun, and Marco Pistoia.
\newblock Parameter setting in quantum approximate optimization of weighted
  problems.
\newblock {\em arXiv:2305.15201}, 2023.
\newblock \href {https://doi.org/10.48550/arXiv.2305.15201}
  {\path{doi:10.48550/arXiv.2305.15201}}.

\bibitem{Boehmer1967}
A.~Boehmer.
\newblock Binary pulse compression codes.
\newblock {\em {IEEE} Transactions on Information Theory}, 13(2):156--167,
  April 1967.
\newblock \href {https://doi.org/10.1109/tit.1967.1053969}
  {\path{doi:10.1109/tit.1967.1053969}}.

\bibitem{Schroeder1970}
M.~Schroeder.
\newblock Synthesis of low-peak-factor signals and binary sequences with low
  autocorrelation (corresp.).
\newblock {\em {IEEE} Transactions on Information Theory}, 16(1):85--89,
  January 1970.
\newblock \href {https://doi.org/10.1109/tit.1970.1054411}
  {\path{doi:10.1109/tit.1970.1054411}}.

\bibitem{Bernasconi1987}
J.~Bernasconi.
\newblock Low autocorrelation binary sequences : statistical mechanics and
  configuration space analysis.
\newblock {\em Journal de Physique}, 48(4):559--567, 1987.
\newblock \href {https://doi.org/10.1051/jphys:01987004804055900}
  {\path{doi:10.1051/jphys:01987004804055900}}.

\bibitem{cond-mat/9707104}
Stephan Mertens and Christine Bessenrodt.
\newblock On the ground states of the {Bernasconi} model.
\newblock {\em Journal of Physics A: Mathematical and General},
  31(16):3731--3749, apr 1998.
\newblock \href {https://doi.org/10.1088/0305-4470/31/16/004}
  {\path{doi:10.1088/0305-4470/31/16/004}}.

\bibitem{Golay1977}
M.~Golay.
\newblock Sieves for low autocorrelation binary sequences.
\newblock {\em {IEEE} Transactions on Information Theory}, 23(1):43--51,
  January 1977.
\newblock \href {https://doi.org/10.1109/tit.1977.1055653}
  {\path{doi:10.1109/tit.1977.1055653}}.

\bibitem{OPUS2-git_labs-Boskovic}
B.~Bo\v{s}kovi\'{c}, F.~Brglez, and J.~Brest.
\newblock {A GitHub Archive for Solvers and Solutions of the {\tt LABS}
  problem}, January 2016.
\newblock URL: \url{https://github.com/borkob/git_labs}.

\bibitem{Packebusch2016}
Tom Packebusch and Stephan Mertens.
\newblock Low autocorrelation binary sequences.
\newblock {\em Journal of Physics A: Mathematical and Theoretical},
  49(16):165001, March 2016.
\newblock \href {https://doi.org/10.1088/1751-8113/49/16/165001}
  {\path{doi:10.1088/1751-8113/49/16/165001}}.

\bibitem{FUR}
Danylo Lykov, Ruslan Shaydulin, Yue Sun, Yuri Alexeev, and Marco Pistoia.
\newblock Fast simulation of high-depth {QAOA} circuits.
\newblock In {\em Proceedings of the SC ’23 Workshops of The International
  Conference on High Performance Computing, Network, Storage, and Analysis},
  SC-W 2023, page 1443–1451. ACM, November 2023.
\newblock \href {https://doi.org/10.1145/3624062.3624216}
  {\path{doi:10.1145/3624062.3624216}}.

\bibitem{Gallardo2009}
Jos{\'{e}}~E. Gallardo, Carlos Cotta, and Antonio~J. Fern{\'{a}}ndez.
\newblock Finding low autocorrelation binary sequences with memetic algorithms.
\newblock {\em Applied Soft Computing}, 9(4):1252--1262, September 2009.
\newblock \href {https://doi.org/10.1016/j.asoc.2009.03.005}
  {\path{doi:10.1016/j.asoc.2009.03.005}}.

\bibitem{Lemieux2020}
Jessica Lemieux, Bettina Heim, David Poulin, Krysta Svore, and Matthias Troyer.
\newblock Efficient quantum walk circuits for {Metropolis--Hastings} algorithm.
\newblock {\em Quantum}, 4:287, June 2020.
\newblock \href {https://doi.org/10.22331/q-2020-06-29-287}
  {\path{doi:10.22331/q-2020-06-29-287}}.

\bibitem{Boixo2015}
S.~Boixo, G.~Ortiz, and R.~Somma.
\newblock Fast quantum methods for optimization.
\newblock {\em The European Physical Journal Special Topics}, 224(1):35--49,
  February 2015.
\newblock \href {https://doi.org/10.1140/epjst/e2015-02341-5}
  {\path{doi:10.1140/epjst/e2015-02341-5}}.

\bibitem{Pino2021}
J.~M. Pino, J.~M. Dreiling, C.~Figgatt, J.~P. Gaebler, S.~A. Moses, M.~S.
  Allman, C.~H. Baldwin, M.~Foss-Feig, D.~Hayes, K.~Mayer, C.~Ryan-Anderson,
  and B.~Neyenhuis.
\newblock Demonstration of the trapped-ion quantum {CCD} computer architecture.
\newblock {\em Nature}, 592(7853):209--213, April 2021.
\newblock \href {https://doi.org/10.1038/s41586-021-03318-4}
  {\path{doi:10.1038/s41586-021-03318-4}}.

\bibitem{Moses2023RaceTrack}
S.~A. Moses et~al.
\newblock A race track trapped-ion quantum processor.
\newblock {\em arXiv:2305.03828}, 2023.
\newblock \href {https://doi.org/10.48550/arXiv.2305.03828}
  {\path{doi:10.48550/arXiv.2305.03828}}.

\bibitem{Gonzales2023}
Alvin Gonzales, Ruslan Shaydulin, Zain~H. Saleem, and Martin Suchara.
\newblock Quantum error mitigation by {Pauli} check sandwiching.
\newblock {\em Scientific Reports}, 13(1), February 2023.
\newblock \href {https://doi.org/10.1038/s41598-023-28109-x}
  {\path{doi:10.1038/s41598-023-28109-x}}.

\bibitem{Debroy2020}
Dripto~M. Debroy and Kenneth~R. Brown.
\newblock Extended flag gadgets for low-overhead circuit verification.
\newblock {\em Physical Review A}, 102(5):052409, November 2020.
\newblock \href {https://doi.org/10.1103/physreva.102.052409}
  {\path{doi:10.1103/physreva.102.052409}}.

\bibitem{Bokovi2017}
Borko Bo{\v{s}}kovi{\'{c}}, Franc Brglez, and Janez Brest.
\newblock Low-autocorrelation binary sequences: On improved merit factors and
  runtime predictions to achieve them.
\newblock {\em Applied Soft Computing}, 56:262--285, July 2017.
\newblock \href {https://doi.org/10.1016/j.asoc.2017.02.024}
  {\path{doi:10.1016/j.asoc.2017.02.024}}.

\bibitem{Sanders2020}
Yuval~R. Sanders, Dominic~W. Berry, Pedro~C.S. Costa, Louis~W. Tessler, Nathan
  Wiebe, Craig Gidney, Hartmut Neven, and Ryan Babbush.
\newblock Compilation of fault-tolerant quantum heuristics for combinatorial
  optimization.
\newblock {\em {PRX} Quantum}, 1(2):020312, November 2020.
\newblock \href {https://doi.org/10.1103/prxquantum.1.020312}
  {\path{doi:10.1103/prxquantum.1.020312}}.

\bibitem{van_Apeldoorn_2020}
Joran van Apeldoorn, Andr{\'{a} }s Gily{\'{e}}n, Sander Gribling, and Ronald
  de~Wolf.
\newblock Quantum {SDP}-solvers: Better upper and lower bounds.
\newblock {\em Quantum}, 4:230, feb 2020.
\newblock \href {https://doi.org/10.22331/q-2020-02-14-230}
  {\path{doi:10.22331/q-2020-02-14-230}}.

\bibitem{Boyer1998tight}
Michel Boyer, Gilles Brassard, Peter Høyer, and Alain Tapp.
\newblock Tight bounds on quantum searching.
\newblock {\em Fortschritte der Physik}, 46(4-5):493--505, 1998.
\newblock \href
  {https://doi.org/10.1002/(sici)1521-3978(199806)46:4/5<493::aid-prop493>3.0.co;2-p}
  {\path{doi:10.1002/(sici)1521-3978(199806)46:4/5<493::aid-prop493>3.0.co;2-p}}.

\bibitem{2303.02064}
Ruslan Shaydulin and Marco Pistoia.
\newblock {QAOA} with {$N\cdot p\geq 200$}.
\newblock {\em arXiv:2303.02064}, 2023.
\newblock \href {https://doi.org/10.48550/arXiv.2303.02064}
  {\path{doi:10.48550/arXiv.2303.02064}}.

\bibitem{He2023AlignmentQAOA}
Zichang He, Ruslan Shaydulin, Shouvanik Chakrabarti, Dylan Herman, Changhao Li,
  Yue Sun, and Marco Pistoia.
\newblock Alignment between initial state and mixer improves {QAOA} performance
  for constrained portfolio optimization.
\newblock {\em arXiv:2305.03857}, 2023.
\newblock \href {https://doi.org/10.48550/arXiv.2305.03857}
  {\path{doi:10.48550/arXiv.2305.03857}}.

\bibitem{2306.03238}
Elijah Pelofske, Andreas Bärtschi, John Golden, and Stephan Eidenbenz.
\newblock High-round {QAOA} for max $k$-sat on trapped ion {NISQ} devices.
\newblock In {\em 2023 IEEE International Conference on Quantum Computing and
  Engineering (QCE)}, volume~01, pages 506--517, 2023.
\newblock \href {https://doi.org/10.1109/QCE57702.2023.00064}
  {\path{doi:10.1109/QCE57702.2023.00064}}.

\bibitem{Pelofske2023}
Elijah Pelofske, Andreas B\"{a}rtschi, and Stephan Eidenbenz.
\newblock Quantum annealing vs. {QAOA}: 127 qubit higher-order ising problems
  on~{NISQ} computers.
\newblock In {\em Lecture Notes in Computer Science}, pages 240--258. Springer
  Nature Switzerland, 2023.
\newblock \href {https://doi.org/10.1007/978-3-031-32041-5_13}
  {\path{doi:10.1007/978-3-031-32041-5_13}}.

\bibitem{Niroula2022}
Pradeep Niroula, Ruslan Shaydulin, Romina Yalovetzky, Pierre Minssen, Dylan
  Herman, Shaohan Hu, and Marco Pistoia.
\newblock Constrained quantum optimization for extractive summarization on a
  trapped-ion quantum computer.
\newblock {\em Scientific Reports}, 12(1), October 2022.
\newblock \href {https://doi.org/10.1038/s41598-022-20853-w}
  {\path{doi:10.1038/s41598-022-20853-w}}.

\bibitem{Shaydulin2021EM}
Ruslan Shaydulin and Alexey Galda.
\newblock Error mitigation for deep quantum optimization circuits by leveraging
  problem symmetries.
\newblock In {\em 2021 IEEE International Conference on Quantum Computing and
  Engineering (QCE)}, pages 291--300, 2021.
\newblock \href {https://doi.org/10.1109/QCE52317.2021.00046}
  {\path{doi:10.1109/QCE52317.2021.00046}}.

\bibitem{Kakkar2022}
Ashish Kakkar, Jeffrey Larson, Alexey Galda, and Ruslan Shaydulin.
\newblock Characterizing error mitigation by symmetry verification in {QAOA}.
\newblock In {\em IEEE International Conference on Quantum Computing and
  Engineering}, September 2022.
\newblock \href {https://doi.org/10.1109/QCE53715.2022.00086}
  {\path{doi:10.1109/QCE53715.2022.00086}}.

\bibitem{2110.10685}
Sami Boulebnane and Ashley Montanaro.
\newblock Predicting parameters for the quantum approximate optimization
  algorithm for max-cut from the infinite-size limit.
\newblock {\em arXiv:2110.10685}, 2021.
\newblock \href {https://doi.org/10.48550/arXiv.2110.10685}
  {\path{doi:10.48550/arXiv.2110.10685}}.

\bibitem{Basso2022}
Joao Basso, David Gamarnik, Song Mei, and Leo Zhou.
\newblock Performance and limitations of the qaoa at constant levels on large
  sparse hypergraphs and spin glass models.
\newblock In {\em 2022 IEEE 63rd Annual Symposium on Foundations of Computer
  Science (FOCS)}, pages 335--343, 2022.
\newblock \href {https://doi.org/10.1109/FOCS54457.2022.00039}
  {\path{doi:10.1109/FOCS54457.2022.00039}}.

\bibitem{Babbush2021}
Ryan Babbush, Jarrod~R. McClean, Michael Newman, Craig Gidney, Sergio Boixo,
  and Hartmut Neven.
\newblock Focus beyond quadratic speedups for error-corrected quantum
  advantage.
\newblock {\em {PRX} Quantum}, 2(1):010103, March 2021.
\newblock \href {https://doi.org/10.1103/prxquantum.2.010103}
  {\path{doi:10.1103/prxquantum.2.010103}}.

\bibitem{Brassard_2002}
Gilles Brassard, Peter H{\o}yer, Michele Mosca, and Alain Tapp.
\newblock Quantum amplitude amplification and estimation.
\newblock {\em Contemporary Mathematics}, pages 53--74, 2002.
\newblock \href {https://doi.org/10.1090/conm/305/05215}
  {\path{doi:10.1090/conm/305/05215}}.

\bibitem{Self_arxiv2022protecting}
Chris~N. Self, Marcello Benedetti, and David Amaro.
\newblock Protecting expressive circuits with a quantum error detection code.
\newblock {\em arXiv:2211.06703}, 2022.
\newblock \href {https://doi.org/10.48550/arXiv.2211.06703}
  {\path{doi:10.48550/arXiv.2211.06703}}.

\bibitem{Gurobi}
Gurobi Optimization, \url{www.gurobi.com}.

\bibitem{cplex2021v20.1}
{IBM ILOG CPLEX}.
\newblock V20.1: User's manual for cplex.
\newblock {\em International Business Machines Corporation}.

\bibitem{Glover:TabuSearch}
Fred Glover and Manuel Laguna.
\newblock {\em Tabu Search}.
\newblock Kluwer Academic Publishers, Norwell, MA, USA, 1997.
\newblock \href {https://doi.org/10.1007/978-1-4615-6089-0}
  {\path{doi:10.1007/978-1-4615-6089-0}}.

\bibitem{Fino1976}
Fino and Algazi.
\newblock Unified matrix treatment of the fast {Walsh--Hadamard} transform.
\newblock {\em {IEEE} Transactions on Computers}, C-25(11):1142--1146, November
  1976.
\newblock \href {https://doi.org/10.1109/tc.1976.1674569}
  {\path{doi:10.1109/tc.1976.1674569}}.

\bibitem{Golay1972}
M.~Golay.
\newblock A class of finite binary sequences with alternate auto-correlation
  values equal to zero (corresp.).
\newblock {\em {IEEE} Transactions on Information Theory}, 18(3):449--450, May
  1972.
\newblock \href {https://doi.org/10.1109/tit.1972.1054797}
  {\path{doi:10.1109/tit.1972.1054797}}.

\bibitem{Golay1990}
M.J.E. Golay and D.B. Harris.
\newblock A new search for skewsymmetric binary sequences with optimal merit
  factors.
\newblock {\em {IEEE} Transactions on Information Theory}, 36(5):1163--1166,
  1990.
\newblock \href {https://doi.org/10.1109/18.57219}
  {\path{doi:10.1109/18.57219}}.

\bibitem{beenker1985binary}
GFM Beenker, TACM Claasen, and PWC Hermens.
\newblock Binary sequences with a maximally flat amplitude spectrum.
\newblock {\em Philips Journal of Research}, 40(5):289--304, 1985.

\bibitem{Jedwab2013}
Jonathan Jedwab, Daniel~J. Katz, and Kai-Uwe Schmidt.
\newblock Advances in the merit factor problem for binary sequences.
\newblock {\em Journal of Combinatorial Theory, Series A}, 120(4):882--906, May
  2013.
\newblock \href {https://doi.org/10.1016/j.jcta.2013.01.010}
  {\path{doi:10.1016/j.jcta.2013.01.010}}.

\bibitem{Golay1982}
M.~Golay.
\newblock The merit factor of long low autocorrelation binary sequences
  (corresp.).
\newblock {\em {IEEE} Transactions on Information Theory}, 28(3):543--549, May
  1982.
\newblock \href {https://doi.org/10.1109/tit.1982.1056505}
  {\path{doi:10.1109/tit.1982.1056505}}.

\bibitem{Ferreira2000}
Fernando~F Ferreira, José~F Fontanari, and Peter~F Stadler.
\newblock Landscape statistics of the low-autocorrelation binary string
  problem.
\newblock {\em Journal of Physics A: Mathematical and General},
  33(48):8635–8647, November 2000.
\newblock \href {https://doi.org/10.1088/0305-4470/33/48/304}
  {\path{doi:10.1088/0305-4470/33/48/304}}.

\bibitem{Dimitrov2020}
Miroslav Dimitrov, Tsonka Baitcheva, and Nikolay Nikolov.
\newblock On the generation of long binary sequences with record-breaking psl
  values.
\newblock {\em IEEE Signal Processing Letters}, 27:1904–1908, 2020.
\newblock \href {https://doi.org/10.1109/lsp.2020.3031463}
  {\path{doi:10.1109/lsp.2020.3031463}}.

\bibitem{Brest2021}
Janez Brest and Borko Boskovic.
\newblock Low autocorrelation binary sequences: Best-known peak sidelobe level
  values.
\newblock {\em {IEEE} Access}, 9:67713--67723, 2021.
\newblock \href {https://doi.org/10.1109/access.2021.3077541}
  {\path{doi:10.1109/access.2021.3077541}}.

\bibitem{bovskovic2021two}
Borko Bo{\v{s}}kovi{\'c} and Janez Brest.
\newblock Two-phase optimization of binary sequences with low peak sidelobe
  level value.
\newblock {\em arXiv:2107.09801}, 2021.
\newblock \href {https://doi.org/10.48550/arXiv.2107.09801}
  {\path{doi:10.48550/arXiv.2107.09801}}.

\bibitem{Lindner1975}
J.~Lindner.
\newblock Binary sequences up to length 40 with best possible autocorrelation
  function.
\newblock {\em Electronics Letters}, 11(21):507, 1975.
\newblock \href {https://doi.org/10.1049/el:19750391}
  {\path{doi:10.1049/el:19750391}}.

\bibitem{Mertens1996}
S~Mertens.
\newblock Exhaustive search for low-autocorrelation binary sequences.
\newblock {\em Journal of Physics A: Mathematical and General},
  29(18):L473–L481, September 1996.
\newblock \href {https://doi.org/10.1088/0305-4470/29/18/005}
  {\path{doi:10.1088/0305-4470/29/18/005}}.

\bibitem{Brglez2003}
Franc Brglez, Xiao~Yu Li, Matthias~F. Stallmann, and Burkhard Militzer.
\newblock Reliable cost predictions for finding optimal solutions to labs
  problem: Evolutionary and alternative algorithms.
\newblock {\em Proc. of The Fifth Int. Workshop on Frontiers in Evolutionary
  Algorithms ({FEA}’2003) under {JCIS}’2003}, (18), September 2003.

\bibitem{10.1007/11889205_51}
Iv{\'a}n Dot{\'u} and Pascal Van~Hentenryck.
\newblock A note on low autocorrelation binary sequences.
\newblock In Fr{\'e}d{\'e}ric Benhamou, editor, {\em Principles and Practice of
  Constraint Programming - CP 2006}, pages 685--689, Berlin, Heidelberg, 2006.
  Springer Berlin Heidelberg.

\bibitem{kratica2012mixed}
Jozef Kratica.
\newblock A mixed integer quadratic programming model for the low
  autocorrelation binary sequence problem.
\newblock {\em Serdica Journal of Computing}, 6(4):385--400, 2013.
\newblock \href {https://doi.org/10.55630/sjc.2012.6.385-400}
  {\path{doi:10.55630/sjc.2012.6.385-400}}.

\bibitem{Prestwich2013}
S.~D. Prestwich.
\newblock Improved branch-and-bound for low autocorrelation binary sequences.
\newblock {\em arXiv:1305.6187}, 2013.
\newblock \href {https://doi.org/10.48550/arXiv.1305.6187}
  {\path{doi:10.48550/arXiv.1305.6187}}.

\bibitem{Brest2018}
Janez Brest and Borko Boskovic.
\newblock A heuristic algorithm for a low autocorrelation binary sequence
  problem with odd length and high merit factor.
\newblock {\em IEEE Access}, 6:4127–4134, 2018.
\newblock \href {https://doi.org/10.1109/access.2018.2789916}
  {\path{doi:10.1109/access.2018.2789916}}.

\bibitem{Gallardo2007}
Jose~E. Gallardo, Carlos Cotta, and Antonio~J. Fernandez.
\newblock A memetic algorithm for the low autocorrelation binary sequence
  problem.
\newblock In {\em Proceedings of the 9th annual conference on Genetic and
  evolutionary computation}, GECCO07. ACM, July 2007.
\newblock \href {https://doi.org/10.1145/1276958.1277195}
  {\path{doi:10.1145/1276958.1277195}}.

\bibitem{1905.07047}
M.~B. Hastings.
\newblock Classical and quantum bounded depth approximation algorithms.
\newblock {\em arXiv:1905.07047}, 2019.
\newblock \href {https://doi.org/10.48550/arXiv.1905.07047}
  {\path{doi:10.48550/arXiv.1905.07047}}.

\bibitem{chou2021limitations}
Chi-Ning Chou, Peter~J Love, Juspreet~Singh Sandhu, and Jonathan Shi.
\newblock Limitations of local quantum algorithms on random max-k-xor and
  beyond.
\newblock {\em arXiv:2108.06049}, 2021.
\newblock \href {https://doi.org/10.48550/arXiv.2108.06049}
  {\path{doi:10.48550/arXiv.2108.06049}}.

\bibitem{2310.01563}
Antares Chen, Neng Huang, and Kunal Marwaha.
\newblock Local algorithms and the failure of log-depth quantum advantage on
  sparse random csps.
\newblock {\em arXiv:2310.01563}, 2023.
\newblock \href {https://doi.org/10.48550/arXiv.2310.01563}
  {\path{doi:10.48550/arXiv.2310.01563}}.

\bibitem{shaydulin2019multistart}
Ruslan Shaydulin, Ilya Safro, and Jeffrey Larson.
\newblock Multistart methods for quantum approximate optimization.
\newblock In {\em IEEE High Performance Extreme Computing Conference}, 2019.
\newblock \href {https://doi.org/10.1109/hpec.2019.8916288}
  {\path{doi:10.1109/hpec.2019.8916288}}.

\bibitem{crooks2018performance}
Gavin~E Crooks.
\newblock Performance of the quantum approximate optimization algorithm on the
  maximum cut problem.
\newblock {\em arXiv:1811.08419}, 2018.
\newblock \href {https://doi.org/10.48550/arXiv.1811.08419}
  {\path{doi:10.48550/arXiv.1811.08419}}.

\bibitem{streif2020training}
Michael Streif and Martin Leib.
\newblock Training the quantum approximate optimization algorithm without
  access to a quantum processing unit.
\newblock {\em Quantum Science and Technology}, 5(3):034008, 2020.
\newblock \href {https://doi.org/10.1088/2058-9565/ab8c2b}
  {\path{doi:10.1088/2058-9565/ab8c2b}}.

\bibitem{Lee2021}
Xinwei Lee, Yoshiyuki Saito, Dongsheng Cai, and Nobuyoshi Asai.
\newblock Parameters fixing strategy for quantum approximate optimization
  algorithm.
\newblock In {\em International Conference on Quantum Computing and
  Engineering}. {IEEE}, October 2021.
\newblock \href {https://doi.org/10.1109/qce52317.2021.00016}
  {\path{doi:10.1109/qce52317.2021.00016}}.

\bibitem{sack2021quantum}
Stefan~H Sack and Maksym Serbyn.
\newblock Quantum annealing initialization of the quantum approximate
  optimization algorithm.
\newblock {\em Quantum}, 5:491, 2021.
\newblock \href {https://doi.org/10.22331/q-2021-07-01-491}
  {\path{doi:10.22331/q-2021-07-01-491}}.

\bibitem{amosy2022iterative}
Ohad Amosy, Tamuz Danzig, Ely Porat, Gal Chechik, and Adi Makmal.
\newblock Iterative-free quantum approximate optimization algorithm using
  neural networks.
\newblock {\em arXiv:2208.09888}, 2022.
\newblock \href {https://doi.org/10.48550/arXiv.2208.09888}
  {\path{doi:10.48550/arXiv.2208.09888}}.

\bibitem{Larkin2022}
Jason Larkin, Mat{\'{\i}}as Jonsson, Daniel Justice, and Gian~Giacomo
  Guerreschi.
\newblock Evaluation of {QAOA} based on the approximation ratio of individual
  samples.
\newblock {\em Quantum Science and Technology}, 7(4):045014, August 2022.
\newblock \href {https://doi.org/10.1088/2058-9565/ac6973}
  {\path{doi:10.1088/2058-9565/ac6973}}.

\bibitem{Lotshaw2021}
Phillip~C. Lotshaw, Travis~S. Humble, Rebekah Herrman, James Ostrowski, and
  George Siopsis.
\newblock Empirical performance bounds for quantum approximate optimization.
\newblock {\em Quantum Information Processing}, 20(12):403, November 2021.
\newblock \href {https://doi.org/10.1007/s11128-021-03342-3}
  {\path{doi:10.1007/s11128-021-03342-3}}.

\bibitem{Li2020}
Li~Li, Minjie Fan, Marc Coram, Patrick Riley, and Stefan Leichenauer.
\newblock Quantum optimization with a novel {Gibbs} objective function and
  ansatz architecture search.
\newblock {\em Physical Review Research}, 2(2):023074, April 2020.
\newblock \href {https://doi.org/10.1103/physrevresearch.2.023074}
  {\path{doi:10.1103/physrevresearch.2.023074}}.

\bibitem{Barkoutsos2020}
Panagiotis~Kl. Barkoutsos, Giacomo Nannicini, Anton Robert, Ivano Tavernelli,
  and Stefan Woerner.
\newblock Improving variational quantum optimization using {CVaR}.
\newblock {\em Quantum}, 4:256, April 2020.
\newblock \href {https://doi.org/10.22331/q-2020-04-20-256}
  {\path{doi:10.22331/q-2020-04-20-256}}.

\bibitem{Shaydulin2021}
Ruslan Shaydulin, Stuart Hadfield, Tad Hogg, and Ilya Safro.
\newblock Classical symmetries and the quantum approximate optimization
  algorithm.
\newblock {\em Quantum Information Processing}, 20(11):359, October 2021.
\newblock \href {https://doi.org/10.1007/s11128-021-03298-4}
  {\path{doi:10.1007/s11128-021-03298-4}}.

\bibitem{Bravyi2020}
Sergey Bravyi, Alexander Kliesch, Robert Koenig, and Eugene Tang.
\newblock Obstacles to variational quantum optimization from symmetry
  protection.
\newblock {\em Physical Review Letters}, 125(26):260505, December 2020.
\newblock \href {https://doi.org/10.1103/physrevlett.125.260505}
  {\path{doi:10.1103/physrevlett.125.260505}}.

\bibitem{Wang2018}
Zhihui Wang, Stuart Hadfield, Zhang Jiang, and Eleanor~G. Rieffel.
\newblock Quantum approximate optimization algorithm for {MaxCut}: A fermionic
  view.
\newblock {\em Physical Review A}, 97(2):022304, 2018.
\newblock \href {https://doi.org/10.1103/physreva.97.022304}
  {\path{doi:10.1103/physreva.97.022304}}.

\bibitem{NLOpt}
Steven~G. Johnson.
\newblock The {NLopt} nonlinear-optimization package, 2022.
\newblock \url{http://github.com/stevengj/nlopt}.

\bibitem{powell2009bobyqa}
Michael~JD Powell.
\newblock The {BOBYQA} algorithm for bound constrained optimization without
  derivatives.
\newblock {\em Cambridge NA Report NA2009/06, University of Cambridge,
  Cambridge}, 26, 2009.
\newblock URL:
  \url{https://www.damtp.cam.ac.uk/user/na/NA_papers/NA2009_06.pdf}.

\bibitem{10.1145/3584706}
Ruslan Shaydulin, Phillip~C. Lotshaw, Jeffrey Larson, James Ostrowski, and
  Travis~S. Humble.
\newblock Parameter transfer for quantum approximate optimization of weighted
  {MaxCut}.
\newblock {\em ACM Transactions on Quantum Computing}, 4(3):1--15, feb 2023.
\newblock \href {https://doi.org/10.1145/3584706} {\path{doi:10.1145/3584706}}.

\bibitem{2212.01857}
Phillip~C. Lotshaw, George Siopsis, James Ostrowski, Rebekah Herrman, Rizwanul
  Alam, Sarah Powers, and Travis~S. Humble.
\newblock Approximate {Boltzmann} distributions in quantum approximate
  optimization, 2022.
\newblock \href {https://doi.org/10.48550/arXiv.2212.01857}
  {\path{doi:10.48550/arXiv.2212.01857}}.

\bibitem{Akshay2022}
V.~Akshay, H.~Philathong, E.~Campos, D.~Rabinovich, I.~Zacharov, Xiao-Ming
  Zhang, and J.~D. Biamonte.
\newblock Circuit depth scaling for quantum approximate optimization.
\newblock {\em Physical Review A}, 106(4):042438, October 2022.
\newblock \href {https://doi.org/10.1103/physreva.106.042438}
  {\path{doi:10.1103/physreva.106.042438}}.

\bibitem{2211.11287}
Josua Unger, Anette Messinger, Benjamin~E. Niehoff, Michael Fellner, and
  Wolfgang Lechner.
\newblock Low-depth circuit implementation of parity constraints for quantum
  optimization.
\newblock {\em arXiv:2211.11287}, 2022.
\newblock \href {https://doi.org/10.48550/2211.11287}
  {\path{doi:10.48550/2211.11287}}.

\bibitem{Sivarajah2020}
Seyon Sivarajah, Silas Dilkes, Alexander Cowtan, Will Simmons, Alec Edgington,
  and Ross Duncan.
\newblock t$\vert$ket${\rangle}$: a retargetable compiler for {NISQ} devices.
\newblock {\em Quantum Science and Technology}, 6(1):014003, November 2020.
\newblock \href {https://doi.org/10.1088/2058-9565/ab8e92}
  {\path{doi:10.1088/2058-9565/ab8e92}}.

\bibitem{Ayanzadeh2023}
Ramin Ayanzadeh, Narges Alavisamani, Poulami Das, and Moinuddin Qureshi.
\newblock {FrozenQubits}: Boosting fidelity of {QAOA} by skipping hotspot
  nodes.
\newblock In {\em International Conference on Architectural Support for
  Programming Languages and Operating Systems}, page 311–324. {ACM}, January
  2023.
\newblock \href {https://doi.org/10.1145/3575693.3575741}
  {\path{doi:10.1145/3575693.3575741}}.

\end{thebibliography}

\onecolumngrid
\newpage
\section*{Supplementary Materials for\\ Evidence of Scaling Advantage for the Quantum Approximate \\ Optimization Algorithm on a Classically Intractable Problem}

\setcounter{section}{0}
\setcounter{theorem}{0}
\setcounter{lemma}{0}
\setcounter{algorithm}{0}
\setcounter{equation}{0}
\setcounter{figure}{0}
\setcounter{table}{0}
\setcounter{page}{1}
\makeatletter
\renewcommand{\thesection}{S\arabic{section}}
\renewcommand{\theequation}{S\arabic{equation}}
\renewcommand{\thefigure}{S\arabic{figure}}
\renewcommand{\thetable}{S\arabic{table}}
\makeatother

\section{Background on the LABS problem}

The problem of finding sequences with low sidelobe energies attracted attention in the 1960s and 1970s due to its applications to the reduction of the peak power of radar pulses~\cite{Boehmer1967,Schroeder1970}. The merit factor $F$ was first introduced by Golay~\cite{Golay1972} and defined as the ratio of central to sidelobe energy of a binary sequence. Improved merit factors were obtained over the years by exhaustive \cite{Golay1977,Golay1990} and non-exhaustive~\cite{beenker1985binary} search methods. Explicit sequences asymptotically achieving the merit factor of $\mathcal{F}\approx 6.34$ are known~\cite{Jedwab2013}. The conjectured asymptotic limit of $\mathcal{F}\rightarrow 12.32$ as $N\rightarrow \infty$ was derived using arguments from statistical mechanics in Ref.~\onlinecite{Golay1982}. Bernasconi reframed the LABS problem as a spin model with long-range 4-spin interactions to apply simulated annealing to it~\cite{Bernasconi1987}. However, simulated annealing failed to obtain high-quality solutions, with the failure attributed to the ``golf-course type'' energy landscape~\cite{Bernasconi1987}. Bernasconi further conjectured that this property of the landscape will prevent stochastic search procedures from obtaining high-quality solutions for long sequences~\cite{Bernasconi1987}. \rev{While the presence of isolated deep and narrow (``golf-course type'') global minima has been debated~\cite{Ferreira2000}, the conjecture that stochastic search procedures will fail} has held up so far.

A commonly considered class of sequences are those exhibiting skew-symmetry, which for sequences of odd length $N=2k-1$ are defined as $s_{k+l} = (-1)^ls_{k-l},\; l\in \{1, \ldots, k-1\}$. 
Skew-symmetric sequences are known to be optimal for many odd $N$ instances. Since skew-symmetry reduces the search space from $2^N$ to $2^{\frac{N}{2}}$, only searching this subspace leads to better runtime scaling. Therefore many algorithms are restricted to only searching this subspace. The best-known heuristic for skew-symmetric LABS uses a sequence of self-avoiding walk segments when searching and has a running time that scales as $1.15^N$.\cite{Bokovi2017} In this work, we target the general LABS problem and therefore we do not consider solvers that are only capable of tackling the skew-symmetric instances.

\rev{To identify the best classical solvers for the LABS problem, we have performed an extensive literature review. Table~\ref{tab:classical_solvers_for_labs} summarizes this review. We also note recent works targeting a peak sidelobe value rather than merit factor~\cite{Dimitrov2020,Brest2021,bovskovic2021two}. The ideas from these works may be fruitfully applied to the LABS problem in the future.}

\begin{table}[H]
    \centering
    \rev{\begin{tabular}{c|c|c|c|c|c|c}
        Year & Ref. & Kind & General & Scaling & $N_{\max}$ & Algorithm \\
        \hline
        1975 & \onlinecite{Lindner1975} & Exact & Yes & N/A & 40 & Exhaustive search  \\
        1977 & \onlinecite{Golay1977} & Exact & No & N/A & 59 & Exhaustive search \\ %
        1985 & \onlinecite{beenker1985binary} & Heuristic & No & N/A & 199 & Non-exhaustive search \\
        1990 & \onlinecite{Golay1990} & Exact & No & N/A & 69 & Exhaustive search \\
        1990 & \onlinecite{Golay1990} & Heuristic & No & N/A & 117 & Non-exhaustive search \\
        1996 & \onlinecite{Mertens1996} & Exact & Yes & $1.85^N$ & 48 & Branch-and-bound \\
        2003 & \onlinecite{Brglez2003} & Heuristic & Yes & $1.463^N$ & 64 & Kernighan-Lin \\ %
        2003 & \onlinecite{Brglez2003} & Heuristic & Yes & $1.397^N$ & 47 & Evolutionary \\ %
        2006 & \onlinecite{10.1007/11889205_51} & Heuristic & Yes & N/A & 48 & Tabu search \\ 
        2009 & \onlinecite{Gallardo2009} & Heuristic & No & $1.17^N\;$ \cite{Bokovi2017} & 201 & Memetic Tabu \\
        \textbf{2009} & \textbf{\onlinecite{Gallardo2009}} & \textbf{Heuristic} & \textbf{Yes} & $\mathbf{1.34^N}\;$ \cite{Bokovi2017} & 85 & \textbf{Memetic Tabu} \\
        2012 & \onlinecite{kratica2012mixed} & Exact & No & N/A & 30 & Mixed-integer quadratic programming \\ %
        2012 & \onlinecite{kratica2012mixed} & Exact & Yes & N/A & 51 & Mixed-integer quadratic programming \\ %
        2013 & \onlinecite{Prestwich2013} & Exact & No & $1.337^N\;$ \cite{Bokovi2017} & 89 & Branch-and-bound \\
        \textbf{2016} & \textbf{\onlinecite{Packebusch2016}} & \textbf{Exact} & \textbf{Yes} & $\mathbf{1.73^N}$ & 66 & \textbf{Branch-and-bound} \\
        2017 & \onlinecite{Bokovi2017} & Heuristic & No & $1.15^N$ & 400 & Self-avoiding walks \\ 
        2018 & \onlinecite{Brest2018} & Heuristic & No & $1.18^N$ & 225 & Stochastic search
    \end{tabular}
    \caption{Select prior works tackling the LABS problem with classical solvers. Best results to date for general LABS are highlighted in bold. ``General'' column signifies whether a solver tackles general LABS or only skew-symmetric ones. The latter are only included for reference as we consider general LABS in this work. $N_{\max}$ is the largest size tackled (but not necessarly solved exactly). We note that in Ref.~\onlinecite{Brglez2003}, Kernighan-Lin solver has worse scaling but better constants, which is why it was used for larger $N$ experiments. We do not include in this table negative results, such as results showing that simulated annealing~\cite{Bernasconi1987,Ferreira2000} and plain evolutionary algorithms~\cite{Gallardo2007} fail to obtain good solutions.}}
    \label{tab:classical_solvers_for_labs}
\end{table}

\begin{figure*}[b]
    \centering
    \includegraphics[width=\textwidth]{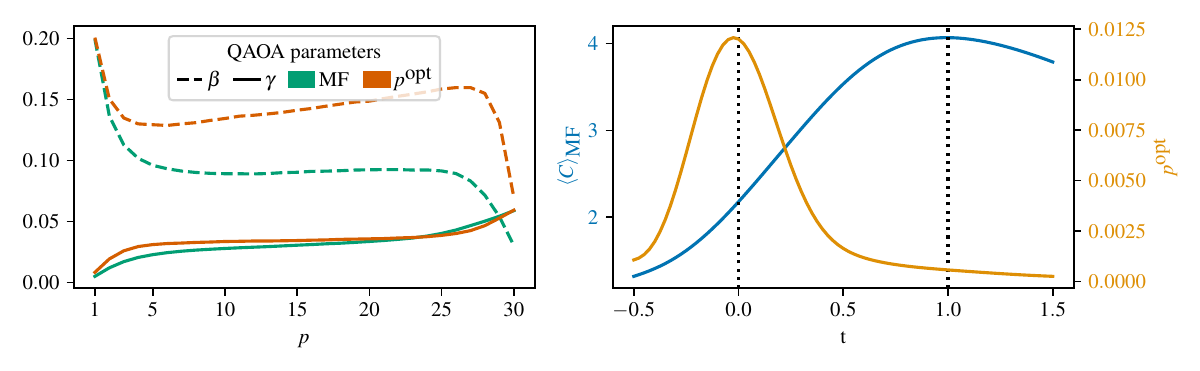}
    \caption{\textbf{QAOA parameters optimized for different objectives} \textbf{a,} Fixed QAOA parameters
    for $p=30$ chosen with respect to the QAOA energy $\langle C\rangle_{\text{MF}}$ (``MF'') and probability of sampling the optimal solution (``$\popt$'').  When the parameters are optimized with respect to $\popt$, the value of $\bm\beta$ is substantially larger throughout QAOA evolution. Subfigure reproduced from the main text. \textbf{b,} QAOA performance for $N=25$, $p=30$ with parameters linearly extrapolated between fixed parameters for $\popt$ ($t=0$) and $\langle C\rangle_{\text{MF}}$ ($t=1$). QAOA parameters optimized for $\langle C\rangle_{\text{MF}}$ give very poor values of $\popt$ and vice versa.}
    \label{fig:MF_vs_overlap_parameters}
\end{figure*}

\begin{figure*}[t]
    \centering
    \includegraphics[width=\textwidth]{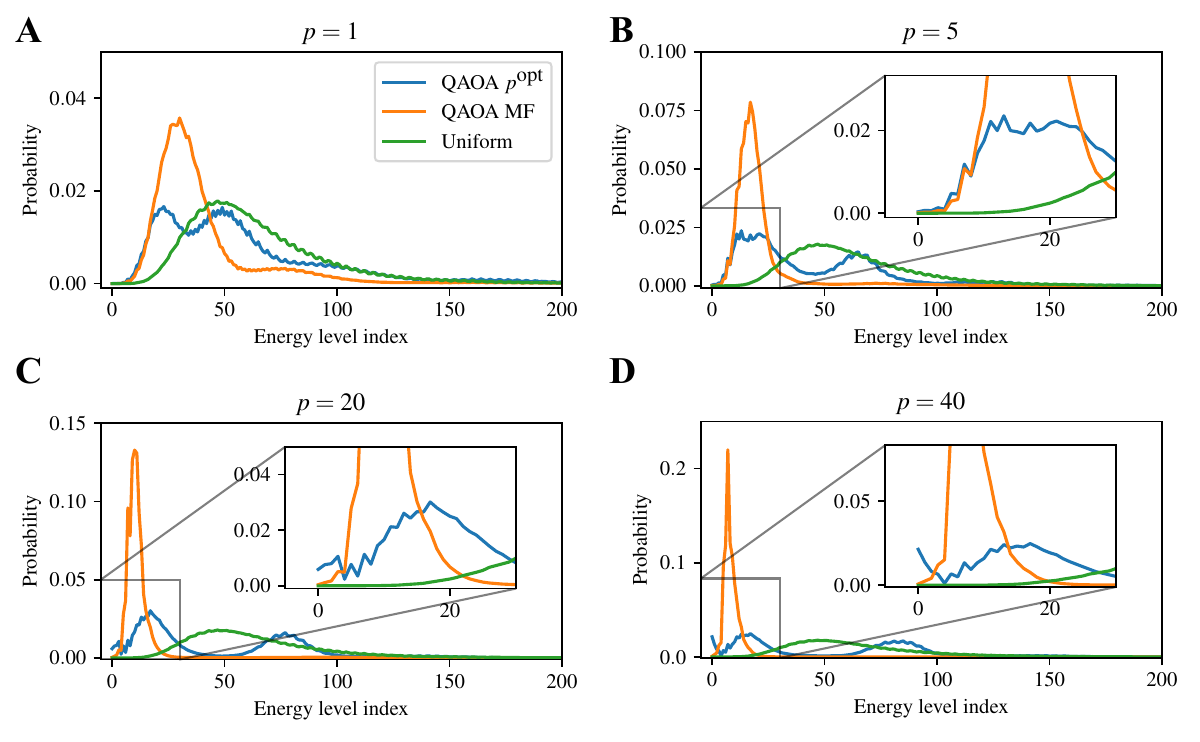}
    \caption{\textbf{QAOA dynamics with parameters optimized for different objectives} The probability of obtaining a binary string corresponding to a given energy level of the LABS problem (the zeroth energy level is the ground state or optimal solution, lower is better) for varying $p$ (\textbf{A-D}). When parameters are optimized with respect to the expected merit factor (labeled ``QAOA MF''), the QAOA output state is concentrated around the mean and fails to obtain a high overlap with the ground state. On the other hand, when parameters are optimized with respect to $\popt$ (labeled ``QAOA $\popt$''), the QAOA state has a high overlap with both ground state and higher energy states. The probability of obtaining the ground state is $27.3$ times greater for QAOA with parameters optimized with respect to $\popt$ at $p=40$ (\textbf{D}). 
    }
    \label{fig:output_distribution}
\end{figure*}

\section{QAOA as an exact and approximate optimization algorithm}

We now provide additional numerical results highlighting the differences in QAOA behavior with parameters optimized for approximate and exact solutions. In this work, we use QAOA as an exact solver, with time to solution as the target metric. However, QAOA is typically used as an approximation algorithm,~\cite{farhi2014quantum} with most theoretical results focusing on the expected solution quality obtained by QAOA. \rev{For example, this is the setting used in recent results ruling out quantum advantage in approximation with log-depth local quantum algorithms like QAOA~\cite{1905.07047,chou2021limitations,2310.01563}. These results do not apply to our chosen setting of using QAOA as an exact solver. In fact,} on the LABS problem, we observe that QAOA can provide poor approximations in polynomial time while still offering speedups as an exact exponential-time solver.  %

We begin by discussing the parameters themselves.
QAOA parameters are typically chosen with respect to QAOA energy~\cite{farhi2014quantum,zhou2020quantum,shaydulin2019multistart,crooks2018performance,streif2020training,Lee2021,sack2021quantum,amosy2022iterative} or the probability of sampling the optimal solution~\cite{2208.06909}.
Figure~\ref{fig:MF_vs_overlap_parameters}a shows the difference in QAOA parameters in the two scenarios for the specific case of the LABS problem.
First, we note that QAOA parameters that maximize $\langle C\rangle_{\text{MF}}$ are substantially different from those maximizing $\popt$. Specifically, while the values of $\bm\gamma$  are similar, the value of $\bm\beta$ for $\popt$ is much larger than that for $\langle C\rangle_{\text{MF}}$. Qualitatively, this implies that the probability amplitudes are allowed to ``mix'' more, making QAOA state not concentrated with respect to Hamming distance. This behavior is seen when examining the QAOA output distribution, which is shown for both parameter schedules in Fig.~\ref{fig:output_distribution} and discussed in detail below. %

\begin{figure*}[bt]
    \centering
    \includegraphics[width=\textwidth]{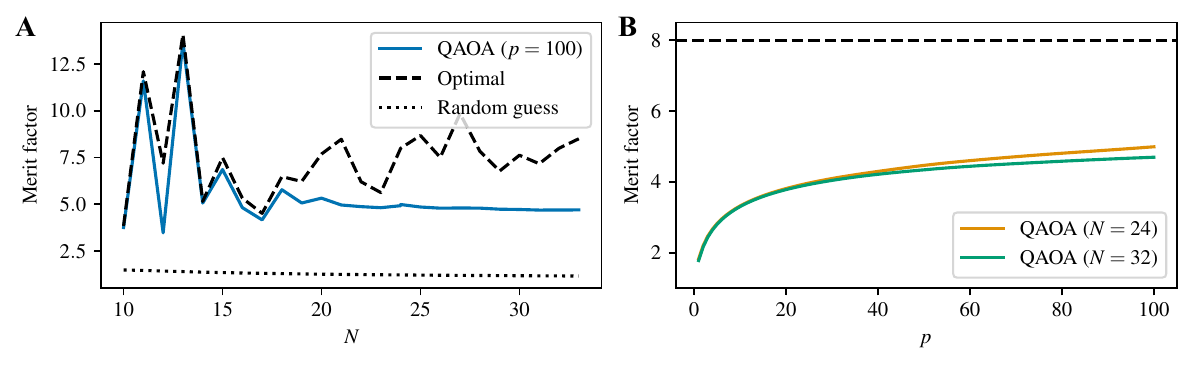}
    \caption{\textbf{QAOA performs poorly as an approximation algorithm} \textbf{A,} Performance of QAOA as an approximation algorithm. %
    Explicit constructions of skew-symmetric sequences exist that achieve $\mathcal{F}\approx 6.34$ for large $N$.\cite{Jedwab2013} Simulated annealing achieves $\mathcal{F}\approx 5$ for large $N$.\cite{Bernasconi1987} For QAOA, the expected value of merit factor $\langle C\rangle_{\text{MF}}$ is plotted. The expected merit factor of QAOA output is below the values easily attainable classically. \textbf{B,} For both $N=24$ and $N=32$, the optimal merit factor is $8$ (dashed line). %
    }
    \label{fig:QAOA_merit_factor}
\end{figure*}

Second, QAOA parameters that give good performance with respect to one metric are far from optimal with respect to the other metric. Fig.~\ref{fig:MF_vs_overlap_parameters}b shows QAOA performance with parameters linearly extrapolated between the parameters ($\bm\beta^{\langle C\rangle_{\text{MF}}}$, $\bm\gamma^{\langle C\rangle_{\text{MF}}}$) that give a high value of $\langle C\rangle_{\text{MF}}$ and ($\bm\beta^{\popt}$, $\bm\gamma^{\popt}$) that give high $\popt$ : $\bm\gamma = t\bm\gamma^{\popt}+(1-t)\bm\gamma^{\langle C\rangle_{\text{MF}}}$ and $\bm\beta = t\bm\beta^{\popt}+(1-t)\bm\beta^{\langle C\rangle_{\text{MF}}}$. We note that the parameters $\bm\beta^{\langle C\rangle_{\text{MF}}}$, $\bm\gamma^{\langle C\rangle_{\text{MF}}}$ ($t=1$) give a very low value of $\popt$ and vice versa. This suggests that substantial performance gains are possible if parameters are chosen differently with respect to the two figures of merit, rather than using one as a proxy for the other as is commonly done in QAOA research.\cite{Larkin2022,Lotshaw2021} Similar observations have been made in Refs.~\onlinecite{Li2020,Barkoutsos2020},
though the difference observed between the two figures of merit is more drastic in our case due to the hardness of the problem considered.

In the numerical experiments presented in the main test, we use time to solution as the target metric. For completeness, we include the results showing QAOA performance as an approximation algorithm.
We observe that QAOA performs poorly on the LABS problem with respect to the expected merit factor $\langle C\rangle_{\text{MF}}$. Specifically, we observe that QAOA fails to outperform even simple classical techniques at high depth. Fig.~\ref{fig:QAOA_merit_factor} shows the expected merit factor of QAOA for fixed $p=100$ as a function of $N$, as well as examples of how the expected merit factor grows with $p$ for two fixed values of $N$. 
We observe that as $N$ grows, QAOA at $p=100$ achieves an expected merit factor $\langle C\rangle_{\text{MF}}$ of less than $5$. Note that explicit analytical sequences achieving merit factor $>6$ are known~\cite{Jedwab2013}. Moreover, we see that $\langle C\rangle_{\text{MF}}$ grows increasingly slowly as $N$ and $p$ increase, suggesting that a prohibitively high value of $p$ would be required to achieve a high expected merit factor.

We can understand this behavior by examining the values of the merit factor attainable by binary strings (in physics terms, we are examining the spectrum of the LABS Hamiltonian). For the $N=25$ problem, presented in Fig.~\ref{fig:output_distribution}, only the $4$ lowest energy levels correspond to a merit factor better than $6$. This means that QAOA must concentrate all the wave function mass on a superposition of a small number of computational basis states. Due to QAOA preserving the $D_4$ symmetry of the problem~\cite{Shaydulin2021}, this is a highly entangled state, which is hard to prepare~\cite{Bravyi2020}.
As a result, QAOA requires a very large value of $p$ to obtain a high expected merit factor. If, on the other hand, we choose the probability of sampling the exact optimal solution $\popt$ (overlap between QAOA state and the ground state of the LABS Hamiltonian) as the target metric, a much lower value of $p$ is needed to obtain good success probability. Qualitatively, Fig.~\ref{fig:output_distribution} shows how this state preparation succeeds by allowing a substantial part of the QAOA state to ``leak'' to high energy levels. This can be observed by noticing how the population of energy levels $>50$ stays relatively high for QAOA with parameters optimized with respect to $\popt$, but becomes negligible if QAOA parameters are chosen with respect to $\langle C\rangle_{\text{MF}}$.

The success of QAOA in preparing states with a large overlap with the ground state of the LABS Hamiltonian (i.e. states having a high probability of measuring exactly optimal solution) motivates the choice of time to solution as the target metric for QAOA evaluation. We show the time to solution at the largest $p$ explored numerically ($p=33$) in Figure~\ref{fig:QAOA_TTS}. We remark that the QAOA succeeds at achieving high overlap at $N\leq 40$. 

Our results suggest a new way of viewing the potential of QAOA to provide algorithmic speedups and provide an important caveat to theoretical results bounding the approximations attainable by QAOA with constant depth.\cite{Basso2022} Even in the regime where the expected solution quality of QAOA is bounded, it can be still useful as a tool to obtain a high probability of measuring the exact optimal solution.

\begin{figure*}
    \centering
    \includegraphics[width=\textwidth]{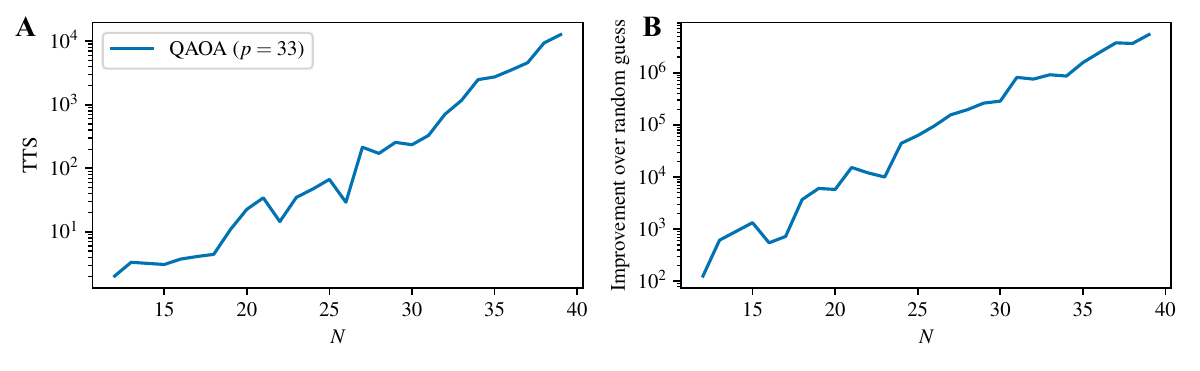}
    \caption{\textbf{QAOA achieves high overlap} \textbf{A,} Time-to-solution (TTS) of QAOA with parameters chosen with respect to $\popt$. \textbf{B,} Improvement over random guess.  For the largest case numerically considered, which is the $N=39$ problem at $p=33$, the expected number of shots required to solve it is $1.2\times 10^4$. This is a $5.4\times 10^6$ factor improvement over random guess.}
    \label{fig:QAOA_TTS}
\end{figure*}

\section{Details of numerical studies}

We now present in detail how the fixed parameters were chosen. Our procedure for doing so is as follows. First, we optimize the QAOA parameters using the FOURIER~\cite{zhou2020quantum} reparameterization. We show evidence that the optimization of the reparameterized QAOA gives the same performance as the more extensive optimization of the standard parameterization. Second, we set our fixed parameters to be the arithmetic mean of the (appropriately rescaled) optimized parameters for smaller $N$. We provide evidence that for smaller $N$ where directly optimized parameters are available, the fixed parameters lead to QAOA performance that is close to that with the optimized parameters. We note that better parameter setting schemes can only improve our performance.

\subsection{Optimized QAOA parameters for LABS change with $N$}
\begin{figure*}
    \centering
    \includegraphics[width=\textwidth]{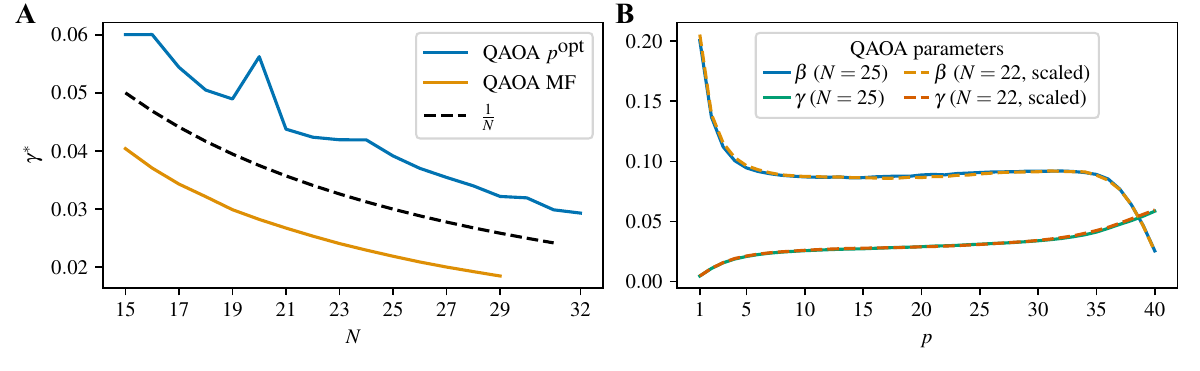}
    \caption{\textbf{Scaling of QAOA parameters with problem size} \textbf{A,} Scaling of the optimized value of QAOA parameter $\gamma$ with $N$ for $p=1$ when optimized with respect to expected merit factor $\langle C\rangle_{\text{MF}}$ (``QAOA MF'') and probability of obtaining optimal solution $\popt$ (``QAOA $\popt$''). Optimized $\gamma^*$ decreases with $N$ as $\frac{1}{N}$. \textbf{B,} QAOA parameters optimized with respect to $\langle C\rangle_{\text{MF}}$ for $N=\in\{22,25\}$. For $N=22$, the parameters $\bm\gamma$ are scaled by $22/25$. After rescaling, the parameters for $N=22$ and $N=25$ are visually indistinguishable.} %
    \label{fig:scaling_of_parameters}
\end{figure*}
First, we observe that the optimized QAOA parameters are not invariant with $N$. Specifically, we observe that the optimized value of $\bm\gamma^*$ goes down with $N$ as $\frac{1}{N}$. ~\ref{fig:scaling_of_parameters}A plots this for $p=1$. We note that $\bm\beta^*$ is roughly constant with $N$.
We observe this scaling for all $p$. As an example, ~\ref{fig:scaling_of_parameters}B plots optimized parameters for two different values of $N$. After rescaling $\bm\gamma^*$ by $N$, the two sets of parameters are visually indistinguishable. Below, we take advantage of this scaling of $\bm\gamma^*$ in two ways. First, we improve the convergence of local optimization runs by rescaling the initialization and the initial step size of the local optimizer. Second, we use it to correctly account for scale when executing QAOA with fixed parameters.

The scaling of $\bm\gamma^*$ with extremal properties of the objective function has been observed before for other problems. For example, the normalized value of the maximum cut on $D$-regular graphs grows with $D$ (to the first order) and $\bm\gamma^*$ decreases as $\frac{1}{\sqrt{D}}$.\cite{Wang2018,lipics.tqc.2022.7} %
Analogous scaling has been observed for weighted problems.\cite{Sureshbabu2023} In LABS, the energy $\mathcal{E}_{\text{sidelobe}}(\boldsymbol{s})$ grows as $N^2$, so $\bm\gamma^*$ decreases as $\frac{1}{N}$. Formally establishing this connection is a promising direction for future research. %

\subsection{QAOA parameter optimization with FOURIER scheme}

For a given figure of merit, we optimize the QAOA parameters as follows. In all cases below, we use the nlopt~\cite{NLOpt} implementation of BOBYQA~\cite{powell2009bobyqa} gradient-free local optimization algorithm. In all cases, we run BOBYQA until convergence, with convergence specified by relative tolerances on changes in parameters and in objective function value of $10^{-8}$. BOBYQA has been shown to outperform other local optimizers on the task of optimizing QAOA parameters~\cite{shaydulin2019multistart}. We expect similar results with any other local hill-climbing algorithm, albeit at a potentially different cost in terms of the number of iterations.

For $p=1$, we optimize the QAOA parameter exhaustively by running the local optimizer from $400$ initial points. We set the initial step size (\texttt{rhobeg}) to $0.01 / N$. The initial points $\beta^{\text{init}}, \gamma^{\text{init}}$ are chosen uniformly at random from $\beta^{\text{init}} \in [0.1, 0.2], \gamma^{\text{init}} \in [0, 0.85/N]$ when optimizing with respect to $\langle C\rangle_{\text{MF}}$, and $\beta^{\text{init}} \in [0.15, 0.3], \gamma^{\text{init}} \in [0.6, 1.2/N]$ when optimizing with respect to $\popt$. The regions for initializations are read off from grid search results for $p=1$.

For $p > 1$, we follow the FOURIER$[\infty,0]$ scheme of Ref.~\onlinecite{zhou2020quantum}. Specifically, we change the QAOA parameterization to the frequency domain as follows:
\begin{align}
    \gamma_i & = \sum_{k=1}^pu_k\sin{\left[\left(k-\frac{1}{2}\right)\left(i-\frac{1}{2}\right)\frac{\pi}{p}\right]}, \\
    \beta_i & = \sum_{k=1}^pv_k\cos{\left[\left(k-\frac{1}{2}\right)\left(i-\frac{1}{2}\right)\frac{\pi}{p}\right]}.
\end{align}
Then we optimize over the new parameters $\bm u,\bm v$. We take the optimized parameters $\bm u^*_{p-1},\bm v^*_{p-1}$ for $p-1$ and run one local optimization from $\bm u^*_{p} = (u^*_{p-1}, 0)$, $\bm v^*_{p} = (\bm v^*_{p-1}, 0)$. The initial step size (\texttt{rhobeg}) for local optimizer is set to $0.01 / N$. An initial step size that is small and decreasing with $N$ is central to the robust convergence of a local optimizer, due to QAOA parameters having different scales for different $N$. We do not find it necessary for obtaining high-quality parameters to perform objective function rescaling of the type explored in Refs.~\onlinecite{10.1145/3584706,Sureshbabu2023}, though we expect that it may reduce the number of iterations required by the local optimizer to converge.

\subsection{Evidence of the success of the FOURIER reparameterization heuristic}
To evaluate the success of the FOURIER parameter optimization heuristic, we compare the quality of optimized parameters it finds with the quality of the parameters obtained by running a local optimizer with $100p$ seeds from initial points sampled uniformly from $\bm\beta^{\text{init}} \in [0.1, 0.2]^p, \bm\gamma^{\text{init}} \in [0, 0.85/N]^p$. 
\begin{figure}
    \centering
    \includegraphics[width=0.5\textwidth]{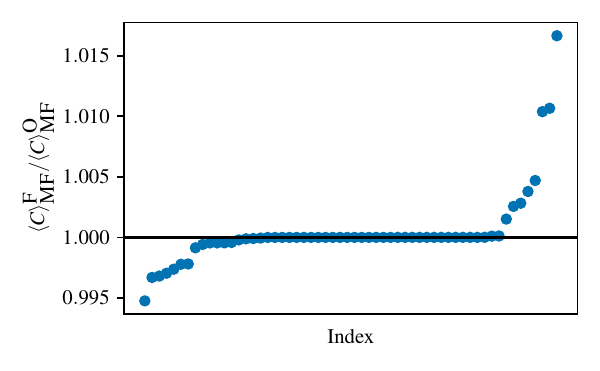}
    \caption{\textbf{FOURIER heuristic finds good parameters} The ratio between the expected merit factor of QAOA with parameters optimized by directly running local optimization from many initial points ($\langle C\rangle_{\text{MF}}^{\text{O}}$) and with parameters optimized using the FOURIER$[\infty,0]$ scheme ($\langle C\rangle_{\text{MF}}^{\text{F}}$) for $N>12$. We observe that the difference in the quality of the parameters obtained by the two optimization schemes is small.} %
    \label{fig:FOURIER_vs_directly_optimized}
\end{figure}

We find that the very expensive direct optimization performs very similarly to one local optimization run used in the FOURIER scheme, as shown in Figure~\ref{fig:FOURIER_vs_directly_optimized}. Specifically, the mean difference between the two schemes is $<0.05\%$, and in the worst case of the ones considered, FOURIER gives parameters that are only $<0.5\%$ worse. Therefore below, we simply consider parameters optimized using the FOURIER$[\infty,0]$ scheme.

\begin{figure*}
    \centering
    \includegraphics[width=\textwidth]{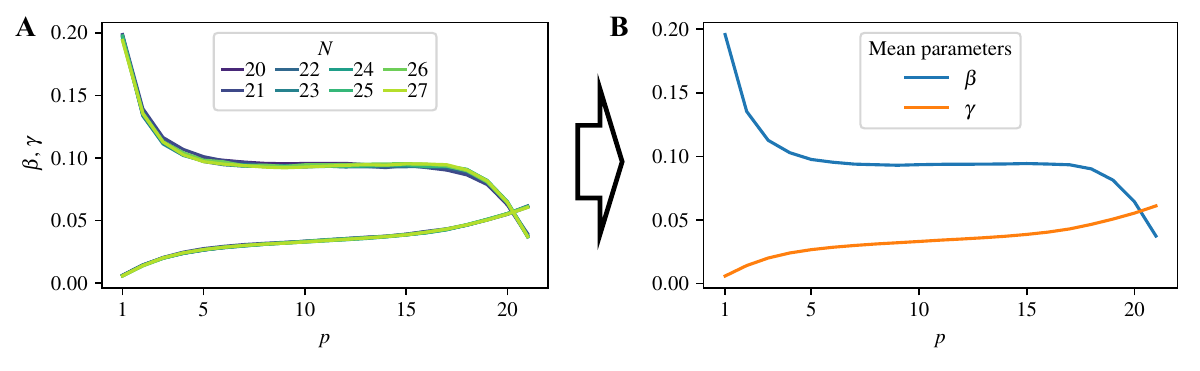}
    \caption{\textbf{Visualization of how the fixed parameters are obtained.} Visualization of how the fixed parameters with respect to $\langle C\rangle_{\text{MF}}$ are obtained. Visualization for $\popt$ is presented in the main text. \textbf{A,} Optimized QAOA parameters $\bm\beta$ (top line) and $\bm\gamma$ (bottom line) for $p=21$. $\bm\gamma$ is scaled by $N/24$, with the constant factor of $1/24$ added for figure readability. \textbf{B,} Fixed parameters obtained by taking the arithmetic mean over the optimized parameters.
    \label{fig:SI_mean_parameters}
    }
\end{figure*}

\subsection{Procedure for obtaining the fixed parameters}

The procedure we follow for obtaining the fixed parameters is described in the main text. We reiterate it here for completeness. We optimize QAOA parameters for smaller values of $N$ for which the simulation is relatively inexpensive and set the fixed parameters to be the mean over the optimized parameters:
\begin{align}
    \bm\beta^{\text{Fixed}} & = \frac{1}{M}\sum_{N \in \{N_1,\ldots, N_M\}}\bm\beta^{*}_N, \\ 
    \bm\gamma^{\text{Fixed}} & = \frac{1}{M}\sum_{N \in \{N_1,\ldots, N_M\}}N\bm\gamma^{*}_N,
\end{align}
where $\bm\beta^{*}_N$, $\bm\gamma^{*}_N$ are the QAOA parameters optimized for $N$. The fixed parameters used in QAOA applied to a LABS instance of size $N$ are then given by $\bm\beta^{\text{Fixed}}, \frac{ \bm\gamma^{\text{Fixed}}}{N}$. This process is visualized for parameters optimized with respect to $\langle C\rangle_{\text{MF}}$ in Figure~\ref{fig:SI_mean_parameters}. We use optimized parameters for $20 \leq N \leq 27$ when computing parameters for $\langle C\rangle_{\text{MF}}$ and $24 \leq N \leq 31$ for $\popt$.

\subsection{Evidence of the success of the fixed parameter scheme}

\begin{figure*}
    \centering
    \includegraphics[width=\textwidth]{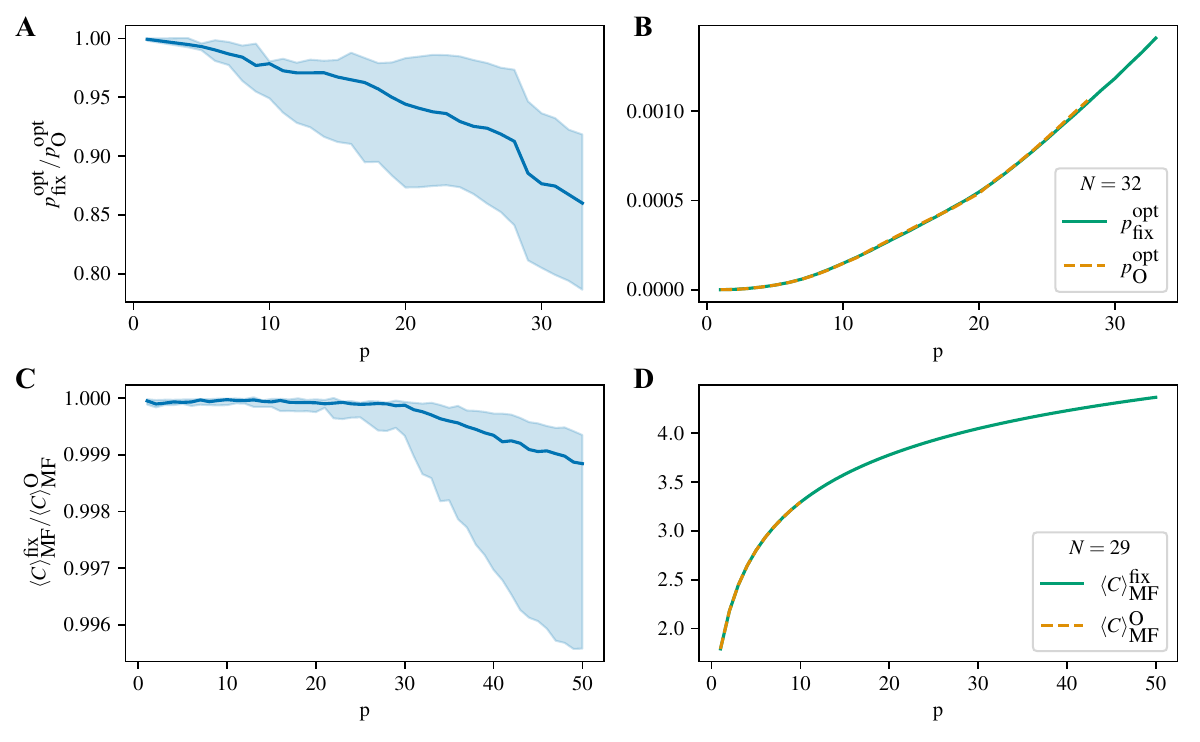}
    \caption{\textbf{QAOA with fixed parameters performs similarly to QAOA with directly optimized parameters} Comparison of QAOA performance with fixed ($\langle C\rangle_{\text{MF}}^{\text{fix}}$, $\popt_{\text{fix}}$) and optimized ($\langle C\rangle_{\text{MF}}^{\text{O}}$, $\popt_{\text{O}}$) parameters. \textbf{A,C,} Shaded area shows $95\%$ confidence interval. \textbf{B,D,} Despite the relative differences between performance with optimized and transferred parameters growing with $p$, we observe that for all cases considered, QAOA performance improves monotonically as expected. Since the absolute differences are small, the performance with fixed and directly optimized parameters is visually indistinguishable. 
    }
    \label{fig:fixed_vs_optimized}
\end{figure*}

To evaluate the quality of fixed parameters, we compare the QAOA performance with fixed parameters and with directly optimized parameters. Figure~\ref{fig:fixed_vs_optimized} presents the comparison. We observe that the two are very close for small $p$, with the ratio between the two growing for higher $p$. Specifically, for parameters optimized with respect to the expected merit factor $\langle C\rangle_{\text{MF}}$, the median difference in $\langle C\rangle_{\text{MF}}$ between QAOA evaluated with fixed and directly optimized parameters is less than $0.01\%$ for $p=50$, with the difference even lower for smaller $p$ (\ref{fig:fixed_vs_optimized}A). For parameters optimized with respect to $\popt$, the difference in $\popt$ is larger and growing with $p$ (\ref{fig:fixed_vs_optimized}C). Note that due to the exponentially small value of $\popt$, small absolute differences (including due to precision limitations) translate into large relative differences. Nonetheless, we observe good performance with fixed parameters at high $N$, as visualized for $N=32$ in Figure~\ref{fig:fixed_vs_optimized}D. QAOA performance with fixed parameters with respect to both figures of merit monotonically improves with $p$ for all values of $p$ considered. As the performance gap between fixed and optimized parameters grows with $p$, further improvements to fixed parameters are likely to yield even better scaling of QAOA TTS.

\begin{figure}
    \centering
    \includegraphics[width=\textwidth]{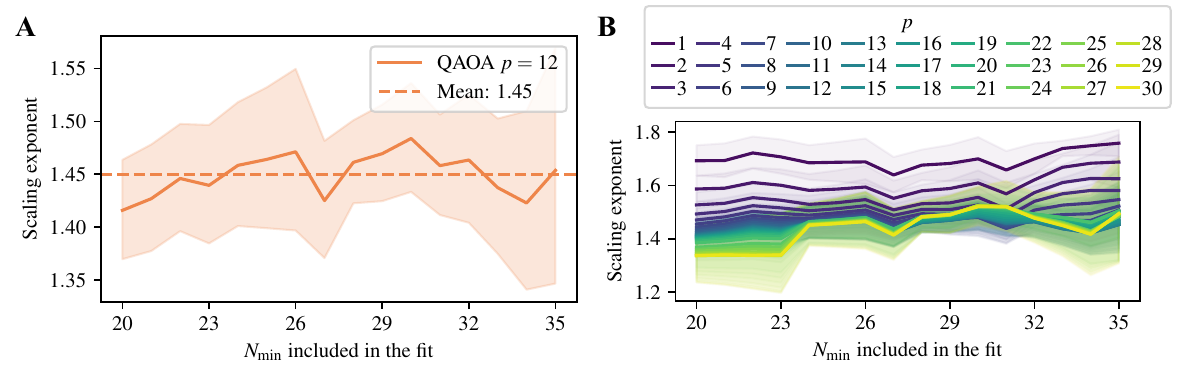}
    \caption{\textbf{Robustness of QAOA scaling} The scaling exponent for QAOA TTS for varying choices of $N_{\min}$ included in the range of values of $N$ for fit at \textbf{A,} $p=12$ and \textbf{B,} for varying $p$. If smaller $N_{\min}$ are included, the scaling exponent continues to improve with $p$, with the quality of fit decaying with $p$ (see Main text and Fig.~\ref{fig:power_law_fit_for_scaling_coeff}). For sufficiently high $N_{\min}$, the exponent does not improve beyond $p\approx 10$. At $p=12$, the scaling exponent is not sensitive to the choice of $N_{\min}$ (\textbf{A}).}
    \label{fig:QAOA_additional_scaling}
\end{figure}

\subsection{Scaling coefficient of QAOA TTS is not sensitive to the choice of $N_{\min}$}

In the Main text we motivate the choice of the cutoff $N_{\min}$ for the range of $N$s to be included in the fit by examining the stability of the quality of fit with varying $p$. We obtain $N_{\min}=28$ as the minimum value required to maintain a stable fit, and estimate QAOA scaling at $1.46^N$ at $p=12$. We now provide evidence that this value is not sensitive to the choice of $N_{\min}$. 

Fig.~\ref{fig:QAOA_additional_scaling} shows the scaling exponent for QAOA TTS for varying choices of the cutoff $N_{\min}$. Taking the average over exponents obtained when performing a fit with $20\leq N_{\min} \leq 35$ gives the estimated scaling of $1.45^N$ (Fig.~\ref{fig:QAOA_additional_scaling}a), which is slightly better than the one reported in the main text. As shown in Fig.~\ref{fig:QAOA_additional_scaling}b, for sufficiently large $p$ and small $N_{\min}$ the exponent changes as $N_{\min}$ is increased, indicating that a larger regime of $N$ must be explored to obtain a stable linear scaling.

\begin{figure*}
    \centering
    \includegraphics[width=0.5\textwidth]{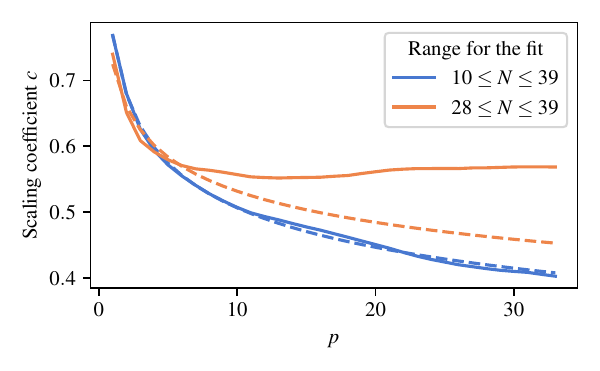}
    \caption{\textbf{QAOA scaling does not follow power law} The scaling coefficient $c$ in $\text{TTS}=\Theta(2^{cN})$ as a function of $p$. The blue line is included for illustration as including $N<28$ leads to low quality of the fit. When the quality of the fit is high ($28\leq N \leq 38$), the scaling coefficient does not follow a power law.}
    \label{fig:power_law_fit_for_scaling_coeff}
\end{figure*}

\subsection{Scaling coefficient of QAOA TTS does not follow power law}\label{sec:power_law}
One of the important findings of this work is that for LABS problem increasing QAOA depth $p$ beyond a certain small constant does not lead to better scaling. This puts the findings of this work in contrast to those of Refs.~\onlinecite{2208.06909,2212.01857,Akshay2022}. We observe that unlike in Ref.~\onlinecite{2208.06909}, the scaling coefficient $c$ in $\text{TTS}=\Theta(2^{cN})$ does not follow a power law.
This is shown in Figure~\ref{fig:power_law_fit_for_scaling_coeff}. For the coefficient $c$ to follow %
a power law, it must depend on $p$ as $c_1\times p^{c_2}$ for some constants $c_1$, $c_2$. When the cutoff is chosen to ensure good fit ($N=28$), we see clear deviation from a power law. We note that if we include smaller values of $N$ in the fit and ignore the low quality of the fit, we observe a clear power law scaling of $c\sim 0.77\times p^{-0.18}$. %

\subsection{Comparison of performance between QAOA and amplitude amplification}

\begin{figure}
    \centering
    \includegraphics[width=\textwidth]{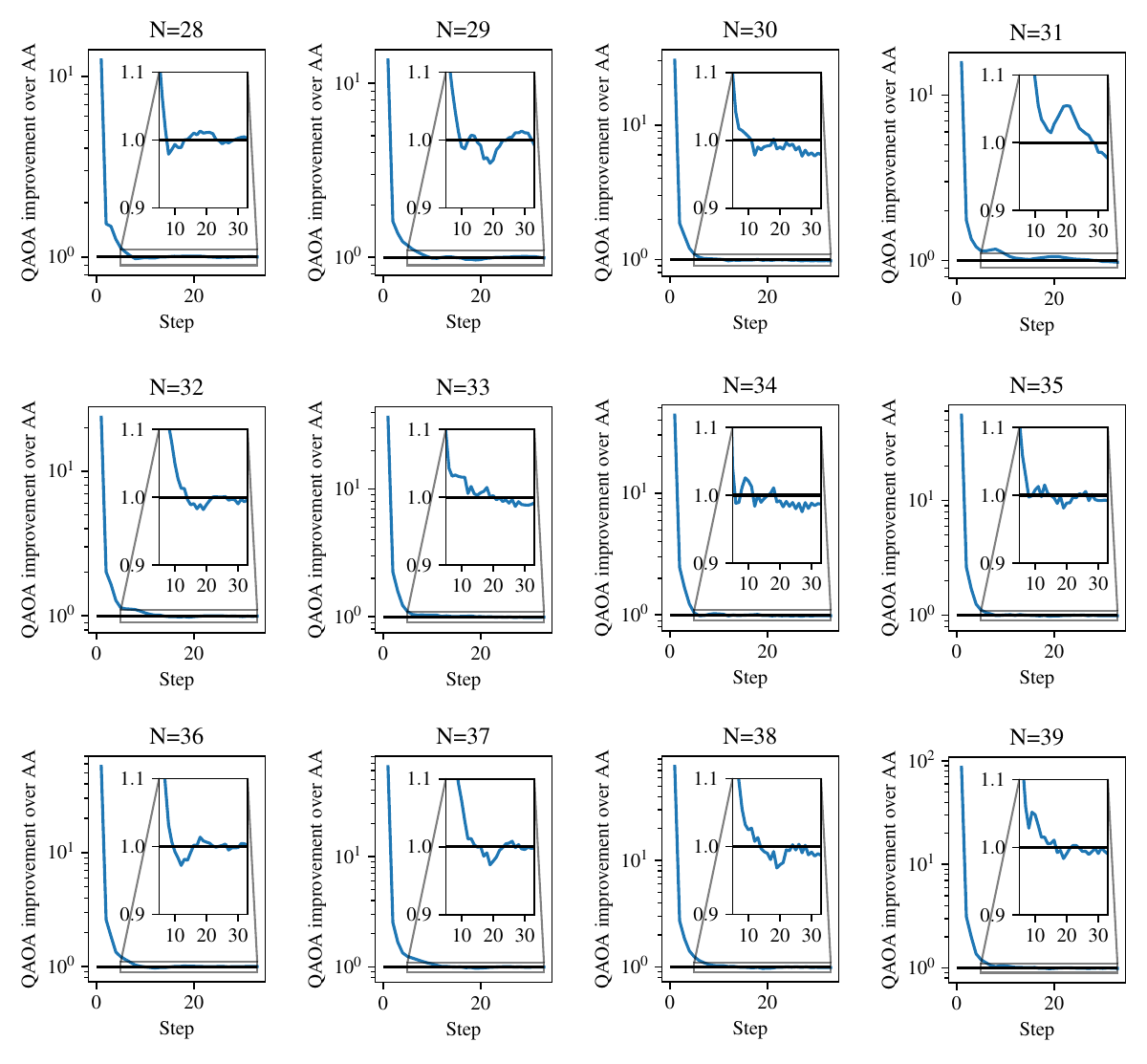}
    \caption{\textbf{Comparison of QAOA and amplitude amplification} The ratio between the gains in the probability of measuring the optimal solution from one step of QAOA and amplitude amplification (AA). QAOA provides orders of magnitude larger improvements for the first few steps, and then behaves approximately like amplitude amplification.}
    \label{fig:SI_QAOA_vs_aa}
\end{figure}

In this work we propose a strategy for using QAOA as a building block for algorithmic speedups by combining it with amplitude amplification (AA) or, more specifically, generalized minimum-finding as described in Appendix C of Ref.~\onlinecite{van_Apeldoorn_2020}. Specifically, we propose running the quantum minimum-finding algorithm with constant-depth QAOA as a subroutine. If the QAOA circuit prepares the state with $\popt$ overlap with the ground state of the LABS Hamiltonian, the minimum-finding algorithm would need apply the QAOA unitary and an oracle for computing the LABS cost function $O\left(1/\sqrt{\popt}\right)$ times to obtain a constant probability of measuring a bitstring corresponding to the optimal solution.

We note that our numerical results suggest that increasing the QAOA depth is always beneficial as compared to doing a smaller number of QAOA steps and increasing the number of amplitude amplification steps. This is visualized in Figure~\ref{fig:SI_QAOA_vs_aa}. As this figure shows, for sufficiently high $p$ the gain from a step of QAOA is very close to the gain from amplitude amplification, which provides an upper bound on the expected gain from one step of minimum-finding. At the same time one step of QAOA is much simpler to implement as it only requires the phase oracle and a series of one-qubit gates (mixer). We note that we do not provide exact gains from a step of the minimum-finding algorithm as they are non-trivial to estimate. However, asymptotically it is equivalent to amplitude amplification with an oracle that marks optimal solutions.

\newpage
\subsection{Proof of Theorem 1}

We now present the proof of Theorem 1 of the main text. Our technique is based on the generalized minimum-finding procedure outlined in Lemma 48 of Ref.~\onlinecite{van_Apeldoorn_2020}. For reader's convenience and completeness, we begin by restating our algorithm and the Theorem.

\begin{lemma}[Exponential Quantum Search, {Ref.~\onlinecite{Brassard_2002}}]
\label{lem:SI_eqsearch}
    Let $|\psi\rangle= U|0\rangle^{\otimes N}$ be a quantum state in a $2^N$-dimensional Hilbert space with computational basis elements indexed by $N$-bit bitstrings, and $m \colon \{0,1\}^{N} \to \{0,1\}$ be a marking function such that $\sum_{\{x | m(x) = 1\}} |\langle \psi | x \rangle|^2 \ge p$. There exists a quantum algorithm $\mathbf{EQSearch}(U,m,\delta)$ that outputs an element $x^*$ such that $m(x^*) = 1$ with probability at least \rev{$1-\delta$} using $O\left(\frac{1}{\sqrt{p}}\log\left(\frac{1}{\delta}\right)\right)$ applications of $U$ and $m$.
\end{lemma}

\begin{algorithm}[H]
\caption{QAOA Enhanced with Quantum Minimum-Finding}
\label{alg:SI_qaoa_aa}
\begin{algorithmic}
\Require Unitary $U_{\text{QAOA}}$ acting on $\mathbb{C}^{2^N}$ such that $\lvert\langle x^{*}|U_{\text{QAOA}}|0\rangle^{\otimes N}\rvert \geq 1/\sqrt{p_\text{opt}}$ for unknown $p_\text{opt}$, $V_{\text{LABS}}$ for computing $\mathcal{E}_{\text{sidelobe}}$
into a register, and $\delta \in (0, 1)$, positive number $M \le 2^N$, $C$ is the constant corresponding to the $O(\cdot)$ in Lemma~\ref{lem:SI_eqsearch}
\Ensure If $M$ is greater than $1/\sqrt{p_{\text{opt}}}$, output $x^{*}$ with $\geq 1 - \delta$ probability using $O(\log(1/\delta)M)$ calls to $U_{\text{QAOA}}$ and $V_{\text{LABS}}$ (and their inverses).
\State $x_{\text{res}}$ is set to an empty list.
\For{$i \gets 1$ \textbf{to} $\lceil \log(1/\delta) \rceil$}
\State $t \gets 0$; $s_0 \gets \infty$
\While {\text{number of calls to $U_{\text{QAOA}}$ \& $V_{\text{LABS}}$ is $<$ $3CMN$}}
\State $t \gets t + 1$
\State Define $m_t \colon \{0,1\}^{N} \to \{0,1\}$ such that $m_t(x) = 1$ if and only if $\mathcal{E}_{\text{sidelobe}}(x) < s_{t-1}$. Note that $m_t$ can be coherently evaluated using one query to $V_{\text{LABS}}$.
\State Set $s_t = \mathbf{EQSearch}(U_\text{QAOA},m_t,1/(6\cdot2^{N}))$.
\EndWhile
\State Append $s_t$ to $x_{\text{res}}$.
\EndFor
\State \text{Output minimum of $x_{\text{res}}$.}
\end{algorithmic}
\end{algorithm}

\begin{theorem}
Suppose a constant-depth QAOA circuit $U_{\text{QAOA}}$ prepares a state $|\psi\rangle = U_{\text{QAOA}}|0\rangle^{\otimes N}$ with $N \ge 3,$ such that we have $\lvert\langle x^{*}|\psi\rangle\rvert \geq 1/\sqrt{p_{\text{opt}}}$, where $|x^*\rangle$ encodes an optimal  solution to the $N$-bit LABS problem in a computational basis state, and we assume that $p_{\text{opt}} \ge 1/N$. Then, running Algorithm \ref{alg:SI_qaoa_aa} with parameters $M \ge 1/\sqrt{p_\text{opt}}$ and failure probability $\delta$, runs with a gate complexity of 
$O(\text{poly}(N)\log(1/\delta)M)$ and finds $x^{*}$ with probability at least $1 - \delta$.
\end{theorem}

\begin{proof}
We will first analyze the randomized Algorithm~\ref{alg:SI_qaoa_aa} as a \emph{Las Vegas} algorithm, i.e. we assume that the internal \textbf{while} loop is infinite.
In the upcoming analysis we will assume that every call to $\textbf{EQSearch}$ in the internal loop behaves as intended. Note that we choose the failure probability of each such call to be $1/(6\cdot 2^{N})$. The total number of calls cannot be more than $2^{N}$ and so by the union bound, every call succeeds except with probability at most $1/6$.

The algorithm generates a monotonically decreasing sequence of samples, uses $\mathbf{EQSearch}$ in each iteration to search for a sample that is strictly lesser than the previous one. Since the sequence of samples is strictly decreasing and there are $N$ possible distinct samples, the algorithm eventually returns the minimum. Since correctness is eventually guaranteed, we can simply bound the expected number of iterations before the sequence finds the minimum. If this expected number is $m$, we can run the internal loop $2m$ times to ensure that we find the minimum with probability at least $1/2$. Consequently, $2m\log\left(\frac{1}{\delta}\right)$ iterations suffice to ensure that we find the minimum with probability at least $1 - \delta$. It remains to show that the expected number of iterations before the minimum is found is at most $O\left(\frac{1}{\sqrt{p_{\text{opt}}}}\right)$.

For this argument, we define the following quantities. $(x_1 = x^*,x_2,\dots,x_n)$ is the list of bit-strings sorted in ascending order of the value of $\mathcal{E}_{\text{sidelobe}}$, and define $P(\mathcal{\xi}(X))$ to be the probability of event $\xi(X)$ when $X$ is a bitstring sampled by measuring $\ket{\psi}$ in the computational basis. Suppose in some iteration $t$, the sample returned by $\mathbf{EQSearch}$ is $s_{t}$. In the next iteration, $\mathbf{EQSearch}$ searches for an element with sidelobe energy less than $s_t$. The central observation is the following: for any $x_{k}$ where $k \in [N]$, the probability that some $s_t = x_k$ given that $t$ is the first iteration where $x_k$ appears in the list of obtained samples is given by $P(X = x_k)/P(X \le x_k)$. To see this, we observe as in Ref.~\onlinecite{van_Apeldoorn_2020} that:

\begin{align}
    \Pr(s_t = x_k | s_t \le x_k \wedge s_{t-1} > x_k) &= \frac{\Pr(s_t=x_k)}{\Pr(s_t \le x_k \wedge s_{t-1} > x_k)} \nonumber\\
    &= \sum_{x_l > x_k} \frac{\Pr(s_t=x_k \wedge s_{t-1} = x_{l-1})}{\Pr(s_t \le x_k \wedge s_{t-1} > x_k)} \nonumber\\
    &= \sum_{x_l > x_k} \frac{\Pr(s_t=x_k \wedge s_{t-1} = x_{l-1})\Pr(s_{t-1} = x_l)\Pr(s_t \le x_k \wedge s_{t-1} = x_l)}{\Pr(s_t \le x_k \wedge s_{t-1} > x_k)\Pr(s_{t-1} = x_l)\Pr(s_t \le x_k \wedge s_{t-1} = x_l)} \nonumber\\
    &= \sum_{x_l > x_k} \frac{\Pr(s_t=x_k \wedge s_{t-1} = x_{l-1})\Pr(s_t \le x_k \wedge s_{t-1} = x_l)}{\Pr(s_t \le x_k \wedge s_{t-1} > x_k)\Pr(s_t \le x_k \wedge s_{t-1} = x_l)} \nonumber\\
    &= \sum_{x_l > x_k} \frac{\Pr(s_t=x_k | s_{t-1} = x_{l-1})\Pr(s_{t-1} = x_l)\Pr(s_t \le x_k \wedge s_{t-1} = x_l)}{\Pr(s_t \le x_k | s_{t-1} = x_k)\Pr(s_{t-1} = x_l)\Pr(s_t \le x_k \wedge s_{t-1} > x_l)} \nonumber\\
    &= \sum_{x_l > x_k} \frac{P(X = x_k)\Pr(s_{t-1} = x_l)\Pr(s_t \le x_k \wedge s_{t-1} = x_l)}{P(X \le x_k)\Pr(s_{t-1} = x_l)\Pr(s_t \le x_k \wedge s_{t-1} > x_l)} \nonumber\\
    &= \frac{P(X = x_k)}{P(X \le x_k)}.
\end{align}
Notice that since a value can be sampled at most once, and the minimum is obtained within $n$ steps, the probability that a given value $x_k$ occurs in the list of observed samples is $P(X = x_k)/P(X \le x_k)$. 

To bound the expected number of queries before the minimum ($x_1$) is found by the algorithm, we associate with each bitstring $x_l \in \{x_n,x_{n-1},\dots,x_2\}$ the probability that it is an obtained sample in some iteration, and the cost of performing the corresponding search for an element less than $x_l$. The cost of the search, for our chosen parameters is at most $\frac{C\log(6\cdot2^{N})}{\sqrt{P(X < x_l)}} \le \frac{2CN}{\sqrt{P(X < x_l)}}$, where the last inequality holds whenever $N \ge 3$.

The total expected number of queries before the minimum $x_1$ is found, is therefore given by

\begin{align}
& \sum_{i = 2}^{\rev{n}} 2CN \cdot \frac{P(X = x_i)}{P(X \le x_i)}\sqrt{\frac{1}{P(X < x_i)}} \le \sum_{i = 2}^{\rev{n}} \frac{2CN}{P(X \le x_i)}\sqrt{\frac{1}{P(X < x_i)}} \le 2CN\int_{p_{\text{opt}}}^1 r^{-3/2} \,dr \le \frac{CN}{\sqrt{p_{\text{opt}}}}.
\end{align}
\rev{Note that the second to last inequality follows from arguments made in Ref.~\onlinecite{van_Apeldoorn_2020}, where they used the same inequality.}

If we run the inner loop of Algorithm~\ref{alg:SI_qaoa_aa} more than $3$ times the expected number of queries required to find $x^*$, as prescribed if $M \ge 1/\sqrt{p_\text{opt}}$, we fail to find $x^*$ with probability at most $1/3$ by the Markov inequality as long as no query to 
$\textbf{EQSearch}$ fails. Additionally, by the earlier discussion, a query in the internal loop fails with probability at most $1/6$. Therefore, by a union bound, each application of the inner loop finds $x^{*}$ with probability of at least $1/2$. Repeating the inner loop $\log(1/\delta)$ times ensures that $x^{*}$ is found with probability at least $1 - \delta$ (if not, the inner loop has to fail to find $x^*$ in $\log(1/\delta)$ independent trials).
\end{proof}

\subsection{Details of the classical solver scaling}

The scaling of the commercial branch-and-bound solvers is presented in Figure~\ref{fig:CPLEX_and_Gurobi}. For each solver, we run it with 100 random seeds for $N\leq 32$ and 10 random seeds for $N>32$ and report the mean. The minimum $N$ to include in the fit was chosen to maximize the quality of fit. We set the Gurobi parameters as follows: \texttt{Cuts=0}, \texttt{Heuristics=0}. For the other parameters in Gurobi and CPLEX the defaults are used. We observe that the performance of the two commercial solvers considered is within a $95\%$ confidence interval of each other, with the TTO scaling matching that reported in Ref.~\onlinecite{Packebusch2016}.

We report complete results for the Memetic Tabu solver in Figure~\ref{fig:tabu}. The scaling is obtained by extrapolating the number of cost function evaluations made by the Memetic Tabu algorithm at different lengths. This quantity is fixed over repeated seeds for a given seed and length, unlike the execution time that may fluctuate depending on the runtime environment. The fluctuation in running time is much larger for Memetic Tabu as compared to branch-and-bound due to lower absolute value of the runtime ($<1$ sec for $N=40$)
The time to evaluate the cost function on a sequence of length $N$ scales only as $N^2$, which does not produce a consistent effect on runtime scaling at small lengths. The TTS scaling is therefore essentially the same as the scaling of the number of function evaluations,\cite{Bokovi2017} and we choose to report the latter, more stable quantity.
The seeds chosen for the runs are a contiguous block of 50 integers chosen from a random starting point. The Memetic Tabu solver is run in single-threaded mode for all our experiments, to avoid the overestimation of cost function evaluations arising from race conditions between exploration and termination checks.

\begin{figure}
    \centering
    \includegraphics[width=\textwidth]{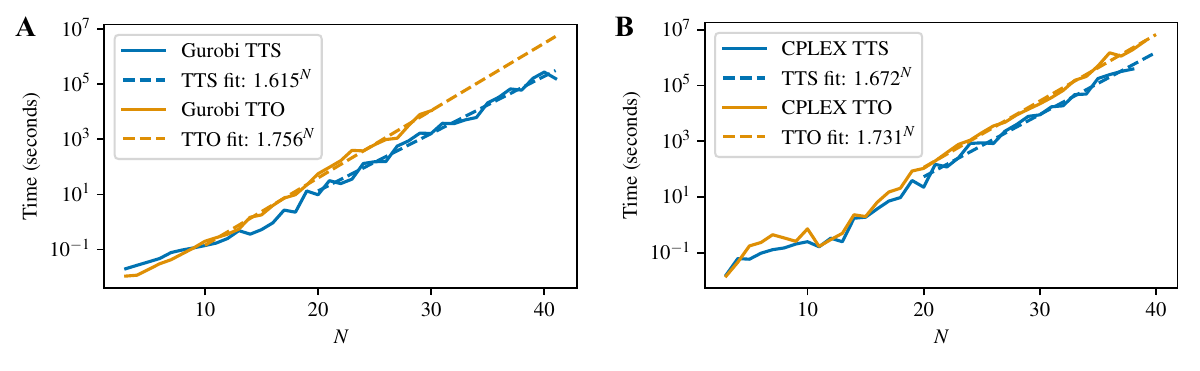}
    \caption{\textbf{Scaling of exact solvers} Time-to-solution (TTS) and time to obtain a certificate of optimality (time-to-optimality or TTO) of \textbf{A,} Gurobi and \textbf{B,} CPLEX. For Gurobi, the $95\%$ confidence interval (CI) for TTS is $(1.571,1.659)$ and for TTO is $(1.721,1.792)$. For CPLEX, the $95\%$ CI for TTS is $(1.609,1.737)$ and for TTO is $(1.693,1.770)$. %
    }
    \label{fig:CPLEX_and_Gurobi}
\end{figure}%

\begin{figure}
    \centering
    \includegraphics[width=0.48\textwidth]{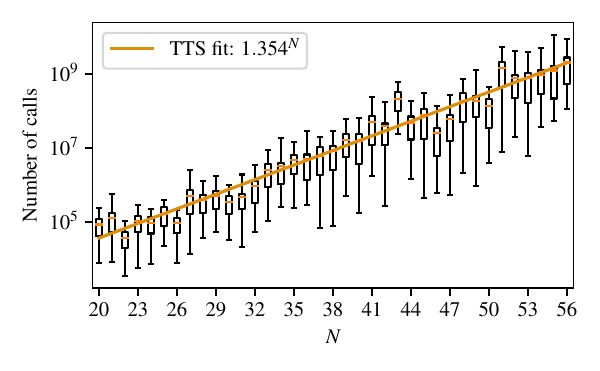}
    \caption{\textbf{Scaling of heuristic solver}Box plots showing the run time of Memetic Tabu. The scaling fit is reported with respect to the mean. The $95\%$ confidence interval (CI) is $(1.325,1.383)$. The whiskers are minimum and maximum, box is showing quartiles and the horizontal line in the box is the mean. 
    }
    \label{fig:tabu}
\end{figure}

\section{Experiments on trapped-ion systems}

\subsection{Experimental system}
The experiments in this work were performed on Quantinuum H1 and H2 platforms~\cite{Pino2021,Moses2023RaceTrack}.
The system design is based on the QCCD architecture with multiple separate gate zones.
Each gate zone is used to perform operations on an arbitrary pair of two qubits at a time,
suppressing crosstalk and maintaining high fidelity.  
The hyperfine approximate clock states of $^{171}$Yb$^{+}$ in the $^2 S_{1/2}$ state are used to encode qubit information. Namely, $\ket{0} \equiv \ket{F=0, m_f=0}$ and $\ket{1} \equiv \ket{F=1, m_f=0}$. After loading, the qubits can be prepared in $\ket{0}$ via optical pumping~\cite{Pino2021,Moses2023RaceTrack}.

The systems have all-to-all connectivity with two-qubit gates between pairs of qubits implemented by ion transport, which brings the pairs into the same gate zone. To implement two-qubit gates, a phase-sensitive M\o lmer-S\o rensen (MS) gate sandwiched between single-qubit wrapper pulses is used. It in turn gives the $\zgate\zgate$ gate $R_{\zgate\zgate}(\gamma)=\exp (-i\gamma \zgate\zgate/2)$, where the rotation angle can be precisely controlled by changing the parameters in the MS gate.\cite{Pino2021,Moses2023RaceTrack} Both the H1-1 and H2 systems used in this work have a typical average  two-qubit infidelity of $2 \times 10^{-3}$, with single qubit infidelity two orders of magnitude smaller. The qubit state can be read out via state-dependent resonance fluorescence measurement. Note that mid-circuit measurement and reset can be implemented while causing a small crosstalk error due to the stray light from the measurement and reset laser beams.

\subsection{Circuit compilation and optimization}
We now present the circuit compilation procedure for the experiments on the trapped-ion quantum processor. The H-series devices used in this work have at most five gate zones~\cite{Moses2023RaceTrack}, i.e. at most five two-qubit gates can be executed in parallel. This implies that optimizing the circuit for full parallelism may result in diminishing returns past five parallel gates. In any case, the highest error operation on the devices is the two-qubit gate, so the primary limiting factor for achieving high-fidelity results is the two-qubit gate count. %
Thus, we chose to first optimize the two-qubit gate count. In addition, the cost operator is the composition of diagonal gates, and thus we are free to apply them in any order. We optimize the order to maximize the number of gate cancellations. A similar approach has been used in Ref.~\onlinecite{2211.11287} for devices with nearest-neighbor connectivity.

We start by decomposing the four-body terms $R_{\zgate\zgate\zgate\zgate}(\gamma)$ into four $\cxgate$s and a single $R_{\zgate\zgate}(\gamma)$, where $R_{\zgate\zgate\zgate\zgate}(\gamma)$ and $R_{\zgate\zgate}(\gamma)$ denote evolution under $\zgate\zgate\zgate\zgate$ and $\zgate\zgate$ coupling with angle $\gamma$, respectively. 
Note that the $R_{\zgate\zgate}(\gamma) = e^{-i \frac{\gamma}{2} \zgate\zgate}$ is the native gate for the Quantinuum H-series trapped-ion processors. %

\begin{figure}[htb!]
    \includegraphics[width=0.5\linewidth]{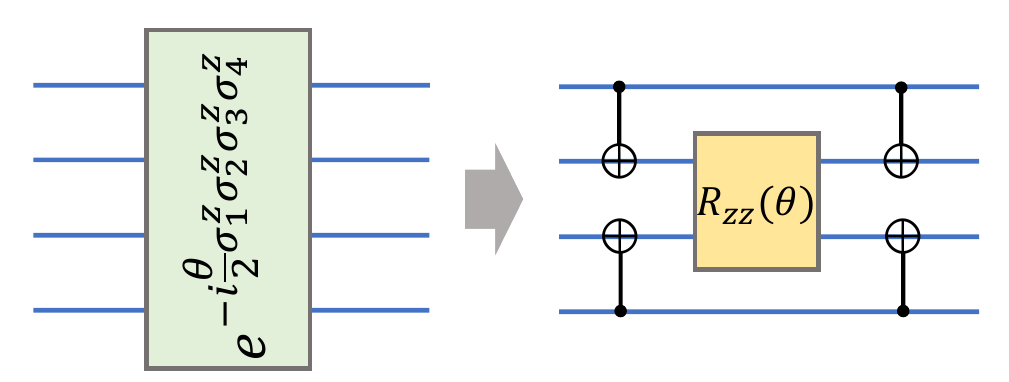}
    \caption{\textbf{Compilation of four-body terms} Decomposition of four-body interaction terms into a two-body $R_{\zgate\zgate}$  gate and four $\cxgate$'s. Figure reproduced from the main text.}
    \label{fig:rzzzz_decomp}
\end{figure}

\begin{figure}[htb!]
    \includegraphics[width=0.5\linewidth]{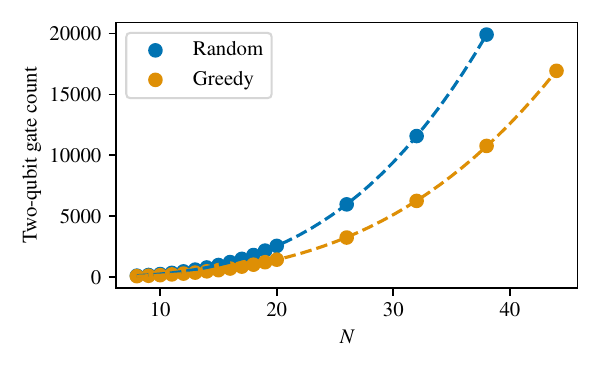}
    \caption{\textbf{Greedy optimization reduces the gate count} Comparison of two-qubit gate count (number of $R_{\zgate\zgate}(\gamma)$ + $R_{\zgate\zgate}(\pi/2)$) of QAOA circuit at $p=1$ with random term ordering and with the greedy optimization. The ``random'' line is the average over 20 random orderings of the two- and four-body terms. Both random and greedy are further optimized by tket~\cite{Sivarajah2020},
    and the resulting gate count is plotted. Cubic fit line added to guide the eye.}
    \label{fig:greedy_compare}
\end{figure}

The goal of the first step of the compilation procedure is to schedule the $R_{\zgate\zgate\zgate\zgate}(\gamma)$ gates corresponding to four-body terms in a way that greedly cancels as many $\cxgate$s as possible (see Figure \ref{fig:rzzzz_decomp} for the decomposition). The second step deals with the two-body terms and attempts to schedule each $R_{\zgate\zgate}(\gamma)$ gate near a four-body term where one of the $\cxgate$ acts on the same qubits as the two-body term. This is to leverage two-qubit gate resynthesis implemented in tket.\cite{Sivarajah2020}  These steps are described in detail in Algorithm~\ref{alg:cost_operator_opt}. Then, the resulting circuit is passed to tket circuit optimizer to transpile the circuit into the H-series native two-qubit gates: $R_{\zgate\zgate}(\gamma)$ and $R_{\zgate\zgate}(\pi/2)$. The preliminary step of greedly optimizing the layout of the interactions  reduced the two-qubit gate count by $1.7$ times on average compared to tket alone. The improvement in two-qubit gate count from the greedy $\cnotgate$ cancellation is shown in Fig.~\ref{fig:greedy_compare}.
Lastly, we compared a variety of circuit optimizers and found that  tket resulted in the largest gate-count reduction.

\begin{algorithm}[H]
\caption{Greedy cost-operator circuit optimization}
\label{alg:cost_operator_opt}
\begin{algorithmic}
\Require \\
\quad List  of four-tuples of  $(i, j, k, l)$, with $i < j < k < l$, where $(i, j)$ and $(k, l)$, respectively correspond to the qubits of top two $\cnotgate_{ij}$ and bottom two $\cnotgate_{lk}$ in decomposition of $R_{\zgate\zgate\zgate\zgate}$ presented in  Figure \ref{fig:rzzzz_decomp}. Note the indices corresponding to the control and target, respectively, are reversed for $(i, j)$ and $(k, l)$\\
\quad List of two-tuples $(i, j)$, with $i < j$, indicating which qubits to apply each $R_{\zgate\zgate}$ rotation to.
\Ensure Output a single list that contains all of the terms (both four- and two-body) in the order in which they should be applied, according to the greedily-optimized circuit.
\\
\State $\mathrm{circuit} \gets \text{empty list}$

\For{each collection of four body terms $(i, j, k, l)$ grouped by locality $d : = j -i~(= l -k~\text{for LABS})$}
\State current $\gets$ uniformly randomly sample (and remove) a term $(i, j, k, l)$ from the collection of terms of locality $d$.
\State add current to the circuit list
\State tops $\gets$ list initialized with tuple $(i, j)$
\State bottoms $\gets$ list initialized with tuple $(k, l)$
\While {there are still more terms in the collection}
\For {each term $(r, s ,t, v)$ in the  collection}
\If{$(r, s) \in$ tops or $(t, v) \in $ bottoms}
\State {Assign the term a score of $+1$}
\Else 
\State {Assign the term a score of $-1$.}
\EndIf
\If {$\exists\, m\:|\:(m, r) \in$ bottoms or $\exists\, a\:|\:(a, t) \in$ tops}  
\State Subtract $1$ from terms score. // This implies that inserting this term to the circuit would mean that there is some  $\cxgate_{mr}$ or $\cxgate_{ta}$ currently in the circuit that can never be cancelled.
\EndIf
\EndFor
\State current $\gets$ term $(a, b, c, d)$ in collection with highest score
\State add current to the circuit list
\State Add $(a, b)$ to tops (if not already in) and $(c, d)$ to bottoms (if not already in)
\EndWhile
\EndFor

\For{each two body term $(i, j)$}
\For{each four body  term $(r, s ,t ,v)$ in circuit}
\State {insert $(i, j)$ after $(r, s ,t ,v)$ in circuit if $(i,j) =(r,s)$ or $(i,j)=(t,v)$ and break inner loop}
\EndFor
\If{$(i, j)$ was not inserted into circuit}
\State {Add  $(i, j)$ to end of circuit}
\EndIf
\EndFor

\State output circuit
\end{algorithmic}
\end{algorithm}

We note that there exists an alternative proposal \cite{Sanders2020} for reducing the cost of implementing the LABS cost operator. In this approach, the phase operator is replaced by the evolution under the Hamiltonian corresponding to
    \begin{equation}
    \mathcal{E}_{\text{sidelobe}}(\boldsymbol{s}) = \sum_{k=1}^{N-1} \lvert \mathcal{A}_k(\boldsymbol{s})\rvert,
\end{equation}
where 
\begin{equation}
    \mathcal{A}_k(\boldsymbol{s}) = \sum_{i=1}^{N-k}s_is_{i+k}.
\end{equation}
This is in contrast to the approach mentioned in the main text:
\begin{equation}\label{eq:SI_labs_en}
    \mathcal{E}_{\text{sidelobe}}(\boldsymbol{s}) = \sum_{k=1}^{N-1} \mathcal{A}_k^2(\boldsymbol{s}).
\end{equation}
The Hamiltonian corresponding to the absolute value of the autocorrelations has the same ground state space and energy levels as the one in Equation \ref{eq:SI_labs_en}. While this reduces the asymptotic complexity for computing the energy to $\Theta(N^2)$, it requires quantum arithmetic, putting it beyond the capability of current hardware. Thus, we focus on optimizing the cost operator corresponding to Equation \ref{eq:SI_labs_en}. Further gate count reductions may be possible by fixing some of the variables and applying the techniques of Ref.~\onlinecite{Ayanzadeh2023}, though doing so is outside of the scope of this work.

\subsection{Summary of the error-detection scheme}

\begin{figure}
    \centering
    \includegraphics[width=0.6\textwidth]{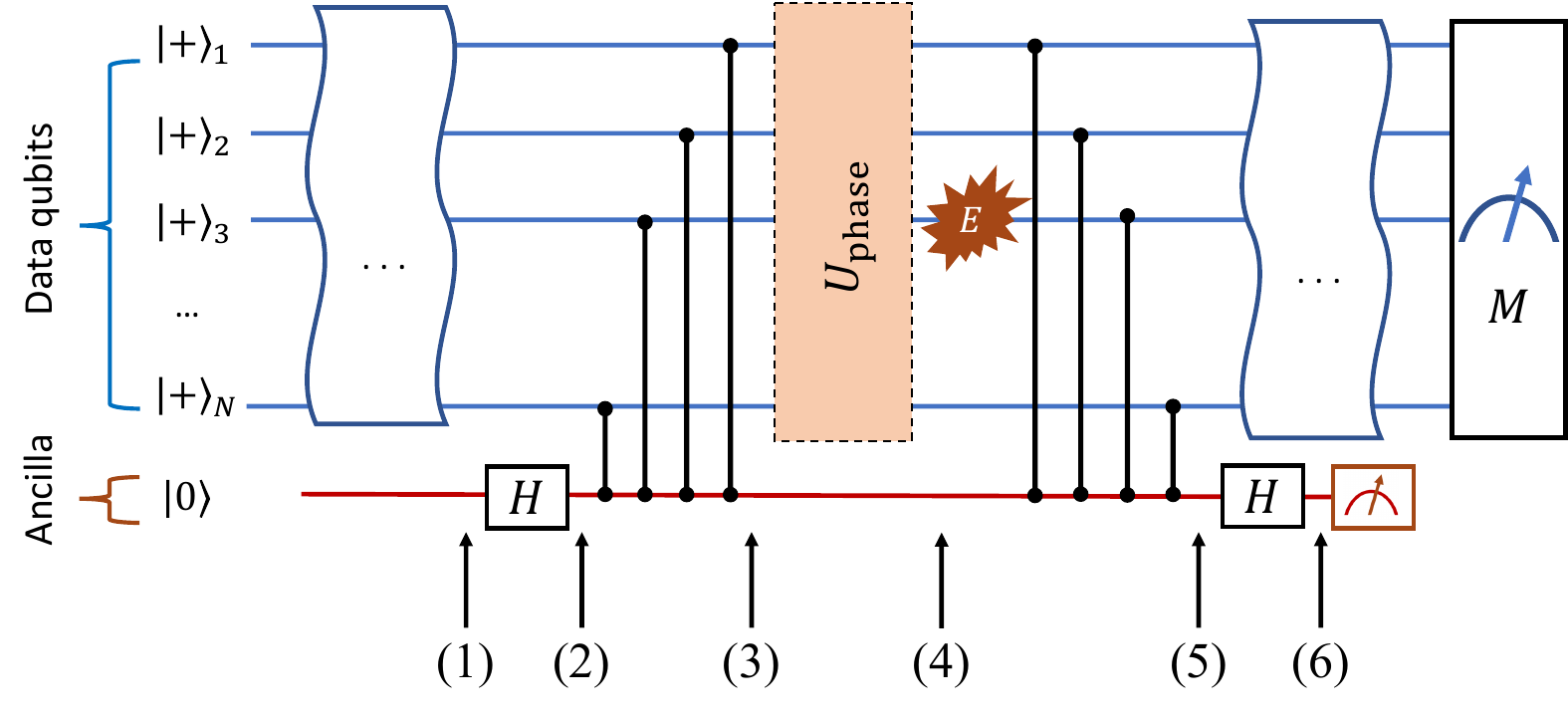}
    \caption{\textbf{Overview of one step of the error-detection scheme.} A part of the circuit $U_{\text{phase}}$ is ``sandwiched'' between two parity checks. Any error on data qubits that does not commute with the check is guaranteed to be detected.}
    \label{fig:one_parity_check}
\end{figure}

We now briefly summarize the error-detection scheme. The proof that our scheme is capable of detecting an arbitrary single-qubit error in the phase operator circuit under the assumption of noiselessly implemented parity checks is a special case of Ref.~\onlinecite[Theorem 1]{Gonzales2023}. We include a brief discussion of the scheme here for completeness and refer interested readers to Ref.~\onlinecite{Gonzales2023} for a detailed discussion. We first consider the case of when just one parity is checked (either $\xgate$ or $\zgate$), and then show how the analysis generalizes to the case when both parities are checked simultaneously. Figure~\ref{fig:one_parity_check} presents the overview of the circuit. Let the state of the data qubits before the check be $\rho_{\text{init}}$ and assume the ancilla is perfect and initialized to the pure state $\ket{0}$. The $\zgate$ parity check is given by the operator $C_{\zgate} = \frac{1}{2}\left(\idgate \otimes \ket{0}\bra{0} + \zgate^{\otimes N}\otimes \ket{1}\bra{1}\right)$.
Denote the part of the phase operator sandwiched between the checks as $U_{\text{phase}}$. Note that $U_{\text{phase}}$ can encompass the full phase operator or only a part of it. At (1), the state is $\rho_{(1)} = \rho_{\text{init}}\otimes \ket{0}\bra{0}$. At (2), the state is $\rho_{(2)} = \rho_{\text{init}}\otimes \frac{\ket{0}\bra{0}+\ket{1}\bra{1}}{2}$. After the first check is applied, at (3) the state becomes 
\begin{equation}
\rho_{(3)} = \frac{1}{2}\left(\rho_{\text{init}}\otimes \ket{0}\bra{0} + \zgate^{\otimes N}\rho_{\text{init}}\zgate^{\otimes N}\otimes\ket{1}\bra{1}\right).    
\end{equation}
If no error during the application of $U_{\text{phase}}$ occurs, then the state at point (4) is 
\begin{equation}
\rho_{(4)}^{\text{no error}} = \frac{1}{2}\left(U_{\text{phase}}\rho_{\text{init}}U_{\text{phase}}^{\dagger}\otimes \ket{0}\bra{0} + U_{\text{phase}}\zgate^{\otimes N}\rho_{\text{init}}\zgate^{\otimes N}U_{\text{phase}}^{\dagger}\otimes\ket{1}\bra{1}\right),    
\end{equation}
and 
\begin{align}
\rho_{(5)}^{\text{no error}} & = \frac{1}{2}\left(U_{\text{phase}}\rho_{\text{init}}U_{\text{phase}}^{\dagger}\otimes \ket{0}\bra{0} + \zgate^{\otimes N}U_{\text{phase}}\zgate^{\otimes N}\rho_{\text{init}}\zgate^{\otimes N}U_{\text{phase}}^{\dagger}\zgate^{\otimes N}\otimes\ket{1}\bra{1}\right) \\
 & = \frac{1}{2}\left(U_{\text{phase}}\rho_{\text{init}}U_{\text{phase}}^{\dagger}\otimes \ket{0}\bra{0} + U_{\text{phase}}\rho_{\text{init}}U_{\text{phase}}^{\dagger}\otimes\ket{1}\bra{1}\right) \\
 & = U_{\text{phase}}\rho_{\text{init}}U_{\text{phase}}^{\dagger}\otimes\frac{\ket{0}\bra{0}+\ket{1}\bra{1}}{2},
\end{align}
so the final state is
\begin{equation}
\rho_{(6)}^{\text{no error}}  = U_{\text{phase}}\rho_{\text{init}}U_{\text{phase}}^{\dagger}\otimes\ket{0}\bra{0}.
\end{equation}
Therefore if no error occurred, measuring the ancillary qubit will always give the measurement outcome $0$. 

If a single-qubit Pauli error $E$ occurred during the execution of $U_{\text{phase}}$, the state at (4) becomes
\begin{align}
\rho_{(4)} = \frac{1}{2}\left(EU_{\text{phase}}\rho_{\text{init}}U_{\text{phase}}^{\dagger}E^{\dagger}\otimes \ket{0}\bra{0} + EU_{\text{phase}}\zgate^{\otimes N}\rho_{\text{init}}\zgate^{\otimes N}U_{\text{phase}}^{\dagger}E^{\dagger}\otimes\ket{1}\bra{1}\right),
\end{align}
and 
\begin{align}
\rho_{(5)} = \frac{1}{2}\left(EU_{\text{phase}}\rho_{\text{init}}U_{\text{phase}}^{\dagger}E^{\dagger}\otimes \ket{0}\bra{0} + \zgate^{\otimes N}EU_{\text{phase}}\zgate^{\otimes N}\rho_{\text{init}}\zgate^{\otimes N}U_{\text{phase}}^{\dagger}E^{\dagger}\zgate^{\otimes N}\otimes\ket{1}\bra{1}\right).
\end{align}
If $E$ is a Pauli, it can either commute or anti-commute with the check $\zgate^{\otimes N}$. If it anti-commutes, then 
\begin{align}
\rho_{(6)} & = \frac{1}{2}\left(EU_{\text{phase}}\rho_{\text{init}}U_{\text{phase}}^{\dagger}E^{\dagger}\otimes \ket{0}\bra{0} - E\zgate^{\otimes N}U_{\text{phase}}\zgate^{\otimes N}\rho_{\text{init}}\zgate^{\otimes N}U_{\text{phase}}^{\dagger}\zgate^{\otimes N}E^{\dagger}\otimes\ket{1}\bra{1}\right) \\
& = EU_{\text{phase}}\rho_{\text{init}}U_{\text{phase}}^{\dagger}E^{\dagger}\otimes\frac{\ket{0}\bra{0}-\ket{1}\bra{1}}{2},
\end{align}
and the final state 
\begin{equation}
\rho_{(6)}  = U_{\text{phase}}\rho_{\text{init}}U_{\text{phase}}^{\dagger}\otimes\ket{1}\bra{1}.
\end{equation}
Then measuring the ancillary qubit will return 1, and the error will be detected. A Pauli error that commutes with the check, however, will go undetected. 

When both $\xgate^{\otimes N}$ and $\zgate^{\otimes N}$ parities are checked, no odd-weight Pauli commutes with both checks. Therefore, any odd-weight Pauli error will be detected by the implemented scheme. If the checks are noiseless, increasing the frequency of checks by reducing the size of the circuit $U_{\text{phase}}$ between the checks can only improve the final fidelity. However, in practice the checks are noisy, introducing a trade-off between the increase in the errors detected and the errors introduced by the checks themselves.

\begin{figure}[t]
\includegraphics[width=\linewidth]{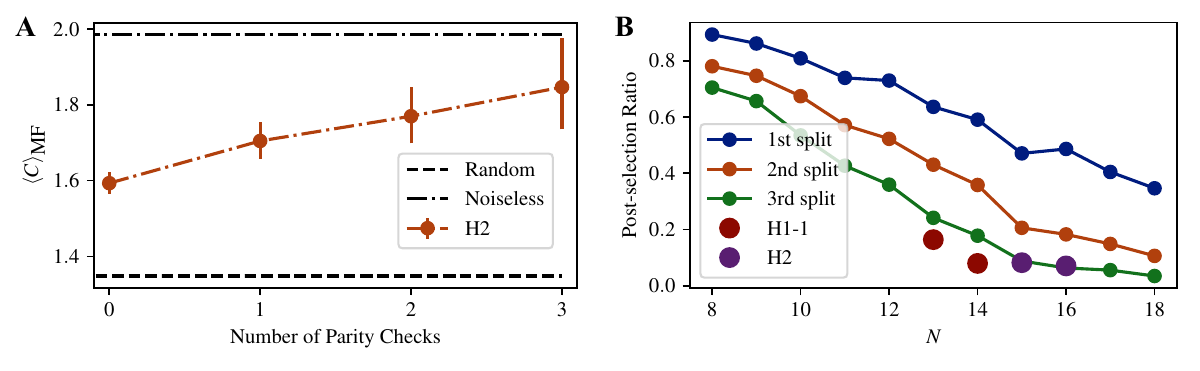}
    \caption{\textbf{Using more parity checks gives better performance} \textbf{A,} Expected merit factor as a function of the number of error checks used when post-selecting the data qubit results for the $N=15$ experiment on the H2 hardware system with 5000 repetitions. Here $m=3$ is the number of $\zgate$- and $\xgate$-parity checks.  The error bars become larger as we discard more shots due to more errors being detected with more parity checks. The first data point corresponds to the case when we keep all the results from measurement on the data qubits and do not do post-selection. \textbf{B,} Post-selection ratio when splitting the phase operator into three parts and performing parity syndrome measurement at the end of each part. The curves show the simulation results using realistic parameters of the Quantinuum H2 trapped-ion device. The measured results on both H1 and H2 hardware with three checks are in good agreement with simulations.}
    \label{fig:MF_post_selection_ratio}
\end{figure}

\subsection{Performance of the error-detection scheme}

In this section, we provide details on the implementation of the proposed error-detection scheme. For a circuit with $m$ checks, we separate the phase operator into $m$ parts, where each part has approximately the same number of two-qubit gates.
We observe that increasing the frequency of the parity checks up to $m=3$ improves the QAOA performance. Fig.~\ref{fig:MF_post_selection_ratio}A shows improvements in expected merit factor after post-selection. More measurements increases the ratio of samples with detected errors, increasing the overhead of the error-detection scheme in terms of the number of repetitions. Fig.~\ref{fig:MF_post_selection_ratio}B shows how this overhead increases with $N$ by plotting the decay of the ratio of samples with no error detected to all samples (``post-selection ratio''). This ratio drops to below $10\%$ at $N\geq 15$ and $m=3$. To trade off the number of repetitions required against the fidelity of the final result, we set $m=3$ in our experiments.
We further note that in our experiments, the post-selection typically keeps the bitstring with the highest merit factor sampled from experiments, as shown in Fig.~\ref{fig:best_sampled_MF}. Note that in a practical optimization setting, the best of all bitstrings corresponding to valid solutions would be chosen as the output.

\begin{figure}
\includegraphics[width=0.5\linewidth]{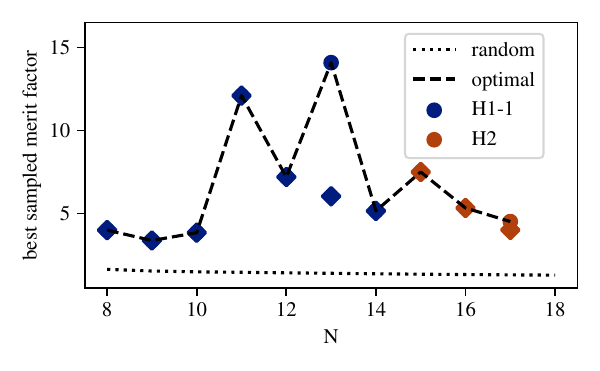}
    \caption{\textbf{Post-selection keeps the best solution} Experimentally sampled bitstring that has the highest merit factor for different $N$. The number of shots taken is 2000 for $N=8,9$ and 5000 for other $N$. Circles (diamonds) are the data without (with) post-selection. They have exact same values except for $N=13$ and $N=17$. The best sampled bitstrings before post-selection have the same merit factor as true optimal bitstrings for all instances studied here.}
    \label{fig:best_sampled_MF}
\end{figure}

An important benefit of our error-detection scheme is reduced time to a high-quality sample, i.e. a sample with no errors detected. The time improvement comes from the ability to stop the execution when a mid-circuit measurement detects an error. This capability is particularly relevant to trapped-ion systems with relatively low clock speeds and very long coherence times enabling such classical feedback. Although available hardware supports this feature, we do not implement the early stopping in our hardware experiments. The time savings provided below are estimates.

We denote the probability that no detectable error occurs during part $i$ as $p_i$. For the case without any mid-circuit syndrome measurement, the average time needed to reach a measurement result with a high merit factor, i.e., no parity error detected for all the check measurements, is given by
\begin{equation}
    \bar{t}_{1} = t_0/(\prod_i^m p_i),
\end{equation}
where $t_0$ is the total circuit time. With mid-circuit check and feed-forward discard of the remaining circuit conditioned on the check result, the average time to get a bitstring result for which the merit factor has a high value reads as 
\begin{equation}
    \bar{t}_2 = t_0/(\prod_i^m p_i) \times (\sum_i^{m-1} (\prod_{j=0}^{i-1} p_{j}(1-p_i) \frac{i}{m}) + \prod_k^{m-1} p_k),
\end{equation}
with $p_0=1$. Here we neglect the gate time between data qubits and ancillary qubits. The comparison between the two is shown in Fig.~\ref{fig:AvgTime2goodShot}, indicating that our error check method would reduce the time to get a bitstring with a high merit factor.

\begin{figure}
    \includegraphics[width=0.5\linewidth]{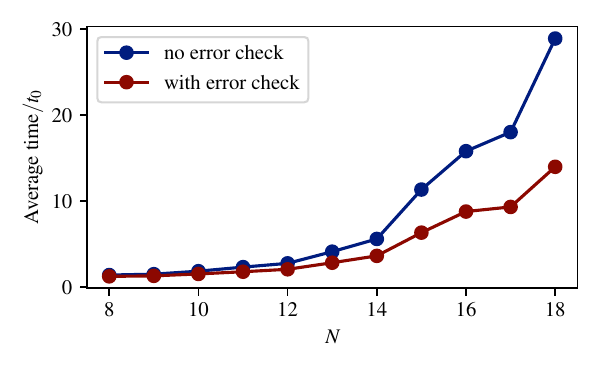}
    \caption{\textbf{Post-selection reduces time to obtain a good bitstring} Simulation of average time (normalized by $t_0$) to a bitstring without detectable parity check errors. The number of splits $m$ is set to $3$. The simulations are performed using realistic parameters of the Quantinuum H2 trapped-ion device. }
    \label{fig:AvgTime2goodShot}
\end{figure}

\end{document}